\newif\ifdraft
\definecolor{darkgreen}{rgb}{0,0.5,0}
\definecolor{darkblue}{rgb}{0.1,0.1,0.6}
\Crefname{algorithm}{Alg.}{Algs.}
\newtheorem{theorem}{Theorem}[section]
\newtheorem{definition}[theorem]{Definition}
\newtheorem{lemma}[theorem]{Lemma}
\newtheorem{claim}{Claim}
\newtheorem*{claim*}{Claim}
\newtheorem{remark}[theorem]{Remark}
\newtheorem{observation}[theorem]{Observation}
\newcommand{\LOCAL}{$\mathsf{LOCAL}$\xspace}
\newcommand{\CONGEST}{$\mathsf{CONGEST}$\xspace}
\newcommand{\eps}{\varepsilon}
\newcommand{\logstar}{\ensuremath{\log^*}}
\newcommand{\RCT}{\textsc{RandomColorTrial}}
\newcommand{\rct}{\RCT}
\newcommand{\smallComp}{\ensuremath{L}}
\newcommand{\trycolor}{\textsc{TryColor}}
\newcommand{\synchronizedcolortrial}{\textsc{SynchronizedColorTrial}}
\newcommand{\colordensenodes}{\textsc{ColorDenseNodes}}
\newcommand{\colorsmalldegreenodes}{\textsc{ColorSmallDegreeNodes}}
\newcommand{\computeacd}{\textsc{ComputeACD}}
\newcommand{\computecliqueoverlay}{\textsc{ComputeCliqueOverlay}}
\newcommand{\colorsparsenodes}{\textsc{ColorSparseNodes}}
\newcommand{\slackgeneration}{\textsc{SlackGeneration}}
\DeclareMathOperator{\poly}{poly}
\DeclareMathOperator{\polylog}{poly\log}
\DeclareMathOperator{\polyloglog}{poly\log\log}
\newcommand{\E}{\mathbb{E}}
\newcommand{\myemail}[1]{\,$\cdot$\, {\small #1}}
\newcommand{\myaff}[1]{\,$\cdot$\, {\small #1}\par\smallskip}
\newenvironment{mycover}
{\list{}{\listparindent 0pt
        \itemindent    \listparindent
        \leftmargin    0.8cm
        \rightmargin   0.8cm
        \parsep        0pt}%
    \raggedright
    \item\relax}
{\endlist}
\begin{document}

\thispagestyle{empty}

\begin{mycover}
{\huge\bfseries\boldmath Efficient Randomized Distributed Coloring in CONGEST \par}
\bigskip
\bigskip
\bigskip

\textbf{Magn\'us M. Halld\'orsson}
\myemail{mmh@ru.is}
\myaff{Reykjavik University, Iceland}

\textbf{Fabian Kuhn}
\myemail{kuhn@cs.uni-freiburg.de}
\myaff{University of Freiburg, Germany}

\textbf{Yannic Maus}
\myemail{yannic.maus@cs.technion.ac.il}
\myaff{Technion, Israel}

\textbf{Tigran Tonoyan}
\myemail{ttonoyan@gmail.com}
\myaff{Technion, Israel}

\end{mycover}

\vspace*{1cm}

\begin{abstract}
  Distributed vertex coloring is one of the classic problems and
  probably also the most widely studied problems in the area of distributed
  graph algorithms. We present a new randomized distributed vertex
  coloring algorithm for the standard \CONGEST model, where the
  network is modeled as an $n$-node graph $G$, and where the nodes of $G$
  operate in synchronous communication rounds in which they can
  exchange $O(\log n)$-bit messages over all the edges of $G$. For
  graphs with maximum degree $\Delta$, we show that the
  $(\Delta+1)$-list coloring problem (and therefore also the standard
  $(\Delta+1)$-coloring problem) can be solved in
  $O(\log^5\log n)$ rounds. Previously such a result was only
  known for the significantly more powerful \LOCAL model, where in
  each round, neighboring nodes can exchange messages of arbitrary
  size. The best previous $(\Delta+1)$-coloring algorithm in the
  \CONGEST model had a running time of $O(\log\Delta + \log^6\log n)$
  rounds. As a function of $n$ alone, the best previous algorithm
  therefore had a round complexity of $O(\log n)$, which is a bound
  that can also be achieved by a na\"{i}ve folklore algorithm. For
  large maximum degree $\Delta$, our algorithm hence is an
  exponential improvement over the previous state of the art.
\end{abstract}

\clearpage
\thispagestyle{empty}

\tableofcontents
\clearpage

\setcounter{page}{1}

\section{Introduction}
In the distributed vertex coloring problem, we are given an $n$-node network graph $G=(V,E)$, and the goal is to properly color the nodes of $G$ by a distributed algorithm: The nodes of $G$ are autonomous agents that interact by exchanging messages with their neighbors in synchronous communication rounds. At the end, every node needs to output its color in the computed vertex coloring. The standard version of the problem asks for a coloring with $\Delta+1$ colors, where $\Delta$ is the largest degree of $G$, such that the objective is to match what can be achieved by a simple sequential greedy algorithm. Distributed coloring has been intensively studied for over 30 years. The problem has been used as a prototypical example to study distributed symmetry breaking in graphs, and it certainly is at the very core of the general area of distributed graph algorithms, e.g., \cite{barenboimelkin_book}.

\vspace*{-3mm}

\paragraph{Distributed Coloring, State of the Art.}
The first paper to explicitly study the distributed coloring problem was a seminal paper by Linial~\cite{linial87}, which effectively also started the area of distributed graph algorithms. Already then, it was known that by using simple randomized algorithms for the parallel setting~\cite{alon86,luby86}, \emph{with randomization}, the distributed $(\Delta+1)$-coloring problem can be solved in only $O(\log n)$ communication rounds. In fact, even one of the simplest conceivable randomized distributed coloring algorithms solves the problem in $O(\log n)$ rounds~\cite{johansson99}. The algorithm always maintains a partial proper coloring and operates in $O(\log n)$ synchronous phases. In each phase, each uncolored node chooses a uniformly random color among the colors not already used by some neighbor. A simple analysis  shows that each uncolored node can keep its random color with a constant probability, which leads to the $O(\log n)$ runtime bound. This most basic random coloring step will also play an important role in our paper and we will therefore refer to it as {\RCT} in the following. 

Classically, distributed coloring was studied in a variant of the message passing model known as the \LOCAL model, where in each round, nodes are allowed to exchange messages of arbitrary size. Over the years, the main challenges have been to understand the deterministic complexity of the $(\Delta+1)$-coloring problem (e.g., \cite{barenboimelkin_book,barenboimE10,BarenboimEK14,Barenboim16,FHK,BEG18,Kuhn20}) and to understand to what extent $o(\log n)$-time randomized distributed coloring algorithms exist. In fact, these two questions are actually closely related~\cite{chang16exponential}. In a recent breakthrough, Rozho\v{n} and Ghaffari~\cite{RG19} showed that $(\Delta+1)$-coloring (and many other important problems~\cite{SLOCAL17,FOCS18-derand}) can deterministically be solved in $\polylog(n)$ time. Combined with the astonishing recent progress on randomized 
algorithms~\cite{BEPSv3,EPS15,HSS18,CLP20}, this in particular gives randomized $\polyloglog(n)$-time algorithms,
with the best complexity known being $O(\log^5\log n)$~\cite{GGR20}.

With the complexity of distributed coloring in the powerful \LOCAL model being quite well understood, it might now become within reach to also understand the complexity in the much more realistic \CONGEST model, where in each round, every node is only allowed to exchange $O(\log n)$ bits with each of its neighbors.
Many early distributed coloring algorithms work directly in the more restricted \CONGEST model,
but the recent highly efficient randomized algorithms of \cite{EPS15,HSS18,CLP20} unfortunately make quite heavy use of the power of the \LOCAL model. It seems unclear whether and to what extent their ideas can be applied in the \CONGEST model. The best randomized $(\Delta+1)$-coloring algorithm known in the \CONGEST model has a round complexity of $O(\log\Delta + \log^6\log n)$~\cite{BEPSv3,Ghaffari2019,GGR20}. Note that as a function of the number $n$ of nodes  alone, this algorithm still has a running time of $O(\log n)$, which is no faster than the simple 30 years old methods.
Given all the recent progress on distributed coloring, arguably one of the most important open questions regarding this classic distributed problem is the following.

\smallskip
\begin{center}
  \begin{minipage}{0.9\textwidth}
    \it Is there a randomized algorithm in the \CONGEST model that solves the $(\Delta+1)$-vertex coloring problem in time $o(\log n)$ or even in time $\polyloglog(n)$?
  \end{minipage}
\end{center}

\paragraph{Our Main Contribution.} We answer this question in the affirmative and  give a randomized  $(\Delta+1)$-coloring algorithm in the \CONGEST model, which is \textbf{as fast as} the best known algorithm for the problem in the \LOCAL model. As our main result, we prove the following theorem.

\begin{theorem}[simplified]\label{thm:mainResult}
There is a randomized distributed algorithm, in the \CONGEST model, that solves any given instance of the $(\Delta+1)$-list coloring problem in any $n$-node graph with maximum degree $\Delta$ in \textbf{$O(\log^5\log n)$} rounds, with high probability. 
\end{theorem}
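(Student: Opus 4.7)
The plan is to follow the high-level \LOCAL framework developed in \cite{HSS18,CLP20,GGR20}, but to carefully re-implement each of its ingredients under the $O(\log n)$-bit bandwidth restriction. I would proceed in four conceptual phases: (i) a \slackgeneration phase that runs a constant number of \RCT rounds to generate color slack for locally sparse neighborhoods; (ii) a decomposition phase, \computeacd, that partitions the graph into an almost-clique decomposition together with a residual set of sparse nodes; (iii) a coloring phase that invokes \colorsparsenodes and \colordensenodes on the two parts according to their structure; and (iv) a post-shattering phase in which the remaining uncolored graph consists of components of polylogarithmic size and can be handled by a deterministic \CONGEST list-coloring subroutine derived from the network decomposition of \cite{RG19}.

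For \colorsparsenodes, each sparse node has enough permanent slack from phase (i) that a few iterations of \RCT color it with constant probability in every round; a standard shattering argument then reduces the surviving sparse subproblem to components of size $\polylog n$, which are passed on to phase (iv). The interesting \CONGEST twist is already in the decomposition phase: \computeacd must detect dense clusters and estimate sparsity without a node learning its full $2$-neighborhood. I would therefore design primitives that estimate pairwise neighborhood intersections via sampling and small $O(\log n)$-bit sketches exchanged along edges, rather than the explicit neighborhood transmissions that the \LOCAL algorithms rely on.

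For the dense part, \colordensenodes would coordinate color attempts inside each almost-clique by means of a \computecliqueoverlay routine that builds a bounded-diameter communication backbone on the almost-clique, and then runs \synchronizedcolortrial along this backbone. The idea is that nodes of a common almost-clique can agree on a permutation of their candidate colors pipelined in $O(\log n)$-bit chunks, so that in each synchronized round a large fraction of the clique becomes colored and the remaining uncolored graph inside each almost-clique shrinks geometrically. With careful parameter balancing, each of the four phases runs in $O(\log^{c}\log n)$ rounds for a small constant $c$, and summing these round complexities yields the claimed $O(\log^{5}\log n)$ bound with high probability.

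The main obstacle I expect is precisely \synchronizedcolortrial inside an almost-clique: every node must choose a color that is both in its own list and unlikely to collide with a clique-mate's choice, yet no node can see the full palette or full list intersection of the clique under \CONGEST bandwidth. Overcoming this requires simulating a near-uniform random permutation of tentative colors over the clique overlay, while tolerating a small fraction of ``outlier'' nodes whose lists or external degrees deviate from the clique average, and handling arbitrary input lists rather than the symmetric palette $\{1,\dots,\Delta+1\}$. Ensuring that this subroutine terminates in $\polyloglog n$ rounds, and that it interfaces correctly with the sparse-side slack and with the deterministic post-shattering step, is where I expect the bulk of the technical work to lie.
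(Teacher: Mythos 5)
Your high-level architecture matches the paper's: an $O(1)$-round sketch-based {\computeacd}, {\slackgeneration}, separate treatment of sparse and dense nodes with a clique overlay and a leader-coordinated {\synchronizedcolortrial}, and a network-decomposition-based finish on the shattered low-degree remainder. But there is a genuine gap at the quantitative heart of both coloring phases. For the sparse nodes you propose ``constant success probability per round plus standard shattering''; that is exactly the classical argument of \cite{BEPSv3}, and it needs $\Theta(\log\Delta)$ rounds before the uncolored components shrink to $\polylog n$ size, which is $\Theta(\log n)$ for large $\Delta$ --- the very barrier the theorem is supposed to break. The paper's engine is \Cref{lem:basiconeshot}: once {\slackgeneration} gives every sparse node slack $s=\Omega(\Delta)$ (\Cref{lem:sparseGetsSlack}), the failure probability of plain {\rct} in iteration $i$ is about $D_i/s$ where $D_i$ is the current uncolored degree, so $D_{i+1}\lesssim D_i^2/s$ and the ratio $D_i/s$ squares each iteration; only $O(\log\log\Delta)$ iterations drive the degree down to $O(\log n)$, after which {\colorsmalldegreenodes} takes over. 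Without this doubly-exponential decay (or the \LOCAL-only multi-trial substitute of \cite{SW10}, which you correctly note is unavailable in \CONGEST), your sparse phase does not run in $\polyloglog n$ rounds. The same issue recurs in your dense phase, where you claim the uncolored part of each almost-clique ``shrinks geometrically'': geometric shrinking again costs $\Theta(\log\Delta)$ rounds.

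For the dense part you correctly name the bandwidth obstacle --- the leader cannot learn $\Theta(\Delta^2)$ palette entries --- but ``pipelining the permutation in $O(\log n)$-bit chunks'' does not resolve it: pipelining $\Theta(\Delta)$ colors per node through any backbone still costs $\Omega(\Delta/\log n)$ rounds. The paper's fix is structural. Each almost-clique is split into layers $R_0^C,\dots,R_t^C$ of rapidly decreasing sizes, so that while layer $i$ is being colored the still-uncolored later layers guarantee every node a palette surplus of $\Omega(\Lambda_{i+1})$; this lets each node send only a \emph{random sub-palette} of size $O(\max(1,|R_i^C|/\Lambda_{i+1})\log n)$ to the leader (a total of $O(\Delta)$ messages per node, routable in $O(1)$ rounds via the overlay and Lenzen's scheme), while the leader's greedy candidate assignment still succeeds w.h.p. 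You would need the layering, the palette sparsification, and the slack-proportional-to-external-degree bound of \Cref{lem:sparseGetsSlack} to make {\synchronizedcolortrial} both implementable and analyzable via \Cref{lem:basiconeshot}. Finally, your post-shattering step also needs a colorspace reduction to $O(\log\log n)$-bit colors before cluster coloring: without it, both the deterministic cluster-coloring routines (whose runtime is logarithmic in the colorspace $\poly(n)$) and the parallel simulation of many {\rct} instances in single messages fail to stay within $\polyloglog n$ rounds.
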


Note that our algorithm even works for the more general $(\Delta+1)$-list coloring problem, where every node initially is given an arbitrary list of $\Delta+1$ colors, and the objective is to find a proper vertex coloring such that each node is colored with one of the colors from its list.

Our algorithm follows the paradigm of breaking the graph into sparse and dense parts and processing them separately, which has been the only successful approach for sublogarithmic complexity in the \LOCAL model~\cite{HSS18,CLP20}. By working in the much more restricted \CONGEST model, however, we are forced to develop  general techniques  based on more basic principles.
We show that, under some conditions, the progress guarantee of {\rct} is exponentially better than suggested by its basic analysis. Our analysis extends to a general class of random coloring algorithms 
akin to {\rct}. For coloring dense parts, however, this has to be combined with additional techniques to deal with the major challenge of congestion.

In the following, we first give a high level overview over what is known about randomized coloring algorithms in the \LOCAL model, then briefly discuss the state of the art in the \CONGEST model. 
In \Cref{sec:technicaloverview}, we  overview  existing techniques that are relevant to our algorithm, explain in more detail why it is challenging to use existing ideas in the \CONGEST model, discuss how we overcome the major challenges, and summarize the algorithm and the technical ideas of the paper.
The core technical part of the paper starts with and is outlined in \Cref{sec:main}.

\paragraph{History of Randomized Coloring in the \LOCAL Model.}
The first improvement over the simple $O(\log n)$-time algorithms of \cite{alon86,luby86,johansson99} appeared in \cite{SW10}, where the authors show that by trying several colors in parallel, the $(\Delta+1)$-coloring problem can be solved in $O(\log\Delta +\sqrt{\log n})$ rounds. A similar result was previously proven for coloring with $O(\Delta)$ colors in \cite{KSOS06}.
Subsequently, the \emph{graph shattering technique}, first developed for constructive Lov\'{a}sz Local Lemma algorithms~\cite{beck1991algorithmic}, was introduced by \cite{BEPSv3} to the area of distributed graph algorithms.
Since each node is colored with constant probability in each iteration of {\RCT}, 
$O(\log \Delta)$ iterations suffice to make the probability of a given node remaining uncolored polynomially small in $\Delta$.
This ensures that afterwards, all remaining connected components of uncolored nodes are of $\polylog n$ size, as shown by \cite{BEPSv3}, and they are then typically colored by a deterministic algorithm. 
The deterministic complexity of coloring $N$-node graphs in the \LOCAL model was at the time $2^{O(\sqrt{\log N})}$~\cite{panconesi1992improved}, but has recently been improved to $O(\log^5 N)$~\cite{RG19,GGR20}.
As a result, the time complexity of \cite{BEPSv3} improved from $O(\log\Delta)+2^{O(\sqrt{\log\log n})}$ to $O(\log\Delta+\log^5 \log n)$. We remark that it  was shown in \cite{chang16exponential} that for distributed coloring and related problems, the randomized complexity on graphs of size $n$ is lower bounded by the deterministic complexity on graphs of size $\sqrt{\log n}$. Hence, in some sense, the graph shattering technique 
is necessary.

All further improvements on randomized distributed coloring concentrated on the ``preshattering'' part, i.e., on coloring each node with probability $1-1/\poly(\Delta)$ so that the uncolored nodes form components of size $\polylog n$. An important step towards a sublogarithmic preshattering phase was done by Elkin, Pettie, and Su~\cite{EPS15}, coloring graphs satisfying
a specific local sparsity property. Following this, Harris, Schneider, and Su~\cite{HSS18} achieved preshattering in time $O(\sqrt{\log\Delta}+\log\log n)$, resulting in a $(\Delta+1)$-coloring algorithm with time complexity $O(\sqrt{\log n})$ (in terms of $n$ alone). The algorithm is based on a decomposition of the graph into locally sparse nodes to which the algorithm of \cite{EPS15} can be applied and into dense components that have a constant diameter such that computations within these components can be carried out in a centralized brute-force manner in the \LOCAL model.
Finally, Chang, Li, and Pettie~\cite{CLP20} gave a hierarchical version of the algorithm of \cite{HSS18}, bringing the preshattering complexity all the way down to $O(\log^* n)$. This leads to the current best randomized $(\Delta+1)$-coloring algorithm known with time complexity $O(\log^5\log n)$.

\paragraph{State of the Art in the \CONGEST Model.} While the simple \RCT{} algorithm and thus the $O(\log\Delta)$ preshattering phase of \cite{BEPSv3} clearly also work in the \CONGEST model, the remaining discussed randomized coloring algorithms all use the additional power of the \LOCAL model to various extents, and if one aims to achieve similar results for the \CONGEST model, one faces a number of challenges. 
The fastest known deterministic algorithm to apply on  $\polylog(n)$-size components is based on  decomposing the graph into clusters of small diameter (as defined in \cite{awerbuch89}), where in the \LOCAL model, each cluster can be colored
in a brute-force way, by collecting its topology at a single node. Luckily, this issue has already been solved: The fastest known network decomposition algorithms of \cite{RG19,GGR20} can directly be applied in the \CONGEST model, and by using techniques developed in \cite{censor2017derandomizing,Ghaffari2019,BKM19,HKMN20}, one can efficiently solve the coloring problem in each cluster in the \CONGEST model. This implies that the $(\Delta+1)$-coloring problem can be solved in $O(\log\Delta)+\polyloglog(n)$ rounds. In order to obtain a sublogarithmic-time $(\Delta+1)$-coloring algorithm in the \CONGEST model, the challenge therefore is to develop an efficient preshattering algorithm. In \Cref{sec:technicaloverview}, we discuss how the preshattering techniques used by \cite{HSS18,CLP20} exploit the power of the \LOCAL model, we describe the challenges that arise for the \CONGEST model and also how we tackle those challenges.

\section{Technical Overview }
\label{sec:technicaloverview}

Although extremely simple, {\RCT} lies at the heart of most known efficient randomized coloring algorithms.
Each uncolored node $v$ picks a uniformly random color $c$ from its current \emph{palette} -- colors from its list $\Psi(v)$ that are not already used by its neighbors -- and then executes {\trycolor}($v,c$) (\Cref{alg:trycolor}). 
Barenboim, Elkin, Pettie and Su \cite{BEPSv3} show that w.h.p., the \emph{uncolored degree} (i.e., the degree of the subgraph induced by uncolored nodes) of every vertex of degree $\Omega(\log n)$ goes down by a \textit{constant factor}
in each iteration of {\RCT}, and within $O(\log \Delta)$ steps, each node has degree $O(\log n)$. 
\begin{algorithm}[H]\caption{{\trycolor} (vertex $v$, color $c_v$)}\label{alg:trycolor}
\begin{algorithmic}[1]
\STATE Send $c_v$ to $N(v)$, receive the set $T=\{c_u : u\in N(v)\}$.
\STATE{\textbf{if}} $c_v\notin T$ \textbf{then} permanently color $v$ with $c_v$.
\STATE Send/receive permanent colors, and remove the received ones from $\Psi(v)$.
\end{algorithmic}
\end{algorithm}

Our main technical contribution is a \emph{much faster} algorithm that partially colors the input graph so that the uncolored degree reduces to $\poly\log n$. Once the degree is reduced, one can rely on efficient $\poly\log\log n$-round algorithms: such algorithms are well known in the \LOCAL model \cite{BEPSv3,RG19,GGR20}, and we discuss their \CONGEST counterparts at the end of this section. 
Our degree reduction follows the by-now standard approach  of partitioning the graph into \emph{sparse} and \emph{dense} parts to be colored separately, with most of the action in the dense part. To our knowledge, this approach  is the only one known to have led to sublogarithmic algorithms, even in \LOCAL~\cite{HSS18,CLP20}. (It has also been useful in other models of computing, e.g.,  sublinear algorithms and Massively Parallel Computing~\cite{ParterSu,CFGUZ19,ACK19}.)

  Algorithms in the \LOCAL model profit from the fact that nodes can try many colors simultaneously, or can simulate any centralized algorithm in a small diameter dense part by collecting all the input to one node. These trivialities in \LOCAL become major challenges in \CONGEST. In the remainder of this section, we  describe how we partition the graph and process sparse and dense parts in \CONGEST.

\paragraph{Almost-Clique Decompositions.} 
Inspired by Reed~\cite{Reed98}, Harris, Schneider, and Su \cite{HSS18} introduced an important structure, the \emph{almost-clique decomposition} (ACD). In the variation we use, it is a partition of $V$ into $V_{sparse}, C_1, C_2, \ldots, C_k$, where $V_{sparse}$ contains sparse nodes, while each $C_i$ is an \emph{almost-clique}
with the property that each node in $C_i$ is adjacent to at least $(1-\epsilon)\Delta$ other nodes in $C_i$, $C_i$ contains at most $(1+\epsilon)\Delta$ nodes, and has diameter at most 2
for a given parameter $\epsilon<1$.
The ACD has been used widely in coloring algorithms in different models \cite{ParterSu,CFGUZ19,ACK19,HKMN20,AlonAssadi}.
An ACD can be computed in $O(1)$ rounds in the \LOCAL model rather easily, since, roughly speaking, pairs of nodes can deduce from their common neighborhood size whether they belong to the same almost-clique or not.  
Assadi, Chen and Khanna \cite{ACK19} gave a method that was used in \cite{HKMN20} to  compute an ACD in $O(\log n)$ rounds in \CONGEST. Their argument is based on random sampling and requires $\Omega(\log n)$ time. 

We overcome this \textbf{first  obstacle} towards a $o(\log n)$ \CONGEST algorithm by showing that such a decomposition can be computed in $O(1)$ rounds via a bootstrapping-style procedure (see \Cref{ssec:acdComputation}). 
 We begin by sampling a subset $S$ of vertices, where each vertex is independently sampled w.p.\ $1/\sqrt{\Delta}$. The idea is to use the whole graph to relay information between the $S$-nodes, in order to efficiently find out which pairs of $S$-nodes belong to the same almost-clique (to be constructed).   To this end, each node $v\in V$ chooses a random $S$-neighbor (ID) and broadcasts it to its neighbors. A key observation is that if two $S$-nodes  share many neighbors,  they will likely receive each other's IDs many times, which allows them to detect  similarity of their neighborhoods. The rest of the nodes join the structure created by the $S$-nodes to form the partition.

\smallskip

After computing the ACD, our algorithm can then color the sparse nodes in $V_{sparse}$ and then the dense nodes in $V_{dense}=C_1\cup\dots\cup C_k$ separately.

\paragraph{Coloring Sparse Nodes and Slack Generation.} 
Schneider and Wattenhofer \cite{SW10} showed that if there are enough colors in the palettes, the coloring can be extremely quick: e.g., $O(\Delta + \log^{1.1} n)$-coloring in $O(\log^* \Delta)$ rounds of \LOCAL. In this case, the nodes have plenty of \emph{slack}: the difference $S$ between their degree and their palette size is then $\Omega(\Delta)$.
Slack is a property that never decreases (but can increase, which only makes the problem easier). 
Suppose we have slack $S \ge \delta \Delta$, while degrees are clearly bounded by $D \le \Delta$.
After $O(\log (1/\delta))$ iterations of {\RCT}, the nodes have degree at most $D \le S/2$.
From then on, each node can try $D/(2S)$ random colors in parallel, which has a failure probability of only $\exp(-D/(2S))$. 
Thus, the degrees go down by an \emph{exponential factor}, $\exp(D/(2S))$, and since the slack is unchanged, the ratio $D/S$ increases as a tower function, resulting in $O(\log^* \Delta)$ time complexity.

Elkin, Pettie, and Su \cite{EPS15} showed that a single execution of {\RCT}, combined with node sampling, 
actually generates sufficient slack in \emph{sparse graphs}. More precisely, if the induced neighborhood graph $G[N(v)]$ of a node $v$ contains $(1-\tau)\binom{\Delta}{2}$ edges, for some $\tau=\Omega(\log n/\Delta)$, then after this {\slackgeneration} step, node $v$ has slack $\Omega(\tau\Delta)$.
The reason is that two non-adjacent neighbors of $v$ have a good chance of being colored with the same color, increasing the slack: the palette size goes down by only one, while the degree goes down by two.
The coloring of the sparse nodes in \cite{HSS18,CLP20} relies on both ingredients: First, a single execution of {\RCT} produces slack of $\Omega(\eps^2 \Delta)$; then the unlimited bandwidth of the \LOCAL model is used to run a version of the multiple color trials \cite{SW10,EPS15,CLP20} described above with $\delta=\Theta(\eps^2)$ in $O(\log (1/\eps) + \log^* \Delta)$ rounds.   This works if $\Delta=\Omega((\log^2 n)/\eps)$, and otherwise, the graph can be colored in  $O(\log (1/\eps))+\poly\log\log n$ rounds \cite{BEPSv3,GGR20}.

The \textbf{second obstacle} is that the multiple trials of \cite{SW10} require large bandwidth and cannot be performed in \CONGEST.
We overcome this by showing that the uncolored degree can be reduced 
to $O(\log n)$ via $O(\log\log \Delta)$ iterations of the very simple {\RCT}.
The probability of vertex $v$ remaining uncolored after iteration $i$ is about $d_i(v)/|\Psi_i(v)|$, where $d_i(v)$ is its degree, and $\Psi_i(v)$ is its palette. With a global upper bound $D_i$ on $d_i(v)$ and a global lower bound $S$ on the slack as a proxy for palette size, this becomes at most $D_i/S$. In the next iteration $i+1$, $D_{i+1} \le D_i^2/S$; hence, the ratio $D_i/S$ satisfies the recurrence $D_{i+1}/S \le (D_i/S)^2$. 
Thus, the uncolored degree goes down very quickly, and in only $O(\log\log\Delta)$ steps, we are left with a low-degree ($O(\log n)$) graph. 

\medskip

\noindent \textbf{\Cref{lem:basiconeshot}} (fast degree reduction, simplified, informal)\textbf{.}  
 \emph{Suppose after each iteration $i$ of a coloring algorithm, every node remains uncolored with probability at most $D_i/S$, even if the random bits of other nodes are adversarial, for some $D_i = \Omega(\log n)$ upper bounding the uncolored degree in that iteration and $S \ge 2 D_0$.
 Then, the series $\{\log(S/D_i\}_i$ grows geometrically. In particular, after $O(\log\log \Delta)$ rounds, the uncolored degree becomes $O(\log n)$. }

\medskip

This powerful observation (in a more general form) will also be crucial for coloring dense nodes.

\paragraph{Coloring Dense Nodes.}
Harris, Schneider and Su \cite{HSS18} use the small diameter property of each almost-clique $C$ to coordinate the coloring choices within $C$.
A single leader node gathers the palettes of all nodes in $C$ and then simulates a sequential version of {\rct} on a \emph{random ordering} of $C$ to ensure that the nodes within $C$ choose different colors. This has the advantage that only \emph{external neighbors} of a node $v$, i.e., the neighbors outside of $C$,  conflict with $v$'s choice, reducing the failure probability to $e(v)/|\Psi(v)|$, where $e(v)$ denotes the \emph{external degree} of $v$. By the ACD definition, this external degree is initially at most $\epsilon \Delta$, while the palette size $|\Psi(v)|$ is $\sim \Delta$, implying a probability to remain uncolored of $O(\epsilon)$. In order to reduce the uncolored degree, they repeat these synchronized color trials. Their main effort is to show that the ratio $E/D$ stays (not much worse than) $\epsilon$ in each repetition, where $E$ is a global upper bound on external degrees and $D$ lower bounds the uncolored degree as and thus palette size. 
Thus, if $\epsilon$ is chosen small (subconstant), the dense nodes are colored fast, while if $\epsilon$ is large, the sparse nodes in $V_{sparse}$ are colored fast. The best tradeoff is found for $\epsilon = \exp(-\Theta(\sqrt{\log \Delta}))$, which yields a  time complexity of $O(\sqrt{\log \Delta})+\poly\log\log n=O(\sqrt{\log n})$ for coloring sparse and dense nodes.

We use a similar synchronized version of {\rct} as \cite{HSS18} but we allow the leader  to process the nodes in an arbitrary order. The crux of it is shown as Alg.~\ref{alg:synchtrial2}, where $X \subseteq V_{dense}$ is a subset of uncolored dense nodes that we apply it on (see later), and $X^C=X\cap C$ is its share in each almost-clique. At a high level, our \CONGEST algorithm has a similar overall structure, with synchronized color selection of nodes within each almost-clique, while the analysis is quite different.
\begin{algorithm}[H]\caption{{\synchronizedcolortrial} (\LOCAL version, informal)}
\label{alg:synchtrial2}
  \begin{algorithmic}[1]
     \STATE Each node $v\in X^C$ sends its palette $\Psi(v)$ to its leader $w_{C}$. 
    \STATE $w_C$ processes the nodes in $X^C$ in an \emph{arbitrary order} $v_1,v_2\dots$, where $v_j$ is assigned a candidate color $c_j$ chosen uniformly at random from $\Psi(v) \setminus \{c_1, c_2, \ldots, c_{j-1}\}$. 
    \STATE $w_C$ sends each node $v_j$ its candidate color $c_j$.
    \STATE {\trycolor}($v_j$, $c_j$), for all $j\ge 1$ in $G[X]$.
\end{algorithmic}
\end{algorithm}

 One crucial difference of our work from~\cite{HSS18} is showing that dense nodes are colored
fast even  with \emph{constant} $\eps$, which effectively eliminates the need to balance the choice of $\eps$ between sparse and dense node coloring.
Initially, we only care about reducing the external degree. To this end, we  focus on the ratio $e(v)/S_v$, where $S_v$ is the slack of node $v$, and we show that it follows the same progression as the ratio $d_v/S_v$ did for the sparse nodes. This builds on the following key structural insight (which only holds for constant $\eps$):

\medskip

\noindent \textbf{\Cref{lem:sparseGetsSlack}}\textbf{.} (simplified) 
 \emph{After {\slackgeneration}, every node $w$ with external degree $e(w) = \Omega(\log n)$  has slack $\Omega(e(w))$.
 }

\medskip

In contrast, \cite{HSS18} does not derive any (initial) slack for dense nodes.

\smallskip

We would now want to apply \Cref{lem:basiconeshot}  to shrink the external degrees in $O(\log\log \Delta)$ iterations of synchronized {\RCT}, but we need to deal first with the heterogeneity in slack and external degrees between different nodes: although our formal version of \Cref{lem:basiconeshot} is more robust to heterogeneous slack, it still requires a non-trivial global lower bound on the slack.  
One way to achieve this would be to partition $C$ into groups of roughly equal slack and color those separately.
Instead, we put aside a subset $C' \subset C$ of  nodes to be colored later, where $|C'| = \Omega(\sqrt{|C|})$.\footnote{We use a more subtle partitioning in the implementation in the \CONGEST model.} 
Each node $v$ in $R_0^C = C - C'$, the set that we color first, has a lower bound $|\Psi(v)| \ge |C'|$ on palette size  throughout the execution of the algorithm. 
We can view this as \emph{effective slack} $ES_v = \Omega(\max(e(v),\sqrt{|C|})$. 
The simplified version of \Cref{lem:basiconeshot} given earlier is extended in the full version to allow for different values among the nodes. It gives a geometric progression in terms of $\log ES_v/e(v)$, so that after after $O(\log\log \Delta)$ iterations, 
the external degree of each node is down to $O(\log n)$, while the effective slack is $\Omega(\sqrt{|C|})$. After three more rounds, each node remains uncolored with probability at most $(O(\log n)/\sqrt{|C|})^3 = O(\log n)/|C|$, resulting in a low-degree subgraph.
We are left with the subgraph $C'$, which we solve recursively, setting aside another subset $C''$ with $|C''|=\Omega(|C|^{1/4})$ and coloring  $R_1^C = C' - C''$ next. We therefore form \emph{layers} $R_1^C, R_2^C, \ldots$ that are colored iteratively,
which adds another $\log\log \Delta$ factor to the runtime. As we shall see, these layers have another advantage that make \CONGEST implementation possible.

We summarize our algorithm for dense nodes in Alg.~\ref{alg:dense2}.
  \begin{algorithm}[H]
\caption{{\colordensenodes} (\LOCAL version, informal)}          
\label{alg:dense2}
\begin{algorithmic}[1]
\STATE Partition the uncolored nodes of each almost-clique $C$ into layers $R^C_0, R^C_1, \ldots, R^C_t$ (TBD) 
\FOR {$i=0,\dots,t-1$}
\STATE{\textbf{for}} $O(1)$ iterations \textbf{do} {\rct} in $R_i$. \label{st:RCTreduce}
\STATE{\textbf{for}} $O(\log\log \Delta)$ iterations \textbf{do} {\synchronizedcolortrial}($R_i$).
\ENDFOR  
\STATE {\colorsmalldegreenodes} in $G[V\setminus V_{sparse}]$.
\end{algorithmic}
\end{algorithm}
We call {\RCT} (in line~\ref{st:RCTreduce}) in order to decrease the external degree $e(v)$ from $O(S_v)$ to at most half the slack $S_v/2$, as needed to apply \Cref{lem:basiconeshot}.

Even if simplified, there still are several daunting challenges in implementing our algorithm in \CONGEST. In particular, \textbf{the main obstacle} is the communication requirements for the leader of each almost-clique to learn the palettes of the nodes, $\Theta(\Delta^2)$ messages in \CONGEST, of which only $m=O(\Delta)$ can be received per round and only when sent by $m$ \emph{distinct} neighbors. 
Our first step towards overcoming this obstacle is  \emph{sparsifying} the palette $\Psi(v)$, by transmitting only a random subset to the leader. This was used earlier in a congested clique algorithm of Parter and Su \cite{ParterSu}, though on a subgraph that was already polynomially smaller.
The presence of the random put-aside subset $C' = \cup_{i\ge 1} R_i^C$ of nodes  means that it suffices  to transmit only a $O(\log n / |C'|)$-fraction of the palette, and yet  {\synchronizedcolortrial} will have enough randomness in the choice of candidate colors of nodes in $R_0^C = C - C'$. In general, when coloring layer $i$, the size of the palette subset to be forwarded is $O(|R_i^C|)\cdot O(\log n )/\sum_{j > i} |R_{j}^C| = O(\log n \cdot |R_i^C|/|R_{i+1}^C|)\ll\Delta$, and the total amount of palette information transmitted to the leader is $O(\log n \cdot |R_i^C|^2/|R_{i+1}^C|)$ colors. We form the layer sizes so that this is always bounded by $O(\Delta)$. 

Note that $O(\Delta)$ colors can be directly sent to the leader only when they come from $O(\Delta)$ distinct  neighbors, which is not the case in our setting: we need to quickly re-route the messages. 
This issue is resolved in a \emph{clique} by a well known routing strategy of Lenzen \cite{Lenzen13} that allows us to satisfy arbitrary communication patterns, as long as each node sends/receives $O(\Delta)$ messages. 
To make Lenzen's result usable in an almost-clique $C$, we compute a \emph{clique overlay} on $C$, that allows for a simulation of all-to-all communication, a \emph{congested clique}, with constant-factor  overhead.  

\medskip

\noindent\textbf{\Cref{thm:congestedclique} }(simplified)\textbf{.} 
\emph{There is a $O(\log\log n)$-round \CONGEST algorithm that for any almost-clique $C$, computes a clique overlay, which simulates all-to-all communication in $C$ with constant-factor runtime overhead.}

\medskip

Since every almost-clique $C$ has diameter 2, constructing such an overlay corresponds to finding a relay node $w$ for each non-edge $uv$ in $G[C]$, in a way that no edge is adjacent to many relays. We reduce this problem to a coloring problem on a graph with vertices corresponding to non-edges $uv$ in $G[C]$, and solve it with similar ideas as the coloring of sparse nodes.

Chang, Li, and Pettie \cite{CLP20} build on the argument of \cite{HSS18} and form a hierarchy of almost clique-decompositions, with epsilons ranging from constant to $\Delta^{-1/10}$. They divide the nodes into layers, further partition them into blocks, which are then grouped together into six separate and different subsets that are tackled with slightly different variations of {\synchronizedcolortrial}. In a tour de force, they show how to reduce them in only a constant number of steps to the setting where their extension of the multi-trials of \cite{SW10} can take over. The argument is quite delicate, as they must also ensure along the way that \emph{not too many} nodes get colored in a round. While it is not impossible \emph{a priori} to implement it in \CONGEST, it is likely to be much more difficult than our approach and not likely to result in a significantly faster method.

The approach of \cite{CLP20} utilizes the unbounded communication bandwidth of the \LOCAL model in an additional way. 
The leader gathers also the topology of the almost-clique, in addition to the palettes of the nodes. We deal with this by conservatively assuming that $C$ is fully connected when it comes to candidate color assignment in {\synchronizedcolortrial}. 
Namely, we bound the \emph{anti-degree} of each node $C$, or its number of \emph{non-neighbors} within $C$, by the slack of the node, as shown in the full version of \Cref{lem:sparseGetsSlack}. The effect of this is roughly equivalent to doubling the external degree, which does not change the fundamentals.

We believe that our approach yields a simpler and more direct way of obtaining a $\poly\log\log (n)$-round algorithm for coloring dense nodes, even in the \LOCAL model.

\paragraph{Coloring Small Degree Graphs \& Postshattering}
\label{ssec:postshattering}

After we have  reduced the uncolored degree to $\Delta'=\poly\log n$, vertices have already lost colors from their initial palette, due to colored neighbors; thus, the remaining problem is a \emph{$(deg+1)$-list coloring problem}, where the nodes can have different degrees, and the list of a node of degree $d$ has size at least $d+1$. 
Hence, even though our general result (\Cref{thm:mainResult}) only solves the  $(\Delta+1)$-list coloring problem, it is essential that the following theorem solves the (possibly harder) $(deg+1)$-list coloring. We use the corresponding algorithm, called {\colorsmalldegreenodes}, to color all remaining vertices in $O(\log\Delta'+\log^5\log n)=O(\log^5\log n)$ rounds.

\smallskip

\textbf{\Cref{thm:smallDegree}} (simplified)\textbf{.} \emph{$(deg+1)$-list coloring  in a graph with $n$ nodes and maximum degree $\Delta'$ can be solved in $O(\log \Delta'+\log^5\log n)$ \CONGEST rounds, w.h.p. }

\smallskip

In the \LOCAL model, an algorithm with the same runtime is known \cite{BEPSv3,RG19,GGR20}, and in the \CONGEST model, a $O(\log \Delta'+\log^6\log n)$-round algorithm is claimed in  \cite{Ghaffari2019,GGR20}. We present a similar (but not identical) algorithm that fixes an error in the design of a subroutine in \cite{Ghaffari2019,GGR20} (see Remark~\ref{rem:smallError}) and  improves the runtime  to match the one in \LOCAL.

Our algorithm is obtained by an improved combination of previously known ideas.
It is based on the shattering framework of \cite{BEPSv3}, which uses $O(\log \Delta')$ iterations of {\rct} to reduce the problem to coloring connected components of \emph{size} $N=\poly\log n$, the \emph{network decomposition} algorithm from \cite{GGR20}, which partitions each such component into clusters of small \emph{diameter} ($\poly\log\log n$), and a \CONGEST algorithm from \cite{Ghaffari2019}, for list coloring a single cluster.
Interestingly, the latter algorithm is also based on \RCT{}. It simulates $O(\log n)$ independent instances in parallel, where each instance runs for $O(\log N)$ iterations. One of these instances is likely to color all vertices of the cluster, and the nodes use the small diameter to agree on a successful instance.

To simulate many instances in parallel, it is essential that many color trials can be sent in one $O(\log n)$-bit message. For this, the nodes in a cluster  compute a mapping of the colors in nodes' palettes to a smaller color space of size $\poly N$ (cf. \cite{HKMN20}): each color then can be represented with $O(\log\log n)$ bits, and  $O(\log n/\log\log n)$ color trials  fit in a single \CONGEST message.  

The bottleneck of our approach is the computation of the network decomposition, which is also the current bottleneck for faster randomized (and deterministic) algorithms in the \LOCAL model.

\section{Top Level Algorithm and Paper Outline}\label{sec:main}

Our main algorithm also serves as an outline of the remainder of the paper.
 \begin{algorithm}[H]
\caption{Main Algorithm Outline}          
\label{alg:main}
\begin{algorithmic}[1]
\STATE{\textbf{if}} $\Delta=O(\log^4 n)$ \textbf{then} {\colorsmalldegreenodes} (Sec.~\ref{sec:smalldegree}) and \textbf{return}
\STATE {\computeacd} Compute an $(\eps,\eta)$-ACD $V_{sparse}, C_1,\ldots, C_k$ of $G$ with $\eps=\frac{1}{3}$, $\eta=\frac{\eps}{108}$ (Sec.~\ref{sec:ACD})\label{st:mainacd}
\STATE {\computecliqueoverlay} for each $C_i$, $1\le i\le k$ (Sec.~\ref{sec:congestedcliqueoverlay})\label{st:overlay}
\STATE Step 1: {\slackgeneration} (Secs.~\ref{sec:oneshot} and~\ref{sec:slack})\label{st:mainslack}
\STATE Step 2:  {\colorsparsenodes} Color remaining sparse nodes in $G$ (Sec.~\ref{sec:sparsecoloring})
\STATE Step 3: {\colordensenodes} Color remaining dense nodes in $G$ (Sec.~\ref{sec:densecoloring})
\end{algorithmic}
\end{algorithm}

The formal statement of our main result is as follows. 

\medskip
\noindent\textbf{Theorem~\ref{thm:mainResult}.}
\emph{
Let $G$ be the input graph with $n$ vertices and maximum degree $\Delta$, where each vertex $v$ has a list $\Psi(v)\subseteq [U]$ of $|\Psi(v)|=\Delta+1$ available colors from a colorspace of size $U=\poly n$. There is a randomized  algorithm that in $O(\log^5\log n)$ rounds in the \CONGEST model, w.h.p.\footnote{With high probability, that is, with probability at least $1-n^c$, for any constant $c\ge 1$.} computes a coloring of $G$  such that each vertex $v$ gets a color from its list $\Psi(v)$.}
\emph{The runtime can also be stated as $O(T+\log\log n+\log^2\log\Delta)$, where 
$T$ is the time needed to $(deg+1)$-list color an $n$-vertex graph with maximum degree $O(\log^4 n)$ in a colorspace of size $\poly n$.}
\begin{proof}[Proof sketch]
If $\Delta=O(\log^4 n)$, then we directly apply algorithm {\colorsmalldegreenodes} from \Cref{sec:smalldegree}, to color $G$ in $T$ rounds. Assume that $\Delta=\omega(\log^4 n)$.
After the ACD computation ($O(1)$ rounds), any node is either in $V_{sparse}$ or in one of the almost-cliques $C_1,\ldots,C_k$; thus, any node that is not colored in {\slackgeneration} ($O(1)$ rounds) gets colored either in {\colorsparsenodes} ($O(\log\log \Delta + T)$ rounds, cf. \Cref{lem:coloringSparse}) or  in {\colordensenodes} ($O(\log\log n +\log^2\log\Delta + T)$ rounds, cf. \Cref{lem:coloringDense}), and, all nodes are colored at the end, w.h.p. 
The overlay computation takes $O(\log\log n)$ rounds. The total runtime is dominated by the last step.
\end{proof}

\section{Definitions and Notation}
We use $n$ and $\Delta$ to denote the number of nodes and the maximum degree of the input graph $G$. We let $d_v$, $\Psi(v)$, and $N(v)$ (or $N_H(v)$, in a subgraph $H$) denote the degree, palette, and the neighborhood, respectively, of a node $v$. We often use $s_v$ for a lower bound on the slack (defined below) of node $v$. For simplicity of exposition, we assume in our analysis that the degree, neighborhood, and palette of a node are always updated, by removing colored neighbors or  colors permanently taken by neighbors from consideration (which can be done in a single  round).

\begin{definition}[similarity, friends, density] ~ Let $\eps>0$ and $G=(V,E)$ be a graph with maximum degree $\Delta$. 
    Nodes $u,v\in V$ are \emph{$\eps$-similar}  if $|N(v)\cap N(u)|\ge (1-\eps)\Delta$, and are \emph{$\eps$-friends}, if in addition $\{u,v\}\in E(G)$. 
    Node $u\in V$ is \emph{$\eps$-dense} if it has $(1-\eps)\Delta$ friends.
\end{definition}

\begin{definition}[Almost-Clique Decomposition (ACD)]\label{def:acd}
Let  $G=(V,E)$ be a graph and $\eps,\eta\in (0,1)$. A partition $V=V_{sparse}\cup C_1\cup \ldots\cup C_k$ of $V$ is an \emph{$(\eps,\eta)$-almost-clique decomposition} for $G$ if:
\begin{compactenum}
    \item $V_{sparse}$ does not contain an $\eta$-dense node\ ,
    \item For every  $i\in [k]$, $(1-\eps)\Delta\le |C_i|\le (1+\eps)\Delta$\ ,
    \item For every $i\in [k]$ and $v\in C_i$,   $|N(v)\cap C_i|\ge (1-\eps)\Delta$\ .
\end{compactenum}
\label{D:acd}
\end{definition}
We refer to $C_i$ as \emph{$\eps$-almost-cliques}, omitting $\eps$ when clear from the context. For a node in an almost clique $C_i$, the \emph{antidegree} -- the number of non-neighbors in $C_i$ --  and \emph{external degree} -- the number of neighbors in other almost cliques $C_j$, $j\neq i$ --  are key parameters that we will use.

\begin{definition}
The \emph{external degree} $e(v)$ of a node $v\in C_i$ is  $e(v)=|N(v)\cap \cup_{j\neq i} C_j|$. The \emph{antidegree} $a(v)$ of a node $v\in C_i$ is $a(v)=|C_i\setminus N(v)|$.
\end{definition}

\Cref{lem:acdproperties}  states useful properties of almost-cliques that easily follow from the ACD definition. In \Cref{sec:ACD}, we show how to compute ACD, for some constants $\eps,\eta$, in $O(1)$ rounds.

\begin{observation}\label{obs:deltaintersections} Let $a,b,c\in (0,1)$.
Let $A,B,C$ be three sets. If $|A\cap C|\ge (1-a)\Delta$, $|B\cap C|\ge (1-b)\Delta$, and $|C|\le (1+c)\Delta$, then $|A\cap B\cap C|\ge (1-a-b-c)\Delta$.
\end{observation}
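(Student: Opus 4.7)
The plan is to argue by elementary inclusion-exclusion within the ambient set $C$. Intersecting everything with $C$ turns the hypothesis into statements about two large subsets of $C$ whose union is bounded by $|C|$, and the desired intersection bound then falls out immediately.

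Concretely, I would first note that $A\cap C$ and $B\cap C$ are both subsets of $C$, and that their union satisfies $|(A\cap C)\cup(B\cap C)|\le |C|\le (1+c)\Delta$. Combining this with the standard identity
\[
|(A\cap C)\cup(B\cap C)| = |A\cap C| + |B\cap C| - |A\cap B\cap C|,
\]
and rearranging, I obtain $|A\cap B\cap C|\ge |A\cap C|+|B\cap C|-|C|$. Plugging in the three hypothesized bounds yields $|A\cap B\cap C|\ge (1-a)\Delta+(1-b)\Delta-(1+c)\Delta = (1-a-b-c)\Delta$, which is exactly the claimed inequality.

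There is really no main obstacle here: the statement is purely set-theoretic and follows from one application of inclusion-exclusion together with the upper bound on $|C|$. The only mild subtlety worth flagging is that the inequality is only informative when $a+b+c<1$, but this is consistent with how the observation will be applied elsewhere (in the ACD analysis) with small constants $a,b,c$.
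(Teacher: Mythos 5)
Your proof is correct: the inclusion--exclusion identity $|(A\cap C)\cup(B\cap C)| = |A\cap C|+|B\cap C|-|A\cap B\cap C|$ together with the bound $|(A\cap C)\cup(B\cap C)|\le |C|$ gives exactly the claimed inequality. The paper states this observation without proof, and your argument is precisely the elementary one it implicitly relies on.
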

\begin{lemma}[ACD properties]\label{lem:acdproperties}
Let $C\in \{C_1,\dots,C_k\}$ and $u,v\in C$. It holds that: (i) $u$ and $v$ are $3\eps$-similar, (ii)  $C$ has diameter 1 or 2, (iii) $u$ is $3\eps$-dense, (iv) $e(u)\le \eps\Delta$, and (v) for every $w\in C'\neq C$, $|N(u)\cap N(w)|\leq 2\eps\Delta$.
\end{lemma}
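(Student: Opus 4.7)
The plan is to prove the five items directly from Definition~\ref{def:acd} and Observation~\ref{obs:deltaintersections}; none of the statements require genuinely new arguments beyond careful bookkeeping of the guarantees an $(\eps,\eta)$-ACD supplies inside a single almost-clique. I will handle them in the order (i), (iv), (iii), (ii), (v), since several later items reuse the first few.

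For (i), I would apply Observation~\ref{obs:deltaintersections} with $A=N(u)$, $B=N(v)$, and the third set taken to be the almost-clique $C$. The ACD definition gives $|N(u)\cap C|,|N(v)\cap C|\ge (1-\eps)\Delta$ and $|C|\le (1+\eps)\Delta$, so the observation yields $|N(u)\cap N(v)\cap C|\ge (1-3\eps)\Delta$ and hence $|N(u)\cap N(v)|\ge (1-3\eps)\Delta$. Item (iv) is even more immediate: since $u\in C_i$ has at most $\Delta$ neighbors in total and at least $(1-\eps)\Delta$ of them lie in $C_i$, the remaining ones — which by definition comprise $e(u)$ — number at most $\eps\Delta$. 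For (iii), observe that by item (i) every $v\in N(u)\cap C$ is $3\eps$-similar to $u$, and such a $v$ is also a neighbor of $u$, so it is a $3\eps$-friend of $u$. Since $|N(u)\cap C|\ge (1-\eps)\Delta\ge (1-3\eps)\Delta$, $u$ has at least $(1-3\eps)\Delta$ friends and is therefore $3\eps$-dense.

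For (ii), I would argue by a pigeonhole-style counting. If $u,v\in C$ are adjacent, they are at distance $1$, so suppose they are not. Then $N(u)\cap C$ and $N(v)\cap C$ are both subsets of $C\setminus\{u,v\}$ and each has size at least $(1-\eps)\Delta$, while $|C|\le (1+\eps)\Delta$. Inclusion-exclusion gives
\[
|N(u)\cap N(v)\cap C|\ge 2(1-\eps)\Delta-(1+\eps)\Delta=(1-3\eps)\Delta,
\]
which is strictly positive for $\eps<1/3$ (the regime used by the main algorithm). Hence $u$ and $v$ share a common neighbor in $C$ and are at distance $2$. For (v), let $u\in C$ and $w\in C'\ne C$. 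I split $N(u)\cap N(w)$ into its part inside $C$ and its part outside $C$. The outside-$C$ part is contained in $N(u)\setminus C$, which has size at most $\Delta-(1-\eps)\Delta=\eps\Delta$ by the ACD guarantee for $u$. The inside-$C$ part is contained in $N(w)\cap C$; since $C\ne C'$, every neighbor of $w$ in $C$ is an external neighbor of $w$, and by item (iv) applied to $w\in C'$ there are at most $\eps\Delta$ of them. Adding the two contributions yields $|N(u)\cap N(w)|\le 2\eps\Delta$.

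The main obstacle — if there is one at all — is item (ii), where one needs the quantitative condition $\eps<1/3$ for the inclusion-exclusion bound to be positive; everything else is a one-line application of the ACD axioms or Observation~\ref{obs:deltaintersections}. Since the paper instantiates the ACD with $\eps=1/3$ (actually strictly less in the argument, cf.\ Step~\ref{st:mainacd} of Algorithm~\ref{alg:main}), this is satisfied and the lemma follows.
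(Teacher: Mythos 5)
Your proof is correct and follows essentially the same route as the paper's: Observation~\ref{obs:deltaintersections} for (i), the resulting common neighbor for (ii), item (i) plus $|N(u)\cap C|\ge(1-\eps)\Delta$ for (iii), the degree bound for (iv), and the disjointness of $C$ and $C'$ for (v). If anything, you are more careful than the paper in flagging that (ii) needs $(1-3\eps)\Delta>0$, i.e.\ $\eps$ strictly below $1/3$ (the computed ACD in Lemma~\ref{lem:ACDFormation} in fact satisfies the definition with better constants, so this is not an issue).
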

\begin{proof}
By the definition of ACD, $|N(u)\cap C|,|N(v)\cap C|\ge (1-\eps)\Delta$, and $|C|\le (1+\eps)\Delta$; hence, by Obs.~\ref{obs:deltaintersections}, $|N(u)\cap N(v)\cap C|\ge (1-3\eps)\Delta$. This implies (i) and   (ii) (since $u$ and $v$ have a common neighbor in $C$). (iii) follows from (i) and the bound $|N(u)\cap C|\ge (1-\eps)\Delta$. (iv) follows  from  Def.~\ref{D:acd}, 3, using $d_v\le \Delta$. For (v), consider $u\in C$ and $w\in C'$. We have $|N(u)\cap C|,|N(w)\cap C'|\ge (1-\eps)\Delta$, and since $C\cap C'=\emptyset$, it follows that $|N(u)\cap N(w)|<2\eps\Delta$.
\end{proof}

\begin{definition}[slack]
\label{def:slack}
 ~Let $v$ be a node with a color palette $\Psi(v)$ in a subgraph $H$ of $G$. The \emph{slack} of $v$ in $H$ is the difference $|\Psi(v)|-d$, where $d$ is the number of uncolored neighbors of $v$ in $H$.  
\end{definition}

\section{Fast Degree Reduction with RandomColorTrial}\label{sec:oneshot}

This section is devoted to the derivation of the key property of {\rct}  that our algorithms rely on -- fast degree reduction. 
Recall that in {\rct}, every node $v$ picks a color $c_v\in \Psi(v)$ independently, uniformly at random, and calls {\trycolor}($v$, $c_v$) (Alg.~\ref{alg:trycolor}).
We use $N(v)$ to denote the set of uncolored neighbors of node $v$ in $G$.

We begin by observing that, generally, the algorithm provides a constant coloring rate, that is, in a large enough subset, a constant fraction of nodes are colored after a single application. 
\begin{lemma}\label{lem:oneshotslowdecrease}
Let $S$ be a subset of at least $c\log n$ nodes, each node $v\in S$ having palette of size $|\Psi(v)|\ge c\log n$, for a large enough constant $c>0$. After a single application of {\rct}, at least $|S|/9$ nodes in $S$ are colored, w.h.p. 
\end{lemma}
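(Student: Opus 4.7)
The plan is to split the argument into a per-vertex expectation bound and a concentration bound.

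First, I would establish that for each $v \in S$, the probability $p_v := \Pr[v \text{ is colored by one RCT}]$ is bounded below by a universal constant strictly exceeding $1/9$ (specifically $e^{-2} \approx 0.135$). Conditioning on $v$ picking a fixed color $c \in \Psi(v)$, and using independence of the neighbors' color choices, the conditional probability that $v$ keeps $c$ equals $\prod_{u \in N(v)} (1 - \mathbb{1}[c \in \Psi(u)]/|\Psi(u)|)$. Applying the elementary inequality $1 - x \ge e^{-2x}$ for $x \le 1/2$ (valid since every palette has size at least $2$), and then Jensen's inequality over the uniform distribution of $c$ on $\Psi(v)$, yields $p_v \ge \exp(-2 d_v/|\Psi(v)|)$. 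Since the standard $(deg+1)$-list-coloring invariant $d_v \le |\Psi(v)|$ is preserved throughout the algorithm, this gives $p_v \ge e^{-2} > 1/9$.

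Second, I would upgrade this expectation bound to a w.h.p. statement on $X := \sum_{v \in S} X_v$, where $X_v = \mathbb{1}[v \text{ colored}]$. Since $\E[X] \ge e^{-2} |S| = (1/9 + \alpha)|S|$ for some absolute constant $\alpha > 0$, and $|S| \ge c\log n$ with $c$ large, it is enough to show $X$ deviates from $\E[X]$ by at most $\alpha|S|$ with probability at least $1 - n^{-\Omega(1)}$. My preferred route is to exploit the fact that the ``uncolored'' indicators $\{1 - X_v\}_{v \in S}$ are negatively correlated — a standard property of random color-allocation processes, analogous to ``balls-in-bins''-type negative association — which allows direct application of a Chernoff-type concentration inequality to their sum. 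As a backup, a two-phase exposure works: first reveal $(c_u)_{u \in V \setminus S}$, which fixes each $v \in S$'s externally-surviving palette $\bar{F}_v$; then handle the residual conflicts within $S$ using Chernoff bounds on the independent internal choices $(c_v)_{v \in S}$. Either way, the assumption $|S| \ge c\log n$ with sufficiently large $c$ drives the failure probability to $n^{-\Omega(c)}$.

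The main obstacle is squarely the concentration step: the $X_v$'s share randomness via common neighbors, the Lipschitz constant of $\sum X_v$ in any single color choice is $\Theta(\Delta)$, and so generic Azuma/McDiarmid bounds degrade with $\Delta$ and are far too weak given only $|S| \ge c \log n$. Bridging the in-expectation bound to a high-probability one therefore requires exploiting structural properties of RCT — negative correlation of the ``uncolored'' events, or a tailored randomness-exposure argument — rather than off-the-shelf martingale concentration.
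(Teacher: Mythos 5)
Your first step is correct and cleanly done: the bound $p_v \ge \exp(-2d_v/|\Psi(v)|) \ge e^{-2}$ via $1-x\ge e^{-2x}$ and Jensen is valid (modulo the fact that the invariant $d_v < |\Psi(v)|$ is not part of the lemma's hypotheses and must be imported from the surrounding framework, as the paper also implicitly does). The genuine gap is exactly where you flag it: the concentration step, and neither of your two routes closes it. The negative-correlation claim is false for \RCT: the events ``$v$ uncolored'' and ``$w$ uncolored'' can be \emph{positively} correlated. Take $v,w$ adjacent only to a common neighbor $u$ with $\Psi(v)=\Psi(w)$ much smaller than $\Psi(u)$; both failures require the rare event $c_u\in\Psi(v)$, so $\Pr[\text{both uncolored}]\gg\Pr[v\text{ unc.}]\cdot\Pr[w\text{ unc.}]$. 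Your backup also does not work as stated: after exposing $(c_u)_{u\in V\setminus S}$, a node $v\in S$ whose neighbors are mostly outside $S$ may have had almost all of $\Psi(v)$ blocked, so the conditional success probabilities in the second phase are not bounded below by any constant, and showing that the conditional expectation itself concentrates reintroduces the same large-Lipschitz problem you correctly identified.

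The paper's route is different and is worth internalizing because it is the mechanism used throughout: it never works with the unconditional expectation $\E[X_v]$, but instead arranges for each node's success probability to be bounded below \emph{conditioned on arbitrary outcomes for the other nodes}, which is precisely the hypothesis of the sequential Chernoff bound (\Cref{chernoff}). Concretely, it splits $S$ by whether $|N(v)|\ge|\Psi(v)|$. On the part where the palette is at least twice the degree, $v$ succeeds with probability $\ge 1/2$ against an adversarial assignment to its neighbors, and \Cref{chernoff} applies directly. On the part where the degree dominates the palette (so the adversarial conditional probability can be $0$), it invokes \cite[Lemma 5.4]{BEPSv3}, whose proof uses the degree lower bound $\Omega(\log n)$ to show that w.h.p.\ every such node retains a constant fraction of its palette unblocked, after which one is back in the ``conditionally constant success probability'' regime. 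To repair your proof you would need some substitute for that second ingredient; the per-node expectation bound alone cannot be upgraded to a w.h.p.\ statement without it.
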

\begin{proof}
It is shown in \cite[Lemma 5.4]{BEPSv3} that under the conditions of our lemma, if $|N(v)|\ge (c/2)\log n$ holds for all nodes in $S$, then at least $|S|/16$ nodes in $S$ are colored in {\rct}, w.p. $1-n^{-c/1024}-n^{-c/32+1}$. To prove our lemma, we partition $S$ into two subsets $S_1$ and $S_2=S\setminus S'$, where $S_1=\{v\in S : |N(v)|\ge |\Psi(v)|\}$.
Let $S'$ be the larger of the two; note that $|S'|\ge (c/2)\log n$. If $S'=S_1$, then by the reasoning above, for a large enough $c$, it holds  w.h.p. that $|S'|/16\ge |S|/32$ nodes in $S$ are colored.  Otherwise, $|\Psi(v)|\ge 2|N(v)|$ holds, for every node $v\in S'$. It is easy to see that in this case, every node in $S$ is successfully colored w.p. at least 1/2, even under an adversarial assignment of colors to its neighbors; hence Chernoff bound (\ref{eq:chernoffless}) implies that w.p. $1-n^{-c/16}$, at least $|S'|/4=|S|/8$ nodes in $S'$ are colored. This completes the proof.
\end{proof}

Next, we show that if nodes have large enough slack, then {\rct} reduces the degrees of nodes very rapidly.
We state the following lemma in a general form, as it is also used in \Cref{sec:densecoloring} for a more involved color trial process. 
\begin{lemma}[fast degree reduction]\label{lem:basiconeshot}
Let $H$ be a subgraph of $G$. For a vertex $v\in H$, let $d_v$ be its degree and $s_v\ge 2d_v$ be a parameter such that $s_v\ge c\log n$, for a large enough constant $c>0$. Assume the nodes in $H$ execute a randomized coloring algorithm where in every round $i$, each uncolored node picks a color different from its neighbors' choices and makes it its permanent color w.p.\ at least $1-|N_i(v)|/s_v$, irrespective of the color choices of other nodes, where $N_i(v)$ is the set of uncolored neighbors of $v$ in $H$ at the beginning of round $i$.
After $O(\log\log s^*)$ iterations, every node $v$ in $H$ has at most $O((s_v/s^*)\log n)$ uncolored neighbors in $H$, where $s^*=\min_w s_w$. 
\end{lemma}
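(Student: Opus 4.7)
The plan is to track the normalized maximum uncolored degree $\rho_i := \max_{v \in H} D_i(v)/s_v$, where $D_i(v) := |N_i(v)|$, and to show that $\rho_i$ decays \emph{doubly exponentially} per round, so that $O(\log\log s^*)$ iterations suffice to drive it down to $O(\log n)/s^*$. This is exactly the target, since $\rho_i \leq c\log n/s^*$ is the same as $D_i(v) \leq O((s_v/s^*)\log n)$ for every $v$. By the hypothesis $s_v \geq 2 d_v$, we begin from $\rho_0 \leq 1/2$.

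For the inductive step I fix a vertex $v$ and round $i$ and estimate $\mathbb{E}[D_{i+1}(v)]$. The premise of the lemma states that each uncolored neighbor $u \in N_i(v)$ remains uncolored with probability at most $D_i(u)/s_u \leq \rho_i$, \emph{irrespective of the random choices of all other nodes}. Ordering the neighbors $u_1,\dots,u_{D_i(v)}$ arbitrarily and conditioning sequentially on the earlier outcomes when upper-bounding the failure probability of $u_k$, this yields that $D_{i+1}(v)$ is stochastically dominated by a $\mathrm{Bin}(D_i(v),\rho_i)$ random variable; in particular
\[
\mathbb{E}[D_{i+1}(v)] \;\leq\; D_i(v)\cdot \rho_i \;\leq\; s_v\,\rho_i^2 .
\]

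Next I would apply a multiplicative Chernoff bound to the dominating Binomial with a small constant $\delta \in (0,1)$ (say $\delta = 1/2$). When $s_v \rho_i^2 \geq c''\log n$ for a sufficiently large $c''$, Chernoff gives $D_{i+1}(v) \leq (1+\delta)\, s_v \rho_i^2$ with probability $1 - 1/\poly(n)$; when $s_v \rho_i^2 < c''\log n$, the expectation is already $O(\log n)$ and a standard Chernoff tail yields $D_{i+1}(v) \leq O(\log n)$ with the same probability. Taking the maximum over $v$ and union-bounding over $n$ vertices and $O(\log\log s^*)$ rounds,
\[
\rho_{i+1} \;\leq\; \max\!\left((1+\delta)\,\rho_i^{\,2},\ \tfrac{c'\log n}{s^*}\right)
\]
holds simultaneously w.h.p. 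Letting $\tau_i := (1+\delta)\rho_i$, the first branch reads $\tau_{i+1} \leq \tau_i^{\,2}$ with $\tau_0 \leq (1+\delta)/2 < 1$, giving the double-exponential bound $\rho_i \leq O((3/4)^{2^{i}})$. Hence after $i = O(\log\log (s^*/\log n)) = O(\log\log s^*)$ rounds we reach $\rho_i \leq c'\log n/s^*$, i.e.\ $D_i(v) \leq O((s_v/s^*)\log n)$ for every $v\in H$, as required.

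The main obstacle I expect is \emph{not} the recurrence itself but the clean justification of the Chernoff step: the indicators $\mathbb{1}[u \text{ uncolored}]$ for $u \in N_i(v)$ are not a priori independent, since the random color choices of different neighbors can interact through their own shared neighborhoods. The ``irrespective of the color choices of other nodes'' clause in the hypothesis is precisely the lever that licenses the sequential-conditioning argument, letting the sum be dominated by a true Binomial with independent trials; writing this coupling out carefully, and making sure the inductive assertion is uniform over all vertices before the union bound, is the part I would spend the most care on.
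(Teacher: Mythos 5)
Your proposal is correct and follows essentially the same route as the paper: both track the maximum normalized uncolored degree ($\rho_i$ in your notation, $M_i$ in the paper's), use the "irrespective of other nodes" clause to apply a Chernoff bound for sequentially-dominated indicators, and derive a doubly-exponential decay with an additive $\Theta(\log n)/s^*$ floor. The only (cosmetic) difference is how the $\log n$ error term is absorbed — the paper folds it into the recursion $M_{i+1}\le M_i^{3/2}+c'\log n/s^*$ and unrolls, while you use a max and a case split in the Chernoff step; both give the same $O(\log\log s^*)$ bound.
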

\begin{proof}  
Let $n_v^i$ be an upper bound (specified below) on the size of $N_i(v)$. Initially, we have $n_v^1=d_v$.  
As we assumed, node $v$ is colored in iteration $i$ w.p.\ at least $1-n_v^i/s_v$, irrespective of the outcome for  other nodes. For each node $u$, let $Z_u=1$ if $u$ is not colored in iteration $i$ and $Z_u=0$ otherwise. We have $Pr[Z_u=1]\le n_u^i/s_u$. 
Consider a node $v$, and let $Z=\sum_{u\in N_i(v)}Z_u$. 
We have 
$
\E[Z]\le \sum_{u\in N_i(v)} n_u^i/s_u\le n_v^i M_i,
$
where $M_i=\max_u (n_u^i/s_u)$.
Since the probability bound on $Z_u$ above holds irrespective of other $Z_{u'}$, we apply Chernoff bound (\ref{eq:chernoffmore}) to get \[
Pr\left[\sum_{u\in N_i(v)}Z_u>n_v^iM_i + c'\log n\right]\le n^{-c'/4}\ ,
\]
for any constant $c'>4$. 
Hence it holds w.h.p.\ that $\sum_{N_i(v)}Z_u\le n_v^iM_i + c'\log n$, and we can set 
$n_v^{i+1}= n_v^iM_i + c'\log n$. Consider a node $w$ with $n_w^i/s_w\ge M_i^{3/2}$. Then we have 
\[
\frac{n_w^{i+1}}{s_w} = \frac{n_w^iM_i}{s_w} + \frac{c'\log n}{s_w}\le M_i^{5/3} + \frac{c'\log n}{s^*}\ .
\]
The latter implies the recursion $M_{i+1}\le M_i^{3/2} + c'\log n/s^*$. Let $i_0=O(1)$ be the first index where the recursion holds and  $r=c'\log n/s^*$. Recall that $M_{i_0}\le 1/4$.
Since $M_i,r< 1$, we have:
\[
M_{i+1}\le M_i^{3/2} + r\le M_{i-1}^{(3/2)^2} + r^{3/2} + r\le M_{i_0}^{(3/2)^{i-O(1)}} + \sum_{j=0}^{i} r^{(3/2)^j}=M_{i_0}^{(3/2)^{i-O(1)}} + \frac{O(\log n)}{s^*}\ .
\]
Thus, after $i=O(\log\log s^*)$ iterations we have $M_{i}=\frac{O(\log n)}{s^*}$, i.e., $n_v^i= \frac{O(s_v\log n)}{s^*}$, w.h.p. 
\end{proof}

\section{Step 1: Initial Slack Generation}\label{sec:slack}

Slack generation is based on the following idea. Take a node $v$, and let $u,w$ be two of its neighbors. Let $u,w$ each choose a random color from its palette. If either $u$ or $v$ has a palette that significantly differs from $\Psi(v)$, then it, say $u$, is likely to choose a color not in $\Psi(v)$, and the slack of $v$ gets increased, if $u$ retains its color. Otherwise, both $u$ and $v$ have many common colors with $v$, and so are likely to both select the same color, again increasing the slack of $v$, if they retain their colors. This increase of slack can be attributed to the pair $u,w$. It is possible to also ensure that nodes are likely to retain the color they pick, by combining random coloring with node  sampling. 
 This leads us to the algorithm called {\slackgeneration}, which consists of an execution of {\rct} in a randomly sampled subgraph $G[S]$. Each node independently joins $S$ w.p.\ $p=1/20$. 

The following key result, which  has appeared in various reincarnations in~\cite{EPS15,HSS18,CLP20,AlonAssadi}, formalizes the intuition above, showing that {\slackgeneration} converts sparsity into slack. We note that none of the known variants is suitable to our setting: either they are not adapted for list coloring or the dependence on sparsity is sub-optimal. 

The \emph{local sparsity} $\zeta_v$ of a vertex $v$ is defined as $\zeta_v=\frac{1}{\Delta}\left({\Delta \choose 2}-m(N(v))\right)$, where for a set $X$ of vertices, $m(X)$ denotes the number of edges in the subgraph $G[X]$. Roughly speaking, $\zeta_v$ is proportional to the number of ``missing'' edges in the neighborhood of $v$; note that $0\le \zeta_v< \Delta/2$. 

\begin{lemma}\label{lem:sparseImpliesSlack} Assume each vertex $v$ has a palette $\Psi(v)$ of size $\Delta+1$. Let $v$ be a vertex
with $\zeta_v\ge c\log n$, for a large enough constant $c>0$.  There is a constant $c_1>0$ such that after {\slackgeneration}, $v$ has slack $Z \ge  c_1\zeta_v$, w.h.p. 
\end{lemma}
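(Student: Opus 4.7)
The plan is to first lower-bound the expected slack gain $\E[Z]$ at $v$ by $\Omega(\zeta_v)$, and then invoke a Talagrand-type concentration to obtain $Z \ge c_1\zeta_v$ w.h.p. I would begin by writing the slack gain as $Z = |B| + \sum_{c \in \Psi(v)} \max(0,\,Y_c - 1)$, where $B$ is the set of newly colored neighbors of $v$ whose color falls outside $\Psi(v)$ and $Y_c$ is the number of newly colored neighbors of $v$ that chose color $c$. A key observation is that any two neighbors $u,w \in N(v)$ that end up colored with the same color must be non-adjacent in $G$ (otherwise they would conflict in {\rct}); thus collisions inside $\Psi(v)$ are supported on the anti-edges of $N(v)$, of which there are exactly $\zeta_v\Delta$.

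Next, I would lower-bound $\E[Z]$ by a case analysis on palette similarity. Call a neighbor $u \in N(v)$ \emph{dissimilar} if $|\Psi(u)\setminus\Psi(v)| \ge \beta\Delta$ for a small constant $\beta$ (say $\beta = 1/10$), and \emph{similar} otherwise. The $\zeta_v\Delta$ anti-edges of $N(v)$ split into those touching a dissimilar endpoint and those lying within the similar neighborhood, and one class contains at least half. In the dissimilar case, there are $\Omega(\zeta_v)$ dissimilar neighbors; each is sampled into $S$ with probability $p=1/20$, picks a color outside $\Psi(v)$ with probability $\ge \beta$ conditional on being sampled, and retains that color with constant probability (the expected number of conflicting sampled neighbors is $O(p) < 1$), each such retention contributing $+1$ to slack. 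In the similar case, inclusion–exclusion on each anti-edge $\{u,w\}$ of similar endpoints gives $|\Psi(u)\cap\Psi(w)\cap\Psi(v)| = \Omega(\Delta)$, so with probability $\Omega(p^2/\Delta)$ both endpoints are sampled and pick the same color from this intersection; both retain it with constant probability (they are non-adjacent and each has only $O(p\Delta)$ sampled neighbors), contributing $+1$ to slack. Summing over $\Omega(\zeta_v\Delta)$ anti-edges yields $\E[Z] = \Omega(\zeta_v)$ in either case, with the proviso that $k$-wise collisions for $k \ge 3$ must not be overcounted; this can be handled either by weighting anti-edge contributions or by arguing that w.h.p.\ $Y_c = O(1)$ for every $c$, so anti-edge counts are within a constant factor of the actual slack.

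Finally, for concentration, $Z$ is a function of the independent coordinates $(\mathbf{1}[u\in S],\,c_u)_{u\in V}$ that is $O(1)$-Lipschitz, and the event ``$Z \ge t$'' admits an $O(t)$-sized certificate (the slack-producing neighbors and anti-edges). Talagrand's inequality then yields $\Pr[Z < \E[Z]/2] \le \exp(-\Omega(\E[Z])) \le n^{-\Omega(1)}$ whenever $\zeta_v \ge c\log n$ for a sufficiently large constant $c$. The main obstacle I anticipate is the similar-endpoints part of the expectation step: the collision mechanism contributes only when both anti-edge endpoints retain their candidate color, and different anti-edges are positively correlated through shared endpoints and through $k$-wise coincidences. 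Translating the per-anti-edge contribution $\Omega(p^2/\Delta)$ into a clean $\Omega(\zeta_v)$ expectation in the list-coloring regime—where $\Psi(u)$, $\Psi(w)$, and $\Psi(v)$ may differ substantially—requires the careful inclusion–exclusion above and motivates the specific choice of $\beta$; these same correlations are what force concentration to go through Talagrand rather than a Chernoff bound on a sum of independent indicators.
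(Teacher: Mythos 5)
Your overall architecture is the same as the paper's: reduce to the anti-edges of $N(v)$, split them according to whether an endpoint's palette is $\beta$-dissimilar from $\Psi(v)$ (the paper uses the same threshold $9/10$), handle the dissimilar half by single-vertex escapes outside $\Psi(v)$ and the similar half by same-color collisions on anti-edges, and finish with Talagrand. Two of your steps, however, do not go through as stated.

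First, the concentration step has a genuine gap. You claim the event ``$Z\ge t$'' admits an $O(t)$-sized certificate consisting of ``the slack-producing neighbors and anti-edges.'' But a slack contribution requires not only that $u$ and $w$ \emph{pick} the same color but that they \emph{retain} it, and certifying retention means certifying that no neighbor of $u$ (or $w$) picked that color --- a statement about up to $\Delta$ coordinates, not $O(1)$. So $Z$ is not $O(1)$-certifiable and the Molloy--Reed form of Talagrand's inequality cannot be applied to it directly. The paper's fix is to decompose $Z=T-D$, where $T$ counts colors picked by at least one anti-edge pair (2-certifiable: reveal the pair) and $D$ counts those picked but not retained (3-certifiable: reveal the pair plus one conflicting neighbor); both are $O(1)$-Lipschitz, Talagrand is applied to each separately around means satisfying $\E[T]-\E[D]=\E[Z]=\Omega(\zeta_v)$, and $\E[T]\le |X|/(\Delta+1)<\zeta_v$ keeps the error terms under control. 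You would need this (or an equivalent) decomposition; without it the lower-tail bound for $Z$ is unjustified.

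Second, your treatment of the $k$-wise collision overcounting is not sound as written. The fallback claim that w.h.p.\ $Y_c=O(1)$ for every color $c$ is false: with $\Theta(\Delta)$ neighbors each throwing a ball into $\Theta(\Delta)$ colors, the maximum load is $\Theta(\log n/\log\log n)$ w.h.p., and the expected number of same-colored triples is $\Theta(p^3\Delta)$, which need not be $o(\zeta_v)$. The paper sidesteps this cleanly by building exclusivity into the success event itself: $Y_{u,w}=1$ requires additionally that no \emph{other} vertex of $V(X)$ picks the colors chosen by $u$ and $w$ (the event $\mathcal{E}_2$), which costs only a constant factor, $\Pr[\mathcal{E}_2\mid\mathcal{E}_1]\ge(1-p/(\Delta+1))^{\Delta}=\Omega(1)$, and guarantees that distinct successful pairs contribute distinct colors, so $Z\ge\sum_{\{u,w\}\in X}Y_{u,w}$ with no double counting. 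Your alternative of ``weighting anti-edge contributions'' points in the right direction but would need to be made precise along these lines. Finally, a small point: the number of genuine anti-edges inside $N(v)$ is not exactly $\zeta_v\Delta$ when $|N(v)|<\Delta$ (some ``missing edges'' in the definition of $\zeta_v$ involve phantom vertices); you need the paper's preliminary reduction to the case $|N(v)|\ge\Delta-\zeta_v/2$, which still leaves $|X|\ge\zeta_v\Delta/2$ true anti-edges.
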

\begin{proof}
Let $\zeta=\zeta_v$. We may assume w.l.o.g. that $|N(v)|\ge \Delta-\zeta/2\ge 3\Delta/4$, as otherwise $v$ has slack $\zeta/2$, and we are done. Let $X\subseteq {\binom{N(v)}{2}}$ be the set of pairs $\{u,w\}$ s.t. $\{u,w\}$ is not an edge. Under our assumption, $N(v)$ contains ${\Delta-\zeta/2\choose 2}$ pairs of nodes, of which at most ${\Delta\choose 2}-\zeta\Delta$ are edges, hence $|X|\ge {\Delta-\zeta/2\choose 2}-{\Delta\choose 2}+\zeta\Delta>\zeta\Delta/2$. 

A node $u$ is \emph{activated} if it is sampled in $S$. Note that every node is independently activated w.p.\ $p=1/20$. The rest of the proof is conditioned on the high probability event that for every node $v$, there are at most $(4/3)p\Delta=\Delta/15$ nodes activated in $N(v)$ (which easily follows by an application of Chernoff bound (\ref{eq:chernoffmore})).  We use $V(X)=\{u : \exists w, \{u,w\}\in X\}$ to denote the vertices that are in at least one pair in $X$. Let $X_1\subseteq X$ be the subset of pairs $\{u,w\}$, s.t. at least one $z\in \{u,w\}$ satisfies 
\begin{equation}\label{eq:lowinter}
|\Psi(z)\cap\Psi(v)|<(9/10)(\Delta+1)\ ,
\end{equation}
and let $X_2=X\setminus X_1$. The case $|X_1|\ge |X|/2$ is easier to handle. Let $V_1\subseteq V$ be the set of vertices that are present in a pair in $X_1$ and satisfy (\ref{eq:lowinter}). Since $|X_1|\ge |X|/2\ge \zeta\Delta/4$ and each $w\in V_1$ can give rise to at most $\Delta$ pairs, $|V_1|\ge \zeta/4$. By a Chernoff bound, there is a subset $V'_1$ of at least $\zeta p/5$ nodes activated in $V_1$, w.h.p.\ (where we use the assumption that $\zeta/\log n$ is large enough). Let $w\in V'_1$. As assumed, $|\Psi(w)\setminus \Psi(v)|> \Delta/10$, and there are at most $\Delta/15$ activated neighbors of $w$. Thus, $w$ chooses a color $c\notin\Psi(v)$ and retains it, even when conditioned on arbitrary color choices of its activated neighbors, w.p.\ at least $1/10-1/15=1/30$. Thus, we can apply 
 Chernoff bound (\ref{eq:chernoffless}), to obtain that at least $p\zeta/160$ nodes in $V_1$ choose a color that is not in $\Psi(v)$, w.h.p.\ (again, we use the assumption that $\zeta/\log n$ is large enough). This gives the claimed slack to $v$.  
 Therefore, we continue with the assumption that $|X_2|\ge |X|/2$, or for simplicity, that $X=X_2$ and $|X|\ge \zeta\Delta/2$. Thus, for every pair $\{u,w\}\in X$, $|\Psi(w)\cap\Psi(v)|\ge (9/10)\Delta$, and $|\Psi(u)\cap\Psi(v)|\ge (9/10)\Delta$, and by Obs.~\ref{obs:deltaintersections},  
\begin{equation}\label{eq:slackgen1}
|\Psi(u)\cap \Psi(w)|\ge (4/5)\Delta\ .
\end{equation}
 Let $Z$ be the number of colors in $\Psi(v)$ that are picked by at least one pair of activated vertices in $X$ and are permanently selected by all these activated neighbors.

For each pair $\{u,w\}\in X$, let $\mathcal{E}_1$ be the event that $u$ and $w$ are activated and pick the same color $c$ which is not picked by any other node in $N(u)\cap N(w)$, and let $\mathcal{E}_2$ be the event that the colors picked by $u$ and $w$ are not picked by any other node in $V(X)$. Let $Y_{u,w}$ be the binary random variable that is the indicator of the event $\mathcal{E}_1\cap \mathcal{E}_2$.

Note that $Z\ge Y=\sum_{\{u,v\}\in X} Y_{u,v}$. First, we show that $Pr[Y_{u,v}=1]=\Omega(1/\Delta)$, which implies that $\E[Z]\ge \E[Y]=\Omega(|X|/\Delta)=\Omega(\zeta)$.

Note that $Pr[Y_{u,v}=1]=Pr[\mathcal{E}_1\cap \mathcal{E}_2]=Pr[\mathcal{E}_1] \cdot Pr[\mathcal{E}_2\mid \mathcal{E}_1]$.  Note that  $Pr[\mathcal{E}_2\mid \mathcal{E}_1]$ is the probability that no node in $V(X)\setminus (\{u,w\}\cup N(u)\cup N(w))$ picks the color $c$ picked by $u$ and $w$. Since each node independently is activated w.p.\ $p$, the probability that a node picks $c$ is $1/(\Delta+1)$, and $|V(X)|\le\Delta$, we have
\[
Pr[\mathcal{E}_2\mid \mathcal{E}_1]\ge (1-p/(\Delta+1))^{\Delta}=\Omega(1)\ .
\]
Next, consider $Pr[\mathcal{E}_1]$. By our assumption above, for every pair of vertices $u,w$, at most $3p\Delta<\Delta/5$ vertices are activated in $N(u)\cup N(w)$. By (\ref{eq:slackgen1}), $|\Psi(u)\cap\Psi(w)|\ge (4/5)\Delta$; hence, there is a set $S_{u,w}$ of at least $(3/5)\Delta$ colors in $\Psi(u)\cap \Psi(w)$ that are not selected by a neighbor in $N(u)\cap N(w)$. Note that $\mathcal{E}_1$ is implied by both $u$ and $w$ being activated and choosing a color from $S_{u,w}$, hence we have, as claimed,
\[
Pr[\mathcal{E}_1]\ge p^2\cdot \frac{|S_{u,w}|}{|\Psi(u)|}\cdot \frac{1}{|\Psi(w)|}=\Omega(1/\Delta)\ .
\] 

It remains to show that $Z$ is concentrated around its mean. Let $T$ be the number of colors in $\Psi(v)$ that are picked by at least one pair $(u,w)\in X$, and $D$  be the number of colors in $\Psi(v)$ that are picked by at least one pair $(u,w)\in X$ but are not retained by at least one of them. Note that $Z=T-D$. Moreover, note that both $T$ and $D$ are a function of activation r.v. and the random color pick of vertices in $V(X)\cup N(V(X))$, and as such they are (i) $\Theta(1)$-certifiable: for  each color picked in $T$, there are 2 nodes in $V(X)$ that ``can certify'' for it (for $D$, an additional one that picked the same color), and (ii) $\Theta(1)$-Lipschitz: changing the color/activation of one vertex can affect $T$ and $D$ by at most 2.

We need to bound $\E[T]$ (which implies the same bound for $D$, as $D\le T$). For a given color $c$ and two nodes $u,w\in V(X)$, the probability that $u$ and $w$ both pick $c$ is at most $1/(\Delta+1)^{2}$. By the union bound, the probability that $c$ is picked by a pair is at most $X/(\Delta+1)^2$. There are $\Delta+1$ colors in $\Psi(v)$, so the expected number $T$ of colors that are picked by at least one pair is $\E[T]\le X/(\Delta+1)<\zeta$, and since $\E[Z]=\Omega(\zeta)=\Omega(\E[T])>c_3\sqrt{\E[T]}$, for a large enough constant $c_3$, since $\zeta=\Omega(\log n)$. Applying  Lemma~\ref{lem:talagrand}, and using these relations, we have
\[
Pr\left[|T-\E[T]|\ge \E[Z]/10\right]\le \exp\left(-\Theta(1)\frac{(\E[Z]/10-O(\sqrt{\E[T]}))^2}{\E[T]}\right)\le \exp(-\Omega(\zeta))\ .
\]
 Since $\zeta\ge c_2\log n$, for a large enough constant $c_2$, we have that $|T-\E[T]|<\E[Z]/10$, w.h.p. Similarly, $|D-\E[D]|<\E[Z]/10$, w.h.p. Putting together we see that w.h.p., $Z=T-D\ge \E[T]-\E[D]-\E[Z]/5=(4/5)\cdot\E[Z]=\Omega(\zeta)$. This completes the proof. 
\end{proof}

We apply Lemma~\ref{lem:sparseImpliesSlack} to establish slack for nodes of various densities. Sparse nodes obtain slack $\Omega(\Delta)$, while more dense nodes have slack depending on the \emph{external degree} and \emph{antidegree}.
We use a connection of antidegree and external degree to local sparsity, originally observed in \cite{HKMN20} for distance-2 coloring. 

\begin{lemma} Let $\eta,\eps\le 1/3$.
For every node $v\in V_{sparse}$, $\zeta_v\ge(\eta^2/4)\Delta$. For every node $v\in C=C_i$ with antidegree $a(v)$ and external degree $e(v)$, it holds that $\zeta_v\ge (1-2\eps)a(v)$ and $\zeta_v\ge (1-3\eps)e(v)/2$.
\end{lemma}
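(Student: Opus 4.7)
All three bounds follow from the identity
\[
\zeta_v\cdot\Delta \;=\; \binom{\Delta}{2}-m(N(v)) \;=\; \bigl(\tbinom{\Delta}{2}-\tbinom{d_v}{2}\bigr)\;+\;\bigl(\tbinom{d_v}{2}-m(N(v))\bigr),
\]
which separates the ``degree-deficit'' contribution from the non-edges inside $G[N(v)]$. The common engine is to control $m(N(v))$ via $2m(N(v))=\sum_{u\in N(v)}|N(u)\cap N(v)|$, upper-bounding each term $|N(u)\cap N(v)|$ using the relevant structural property (sparsity, Lemma~\ref{lem:acdproperties}(v), or Lemma~\ref{lem:acdproperties}(i)).

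\textbf{Sparse case.} Since $v\in V_{sparse}$ is not $\eta$-dense, at most $f\le(1-\eta)\Delta-1$ of its $d_v$ neighbors are $\eta$-friends; every other neighbor $u$ has $|N(u)\cap N(v)|<(1-\eta)\Delta$ by definition. Bounding friends trivially by $\Delta$ and non-friends by $(1-\eta)\Delta$,
\[
2m(N(v))\ \le\ f\Delta+(d_v-f)(1-\eta)\Delta\ =\ (1-\eta)d_v\Delta+\eta f\Delta.
\]
Plugging into the identity and optimizing over $(d_v,f)$ under $f\le\min(d_v,(1-\eta)\Delta-1)$: the case $d_v\le(1-\eta/2)\Delta$ is handled by the degree-deficit term alone (which already gives $\zeta_v\ge\eta\Delta/8$), while the case $d_v>(1-\eta/2)\Delta$ extracts the missing factor from the expression above to yield $\eta^2\Delta/2 -O(1)$. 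Using $\eta^2\le\eta$ for $\eta\le 1/3$, both regimes give the claimed $\zeta_v\ge(\eta^2/4)\Delta$ once $\Delta$ is at least a small constant.

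\textbf{External-degree bound.} For each external neighbor $w\in N_E:=N(v)\setminus C$, Lemma~\ref{lem:acdproperties}(v) forces $|N(w)\cap N(v)|\le 2\eps\Delta$, so each such $w$ contributes at least $d_v-1-2\eps\Delta$ non-neighbors inside $N(v)\setminus\{w\}$. A careful double counting (to avoid double-counting non-edges with both endpoints in $N_E$) gives
\[
\tbinom{d_v}{2}-m(N(v))\ \ge\ \tfrac{1}{2}\,e(v)\,(d_v-1-2\eps\Delta)\ \ge\ \tfrac{1}{2}(1-3\eps)\Delta\cdot e(v)-O(1),
\]
using $d_v\ge n_C\ge(1-\eps)\Delta$. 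Dividing by $\Delta$ yields $\zeta_v\ge (1-3\eps)e(v)/2$.

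\textbf{Antidegree bound.} Write $A=(C\setminus N(v))\setminus\{v\}$, $|A|=a(v)-1$. The structural input is Lemma~\ref{lem:acdproperties}(i): each $u\in A$ is $3\eps$-similar to $v$, so $|N(u)\cap N(v)|\ge(1-3\eps)\Delta$; using Lemma~\ref{lem:acdproperties}(iv) to bound $|N(u)\cap N_E|\le e(u)\le\eps\Delta$, almost all of these common neighbors lie in $N_C$, giving $\sum_{u\in A}|N(u)\cap N_C|\ge(a-1)(1-4\eps)\Delta$. Double counting with $\sum_{z\in N_C}|N(z)\cap A|$ and using the degree cap $|N(z)\cap C|\le\Delta$ together with the decomposition $|N(z)\cap C|=1+|N(z)\cap N_C|+|N(z)\cap A|$ (for $z\in N_C$, since $v\in N(z)$) bounds $\sum_{z}|N(z)\cap N_C|$. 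Feeding this back into the identity, together with the degree-deficit estimate $\Delta-d_v\ge a(v)-e(v)-\eps\Delta$ (from $|C|\le(1+\eps)\Delta$) and $d_v\ge(1-\eps)\Delta$ in $(\Delta-d_v)(\Delta+d_v-1)/2$, and summing the two sources of ``missing edges'' gives $\zeta_v\ge(1-2\eps)a(v)$ after the $O(\eps)$ cross-terms cancel.

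\textbf{Main obstacle.} The antidegree bound is the delicate one: when $|C|=\Delta$ (no deficit) and $N_C$ is almost a clique, the non-edges inside $G[N(v)]$ are scarce, and the bound must be extracted from the fact that the $a(v)-1$ nodes of $A$ absorb enough of the $N_C$-degrees (via the $3\eps$-similarity) to either force non-edges inside $N_C$ or force a degree deficit $\Delta-d_v\approx a(v)$ via the ACD size constraints. Keeping track of the lower-order error terms ($-\eps e(v)$, $-\eps\Delta$, $-O(1)$) so that the constant $(1-2\eps)$ comes out cleanly is precisely what the hypothesis $\eps\le 1/3$ buys us.
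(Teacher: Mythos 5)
Your sparse-node bound and your external-degree bound are sound and follow essentially the same counting route as the paper: a case split on $d_v$ for the sparse case (the paper uses the threshold $(1-\eta/5)\Delta$ and counts only the non-edges incident to non-friends; your combination of the degree-deficit term with the displaced-edge term also lands above $\eta^2\Delta/4$ once $\Delta$ exceeds a constant depending on $\eta$, a caveat the paper's own proof shares), and, for the external degree, a count of the non-neighbours of each external neighbour inside $N(v)$. Two remarks on the latter: the paper counts only the non-edges between an external neighbour $w$ and $N(v)\cap C$ (using that $w$ has at most $\eps\Delta$ neighbours in $C$), which involves no double counting and yields the stronger $\zeta_v\ge(1-2\eps)e(v)$; your version, which counts non-neighbours of $w$ in all of $N(v)$ via Lemma~\ref{lem:acdproperties}(v) and pays a factor $1/2$, gives $(1-3\eps)e(v)/2$ and matches the statement. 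Also, your $N_E$ should be $N(v)\cap\bigcup_{j\neq i}C_j$ rather than $N(v)\setminus C$: neighbours in $V_{sparse}$ do not count towards $e(v)$ and Lemma~\ref{lem:acdproperties}(v) does not apply to them.

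The antidegree bound is where there is a genuine gap. The step ``gives $\zeta_v\ge(1-2\eps)a(v)$ after the $O(\eps)$ cross-terms cancel'' cannot be repaired, because that inequality is false for every $\eps<1/4$. Take $C=\{v\}\cup N_C\cup A$ with $|N_C|=\Delta$ and $|A|=\eps\Delta-1$, where $v$ is adjacent to exactly $N_C$, every node of $A$ is adjacent to exactly $N_C$, and $G[N_C]$ is $(1-\eps)\Delta$-regular. All ACD constraints hold, $d_v=\Delta$ (so your degree-deficit estimate $\Delta-d_v\ge a(v)-e(v)-\eps\Delta$ is vacuous), $a(v)=\eps\Delta$, $e(v)=0$, and a direct computation gives $\zeta_v=(a(v)-1)/2<(1-2\eps)a(v)$. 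The structural reason a factor $1/2$ is unavoidable here is that the $\ge(a(v)-1)(1-3\eps)\Delta$ edges from $C\setminus N(v)$ into $N(v)$ consume one endpoint each of the degree budget of $N(v)$, whereas every edge inside $N(v)$ consumes two, so they can displace only half as many internal edges. What your double count does prove is $\zeta_v\ge(1-3\eps)(a(v)-1)/2$ --- and this is exactly what the paper's own proof establishes, by counting the edges between $N(v)$ and $C\setminus N(v)$ and comparing with the bound $2\zeta_v\Delta$ on the number of edges leaving $N(v)$. The lemma statement has the two dense-node bounds transposed relative to its proof: the paper derives $\zeta_v\ge(1-2\eps)e(v)$ and $\zeta_v\ge(1-3\eps)a(v)/2$, not the other way around. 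So you should not chase $(1-2\eps)a(v)$; prove $(1-3\eps)a(v)/2$ by the edge count above, which is all that the downstream application (Lemma~\ref{lem:sparseGetsSlack}, slack $\Omega(a(w)+e(w))$) requires.
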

\begin{proof}
Let $v\in V_{sparse}$. Since $v$ is not $\eta$-dense, there are less than $(1-\eta)\Delta$ nodes $u\in N(v)$ satisfying $|N(u)\cap N(v)|\ge (1-\eta)\Delta$. We assume that $d=|N(v)|\ge (1-\eta/5)\Delta$, as otherwise $\zeta_v\ge \eta/10$ follows from the definition of $\zeta_v$. Thus, $v$ has at least $(4\eta/5)\Delta$ neighbors $u$, each having at least $(4\eta/5)\Delta$ non-neighbors in $N(v)$; therefore $m(N(v))\le {\Delta\choose 2}-(8/25)\eta^2\Delta^2$, and $\zeta_v\ge (8\eta^2/25)\Delta\ge (\eta^2/4)\Delta$.

Let us consider a node $v\in C=C_i$. Each external neighbor $w\in N(v)\setminus C$ has at most $\eps\Delta$ neighbors in $C$ (by the definition of ACD). Thus, since $|N(v)\cap C|\ge (1-\eps)\Delta$,  $w$ contributes at least $(1-2\eps)\Delta$ non-edges to $G[N(v)]$. In total, the external neighbors contribute at least $e(v)(1-2\eps)\Delta$ non-edges in $G[N(v)]$, which must be at most $\zeta_v \Delta$; hence, $\zeta_v\ge (1-2\eps)e(v)$.

Next, observe that for every node  $u\in C\setminus N(v)$, $|N(u)\cap N(v)\cap C|\ge (1-3\eps)\Delta$, since $|N(u)\cap C|,|N(v)\cap C|\ge (1-\eps)\Delta$ and $|C|\ge (1+\eps)\Delta$; hence, there are at least $a(v) \cdot (1-3\eps)\Delta$ edges between $N(v)$ and  $C\setminus N(v)$. On the other hand, by the definition of sparsity, at most $2\zeta_v\Delta$ edges can exit $N(v)$. Thus, $\zeta_v\ge (1-3\eps)a(v)/2$.
\end{proof}

The two lemmas above immediately imply the following one.
\begin{lemma} \label{lem:sparseGetsSlack}  
After {\slackgeneration}, every node $v\in V_{sparse}$ has slack $\Omega(\Delta)$, and every node  $w\in C_i$ with antidegree $a(w)$ and external degree $e(w)$ such that $a(w)+e(w)\ge c\log n$, for a large enough constant $c>0$, has slack $\Omega(e(w)+a(w))$, w.h.p.
\end{lemma}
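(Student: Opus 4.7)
The plan is to derive this lemma as a direct corollary of the two preceding results. \Cref{lem:sparseImpliesSlack} converts local sparsity $\zeta_v$ into slack $\Omega(\zeta_v)$ after \slackgeneration, provided $\zeta_v=\Omega(\log n)$. The preceding (unnamed) lemma lower bounds $\zeta_v$ in terms of the natural ACD parameters: a term proportional to $\Delta$ for sparse nodes, and two terms proportional to $e(v)$ and $a(v)$ for dense nodes, with constants depending on $\eps$ and $\eta$. So it suffices to compose the two bounds and verify that the threshold condition $\zeta_v=\Omega(\log n)$ of \Cref{lem:sparseImpliesSlack} is met in each case.

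For $v\in V_{sparse}$, the previous lemma gives $\zeta_v\ge (\eta^2/4)\Delta$. Since \Cref{alg:main} invokes \slackgeneration{} only when $\Delta=\omega(\log^4 n)$ (otherwise it has already returned via {\colorsmalldegreenodes}), this yields $\zeta_v=\Omega(\Delta)\gg c\log n$ for any fixed constant $c$, so \Cref{lem:sparseImpliesSlack} applies and gives slack $\Omega(\zeta_v)=\Omega(\Delta)$ at $v$, w.h.p.

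For $w\in C_i$ with $a(w)+e(w)\ge c\log n$, the previous lemma gives $\zeta_w\ge (1-2\eps)e(w)$ and $\zeta_w\ge (1-3\eps)a(w)/2$. Taking the maximum and using $\max(x,y)\ge (x+y)/2$, we obtain $\zeta_w\ge c_0(e(w)+a(w))$ for some constant $c_0>0$ depending on $\eps$; in particular $\zeta_w=\Omega(\log n)$. Provided the constant $c$ in the hypothesis is chosen large enough relative to $c_0$ and the threshold constant of \Cref{lem:sparseImpliesSlack}, that lemma applies and yields slack $\Omega(\zeta_w)=\Omega(e(w)+a(w))$ at $w$, w.h.p. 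A union bound over the $n$ nodes converts the per-node high-probability statements into a simultaneous one.

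No real obstacle is anticipated: the argument is a clean composition of two black-box results. The only thing to keep track of is the chain of absolute constants, from $\eps$ and $\eta$ through $c_0$ to the threshold constant~$c$, to ensure that \Cref{lem:sparseImpliesSlack} can indeed be invoked with the right slack-to-sparsity ratio in both regimes.
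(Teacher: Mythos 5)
Your proposal is correct and is exactly the paper's argument: the paper proves this lemma with the single line ``The two lemmas above immediately imply the following one,'' and your write-up just spells out that composition (checking the $\zeta_v=\Omega(\log n)$ threshold in each regime and union-bounding over nodes). No gap.
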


\section{Step 2: Coloring Sparse Nodes}\label{sec:sparsecoloring}

We show that given the $(\eps,\eta)$-ACD, the maximum degree of the graph induced by uncolored nodes in $V_{sparse}$ can be reduced to $O(\log n)$ with $O(\log \log \Delta)$ executions of {\rct} in $G[V_{sparse}]$, after which we can apply {\colorsmalldegreenodes} to finish coloring $G[V_{sparse}]$. We call this procedure {\colorsparsenodes}. 
After slack generation, each node in $V_{sparse}$ has slack $\Omega(\Delta)$ (see Sec.~\ref{sec:slack}).
\Cref{lem:oneshotslowdecrease}  shows that in a few iterations, the slack of every node in $V_{sparse}$ becomes larger than its uncolored degree, after which \Cref{lem:basiconeshot} applies, to show that the degrees of nodes rapidly decrease to $O(\log n)$.
\begin{lemma}[Coloring sparse nodes]
\label{lem:coloringSparse}
After {\slackgeneration}, {\colorsparsenodes} colors all vertices in $V_{sparse}$ in $O(\log \log \Delta + T)$ rounds, w.h.p.,
where $T$ is the time needed to $(deg+1)$-list color an $n$-vertex graph with maximum degree $O(\log^4 n)$ in a colorspace of size $\poly(n)$.
\end{lemma}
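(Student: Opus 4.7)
The plan is to run $O(\log\log\Delta)$ iterations of {\rct} on $H := G[V_{sparse}]$ to bring the maximum uncolored degree down to $O(\log n)$, and then invoke {\colorsmalldegreenodes} to finish the residual $(deg+1)$-list coloring instance. The analysis proceeds in two phases plus a finishing step.

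Setup. By \Cref{lem:sparseGetsSlack}, after {\slackgeneration} every $v\in V_{sparse}$ has slack $S_v = \Omega(\Delta)$ w.h.p. Since {\colorsparsenodes} is invoked only in the regime $\Delta = \omega(\log^4 n)$ (otherwise \Cref{alg:main} returns at line~1), this also gives $S_v = \omega(\log n)$. Working inside $H$ preserves the slack: whenever an $H$-neighbor $w$ of $v$ gets permanently colored, the color $c_w$ leaves $\Psi(v)$ while $w$ simultaneously leaves $N_H(v)$, so $|\Psi(v)| - |N_H(v)|$ is non-decreasing across iterations.

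Phase 1 (reducing $d_v$ below $S_v/2$). Immediately after {\slackgeneration}, the ratio $d_v/S_v$ is merely $O(1)$ and need not be at most $1/2$, as required by \Cref{lem:basiconeshot}. I therefore first run $O(1)$ plain iterations of {\rct} on $H$. In each such round, for any $v$ whose current uncolored degree in $H$ is at least $c\log n$, the uncolored $H$-neighborhood of $v$ satisfies the hypotheses of \Cref{lem:oneshotslowdecrease} (size $\ge c\log n$ and each palette $\ge S_w = \omega(\log n)$), so at least $|N_H(v)|/9$ of these neighbors get permanently colored w.h.p. A union bound over $v\in V_{sparse}$ shows that $d_v$ shrinks by a constant factor per round simultaneously at all such nodes, so $O(\log(\Delta/S_v))=O(1)$ rounds drive $d_v \le S_v/2$ at every node whose degree has not already fallen below $c\log n$.

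Phase 2 (exponential reduction to $O(\log n)$). Now $S_v \ge 2d_v$ and $S_v = \omega(\log n)$. A single {\rct} round on $H$ fails to permanently color $v$ only if some uncolored $H$-neighbor picks the same random color, which occurs with probability at most $d_v/|\Psi(v)| \le d_v/S_v$, irrespective of the random choices made outside $N_H(v)\cup\{v\}$. This is precisely the premise of \Cref{lem:basiconeshot} with $s_v = S_v$ and $s^* = \min_w S_w = \Omega(\Delta)$, so after $O(\log\log s^*) = O(\log\log\Delta)$ further {\rct} rounds every uncolored $v\in V_{sparse}$ has $|N_H(v)| = O((S_v/s^*)\log n) = O(\log n)$ w.h.p.

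Finishing. The uncolored subgraph of $H$ has maximum degree $O(\log n) \le O(\log^4 n)$, and since slack stays positive, the palette of each surviving node still exceeds its current degree by at least one; it is a valid $(deg+1)$-list coloring instance over a colorspace of size $\poly(n)$. Applying {\colorsmalldegreenodes} (\Cref{thm:smallDegree}) finishes in $T$ rounds, for a total of $O(1) + O(\log\log\Delta) + T = O(\log\log\Delta + T)$. The main subtlety is composing the high-probability guarantees across $O(\log\log\Delta)$ iterations and all $n$ nodes; since each invocation of \Cref{lem:oneshotslowdecrease} or \Cref{lem:basiconeshot} fails with probability $n^{-\Omega(1)}$ (with the constant tunable via $c$), a union bound preserves the overall w.h.p.\ bound.
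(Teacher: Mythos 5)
Your proposal is correct and follows essentially the same route as the paper's proof: first a constant number of \RCT{} rounds analyzed via \Cref{lem:oneshotslowdecrease} to push the uncolored degree below half the slack, then \Cref{lem:basiconeshot} with $s_v = s^* = \Omega(\Delta)$ to reach degree $O(\log n)$ in $O(\log\log\Delta)$ rounds, and finally {\colorsmalldegreenodes}. The only cosmetic difference is that the paper states the per-round coloring probability as conditioned on arbitrary neighbor choices ($v$ avoids its $t_i$ neighbors' colors w.p.\ at least $1 - t_i/|\Psi(v)|$), which is the exact form \Cref{lem:basiconeshot} requires, whereas you phrase it as a union bound over neighbors; the underlying bound is the same.
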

\begin{proof} 
Assume that $\Delta\ge c\log n$, for a large enough constant $c>0$, as otherwise $G$ can be colored in $T$ rounds, using {\colorsmalldegreenodes}. After slack generation, each node $v\in V_{sparse}$ has slack  $s_v\ge s^*= c'\Delta$, for a constant $c'>0$ (\Cref{lem:sparseGetsSlack}).
Let $w\in V_{sparse}$ and let $S$ be the set of uncolored neighbors of $w$ in $G[V_{sparse}]$.  
Since each node $v\in S$ has a palette of size $|\Psi(v)|\ge s_v\ge cc'\log n$,  Lemma~\ref{lem:oneshotslowdecrease} implies that after $O(1)$ rounds, there are at most $s_w/2$ uncolored nodes remaining in $S$, w.h.p. By the union bound, this holds for all nodes $w\in V_{sparse}$, w.h.p. 
For the subsequent rounds, we lower bound the probability for a node $v$ to get colored, in order to apply Lemma~\ref{lem:basiconeshot}. In every round $i$, node $v$ picks a uniformly random color from $\Psi(v)$; hence, conditioned on any colors selected by its $t_i$ participating neighbors, $v$ selects a different color w.p.\ at least $(|\Psi(v)| -  t_i)/|\Psi(v)|= 1-t_i/|\Psi(v)|\ge 1-t_i/s_v$, using $s_v\le |\Psi(v)|$. Note that the probability bound on getting colored holds in each iteration as the slack of a node never decreases; hence, we can indeed apply \Cref{lem:basiconeshot} with parameters $s_v=s^*=c'\Delta$ (also recall that $s_v\ge cc'\log n$). The lemma implies that in $O(\log\log \Delta)$ rounds, the maximum degree of $G[V_{sparse}]$ reduces to $O(\log n)$, w.h.p. The remaining nodes in $V_{sparse}$ can be colored in $T$ rounds, using {\colorsmalldegreenodes}.
\end{proof}

\section{Step 3: Coloring Dense Nodes}\label{sec:densecoloring}
We assume here that the $(\eps,\eta)$-ACD $V_{sparse},C_1,\dots,C_k$ of $G$, with $\eps=1/3$ and $\eta=\eps/108$, is given (computed in Alg.~\ref{alg:main}, line~\ref{st:mainacd}), where each almost-clique $C=C_i$ has a designated leader node $w_{C}$ (e.g., the node with minimum ID), as well as a clique overlay (computed in Alg.~\ref{alg:main}, line~\ref{st:overlay}). We further assume that {\slackgeneration} has been executed (line~\ref{st:mainslack}), and Lemma~\ref{lem:sparseGetsSlack} applies. 

In this section we describe algorithm {\colordensenodes} (Alg.~\ref{alg:dense}) that colors the vertices in almost-cliques $C_1,\ldots,C_k$. The high level idea is to use the slack of nodes to reduce the degrees of the subgraph induced by uncolored dense nodes to $\poly \log(n)$, and then color the remaining vertices via {\colorsmalldegreenodes} (\Cref{thm:smallDegree}). We prove the following result.

\begin{lemma}[Coloring dense nodes]
\label{lem:coloringDense}
After {\slackgeneration}, {\colordensenodes} colors all vertices in $C_1,\dots,C_k$ in $O(T+\log \log n + \log^2\log\Delta)$ rounds, w.h.p., where $T$ is the time needed to $(deg+1)$-list color an $n$-vertex graph with maximum degree $O(\log^4 n)$ in a colorspace of size $\poly(n)$.
\end{lemma}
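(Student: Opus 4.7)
The plan is to analyze \colordensenodes{} by processing each almost-clique $C$ in layers, reducing the external degree (and hence the uncolored degree) of its nodes down to $\poly\log n$, at which point \colorsmalldegreenodes{} finishes the job in $T$ rounds. Concretely, the uncolored nodes of each $C$ will be partitioned into $t=O(\log\log\Delta)$ layers $R_0^C,R_1^C,\ldots,R_t^C$ with doubly-exponentially shrinking sizes, chosen so that $|R_{i+1}^C\cup\cdots\cup R_t^C|=\Omega(\sqrt{|R_i^C\cup\cdots\cup R_t^C|})$ while $|R_t^C|=O(\log n)$. The layers will be colored one at a time, and each layer will be handled in $O(\log\log\Delta)$ rounds of \synchronizedcolortrial{}, giving the $O(\log^2\log\Delta)$ main term in the runtime; the $\log\log n$ additive term will absorb the per-layer \rct{} preprocessing and the final Chernoff-style clean-up needed to reduce the uncolored degree below the $\poly\log n$ threshold required by \colorsmalldegreenodes{}.

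Inside a single layer $R_i^C$, the plan is as follows. First, run $O(1)$ iterations of \rct{}; by \Cref{lem:oneshotslowdecrease} combined with the slack guarantee of \Cref{lem:sparseGetsSlack} ($S_v=\Omega(e(v)+a(v))$ for every $v$ with $e(v)+a(v)=\Omega(\log n)$), this drops the external degree below $S_v/2$ as needed by \Cref{lem:basiconeshot}. Next, execute $O(\log\log\Delta)$ iterations of \synchronizedcolortrial($R_i^C$): the leader $w_C$ processes the nodes in arbitrary order and assigns distinct candidate colors, so conflicts for a node $v\in R_i^C$ can only arise from its external neighbors in other almost-cliques (and, in the conservative \CONGEST{} treatment, from anti-neighbors in $C$), whence the probability that $v$ survives one iteration is at most $(e(v)+a(v))/|\Psi(v)|$. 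Defining the effective slack $S_v^{\mathrm{eff}}=\max\bigl(e(v)+a(v),\;|R_{i+1}^C\cup\cdots\cup R_t^C|\bigr)$, which lower-bounds $|\Psi(v)|$ throughout the layer because the put-aside nodes of later layers keep their palette colors unused, \Cref{lem:basiconeshot} (with $s_v=S_v^{\mathrm{eff}}$, $s^*=\min_v S_v^{\mathrm{eff}}$) yields that after $O(\log\log\Delta)$ iterations the external degree of every $v$ is down to $O((S_v^{\mathrm{eff}}/s^*)\log n)$. A final $O(1)$ iterations then color all but a $\polylog n$-fraction of $R_i^C$, and any uncolored leftovers are carried forward to the residual low-degree graph.

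The \CONGEST{} implementation will lean on the clique overlay of \Cref{thm:congestedclique} together with palette sparsification. In iteration $i$, each node of $R_i^C$ sends to $w_C$ only a random subset of its palette of size $O(\log n\cdot |R_i^C|/|R_{i+1}^C\cup\cdots|)$; the layer sizes are tuned precisely so that the total amount of palette data received by $w_C$ is $O(\Delta)$ colors, which Lenzen's routing on the overlay delivers in $O(1)$ rounds. The sparsified palette still contains $\Omega(\log n)$ colors avoiding the previously-assigned candidates in $R_i^C$, enough to preserve the $(e(v)+a(v))/|\Psi(v)|$ failure bound used in the \LOCAL{} analysis. Summing over the $O(\log\log\Delta)$ layers and $O(\log\log\Delta)$ iterations per layer, then invoking \Cref{thm:smallDegree} (\colorsmalldegreenodes) on the residual subgraph, gives the claimed $O(T+\log\log n+\log^2\log\Delta)$ bound.

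The hard part, I expect, is twofold. First, verifying that the heterogeneous effective slacks across a layer fit the hypotheses of \Cref{lem:basiconeshot}: one must argue that the recursion still gives a doubly-exponential shrinkage of $M_i=\max_v(e(v)/S_v^{\mathrm{eff}})$ even though the nodes in $R_i^C$ have widely varying $S_v^{\mathrm{eff}}$, which requires a careful choice of the layer sizes so that the global lower bound $s^*$ remains $\Omega(\log n)$. Second, justifying that palette sparsification does not break the independence/concentration arguments used inside \synchronizedcolortrial{}: in particular, that conditioning on the leader's random selection from the sparsified palette still makes a node's candidate color essentially uniform over $\Omega(|R_{i+1}^C\cup\cdots|)$ colors, and that the bandwidth constraints of the clique overlay accommodate the collected data with only constant-factor overhead.
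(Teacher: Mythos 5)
Your plan follows the paper's proof essentially step for step: the probabilistic layering with rapidly shrinking layer sizes, the effective slack $\max(e(v)+a(v),\,|R_{i+1}^C\cup\cdots\cup R_t^C|)$ supplied by the put-aside later layers, the $O(1)$ \rct{} warm-up followed by $O(\log\log\Delta)$ iterations of \synchronizedcolortrial{} analyzed via \Cref{lem:basiconeshot}, palette sparsification routed through the clique overlay with Lenzen's scheme, and the closing call to \colorsmalldegreenodes{}. The two difficulties you flag at the end are exactly the ones the paper resolves: the heterogeneous slacks are handled by applying \Cref{lem:basiconeshot} three times with successively tighter caps on $s_v$ (\Cref{lem:exdegreeSmall}), and the sparsified candidate color is shown to remain uniform on $\Psi(v)\setminus\mathcal{C}$.

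There is, however, one concrete gap in the \CONGEST implementation. Your claim that ``the layer sizes are tuned precisely so that the total amount of palette data received by $w_C$ is $O(\Delta)$ colors'' cannot hold for the first layer as stated: since the layer sizes decrease, necessarily $|R_0^C|=\Theta(\Delta)$, while the put-aside set has size $\Lambda_1=\Delta/\polylog n$; for the leader's sequential assignment to succeed w.h.p.\ each node of $R_0^C$ must sample $\Pi_0=\Omega\bigl((|R_0^C|/\Lambda_1)\log n\bigr)$ colors (the free fraction of its palette outside the predecessors' colors is only $\Omega(\Lambda_1/|R_0^C|)$), so the leader would have to receive $\Theta(\Delta)\cdot\polylog n\gg\Delta$ colors, and no choice of layer sizes avoids this. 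The paper's fix (lines~\ref{st:lglgnoneshot}--\ref{st:smalldeg1} of \Cref{alg:dense} and \Cref{lem:rcisize}) is to first run $O(\log\log n)$ iterations of plain \rct{} on $R_0$ and then \colorsmalldegreenodes{} on $G[R_0]$, shrinking $|R_0^C|$ to at most $\Delta/\log^2 n$ before any \synchronizedcolortrial{} touches it; this preprocessing, not the per-layer \rct{} warm-ups (which cost only $O(\log\log\Delta)$ in total), is the source of the additive $O(\log\log n)$ term in the runtime, so your accounting of that term should be corrected as well.
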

 We present the algorithm in Sec.~\ref{ssec:denseAlg}, prove that all dense nodes are colored in Sec.~\ref{ssec:degreeReduction}, and complete the proof of \Cref{lem:coloringDense} by proving that the algorithm can be implemented efficiently in the {\CONGEST} model in  Sec.~\ref{ssec:randomGreedyImplementation}.
 
 \emph{We assume} that $\Delta=\omega(\log^4 n)$, as otherwise  {\colorsmalldegreenodes} gives the lemma.

\subsection{Algorithm Description}
\label{ssec:denseAlg}
Towards reducing the degrees of nodes, the plan is to first reduce the external degree of each node to $O(\log n)$, and then reduce the size of each almost-clique. If there were no edges within the almost-cliques, one could reduce the external degree of nodes by applying {\rct} $O(\log\log\Delta)$ times, as we did for coloring sparse nodes (\Cref{sec:sparsecoloring}), using the fact that each node has slack proportional to its external degree (by \Cref{lem:sparseGetsSlack}). There are two obstacles to this. First, the external degrees of nodes can be very different, which is problematic when applying the arguments from \Cref{sec:oneshot}. Second, unfortunately, there are many edges in almost-cliques. 

To overcome the first obstacle, we partition each almost-clique into layers of carefully chosen sizes and handle them sequentially.  This ensures that although nodes have different external degrees, they have (to some extent) similar slacks within the subgraph induced by each layer, which is provided by the uncolored neighbors in subsequent layers that are left to be colored later.

To overcome the second obstacle, we assign the nodes in each almost-clique random colors from their palettes \emph{in an arbitrary fixed node order}, ensuring that each node gets a color different from its predecessors (see Alg.~\ref{alg:synchtrial}). A fast implementation of this procedure in the {\CONGEST} model poses certain technical challenges. The idea is to collect the palettes of nodes of an almost-clique into the leader node, which can then choose the candidate colors and send them back to the nodes. This cannot be done quickly, even with fast communication via clique overlays. Instead, we show that it suffices to send a large enough random subset of each palette, and the similar slack of nodes provided by partitioning also comes in handy here. 

After reducing the external degree, a few applications of the color-assignment-and-trial-in-cliques procedure described above suffices to reduce the number of neighbors of each node in the given layer to $\poly \log(n)$. This is achieved due to the internally conflict-free (for each almost-clique) color assignment of node colors, low external degree, and large slack.

 Algorithm {\colordensenodes} is formally described in Alg.~\ref{alg:dense}. 
In line~\ref{st:partition}, we partition each almost-clique $C=C_i$ into $t= O(\log\log\Delta)$ \emph{layers} $R_1^C, \ldots, R_t^C$ that are processed iteratively. The partitioning is done probabilistically, where each vertex independently joins layer $R_i^C$ w.p.\ $p_i$. We describe the probability distribution below. Throughout this section, let $R_i=\cup_{j=1}^kR_i^{C_j}$ denote the set of all vertices of layer $i$. The terms $R_i,R_i^C$ will always denote the corresponding sets of \emph{uncolored} nodes, that is, nodes that are permanently colored are automatically removed from these sets.

  \begin{algorithm}[H]
\caption{{\colordensenodes} }          
\label{alg:dense}
\begin{algorithmic}[1]
\STATE Partition the uncolored nodes of each almost-clique $C$ into layers $R^C_0, R^C_1, \ldots, R^C_t$, where each such node joins $R^C_i$ with probability $p_i$, independently of other nodes. \label{st:partition}
\STATE{\textbf{for}} $O(\log\log n)$ iterations \textbf{do} {\rct} in $R_0$.\label{st:lglgnoneshot}
\STATE {\colorsmalldegreenodes}  in $G[R_0]$.\label{st:smalldeg1}
\FOR {$i=0,\dots,t-1$} \label{st:densemainloop}
\STATE{\textbf{for}} $O(1)$ iterations \textbf{do} {\rct} in $R_i$. \label{st:oneShotFor2}
\STATE{\textbf{for}} $O(\log\log \Delta)$ iterations \textbf{do} {\synchronizedcolortrial} in $R_i$.\label{st:core}
\ENDFOR  
\STATE {\colorsmalldegreenodes} in $G[V\setminus V_{sparse}]$.  \label{st:smalldeg3}
\end{algorithmic}
\end{algorithm}

The purpose of {\rct} in line \ref{st:lglgnoneshot}, followed by {\colorsmalldegreenodes}, is to reduce the size of $R_0^C$ by a logarithmic factor (Lemma~\ref{lem:rcisize}), which is needed for efficient communication using the clique overlay. 

When processing each layer, we start with $O(1)$ applications of {\rct}, with the purpose of increasing the slack by a constant factor (used in \Cref{lem:colorpickprobability}). The main action happens in line~\ref{st:core}, where $O(\log\log\Delta)$ applications of {\synchronizedcolortrial} (described below) reduce the size of layer $i$, so that each node in $V$ has $O(\log^2 n)$ neighbors in $R_i$. After this we can invoke {\colorsmalldegreenodes} to finish coloring $V\setminus V_{sparse}$. The subgraph induced by the last layer $R_t$ has small degree, so it is handled by {\colorsmalldegreenodes} directly.

Alg.~\ref{alg:dense} is executed on all almost-cliques in parallel (in particular, each layer is processed in parallel) and all claims (in particular those in Section~\ref{ssec:degreeReduction}) hold for all almost-cliques.

Finally, let us describe the probability distribution for partitioning. Let $t'=\lceil\log_{3/2}\log_{\log n}\sqrt{\Delta}\rceil$, and $t\le 3\log\log \Delta$, to be specified below. Let $p_1=1/\log^{3/2} n$, and, for $2\le i\le t'$, let $p_i=p_{i-1}^{3/2}$. For $t'<i\le t$, let $p_i=\sqrt{p_{i-1}/\Delta}$, where $t$ is the largest value such that $p_t\ge c\log n/\Delta$, for a sufficiently large constant $c>0$. Finally, let $p_0=1 - \sum_1^t p_i$. 
We let $\Lambda_i=\Delta p_i$ denote  the (roughly) expected size of $R_i^C$. The following observation contains all the properties of the probability distribution that we  need.
\begin{observation}\label{obs:lambdaprops}
Let $n>16$. For every  $c>0$, there is a $c'>0$ such that if $\Delta>c'\log^4 n$, then:

\begin{minipage}{0.5\textwidth}
\begin{enumerate}[label=(\roman*),noitemsep] 
\item  $t$ is well defined, and $t'<t\le 3\log\log\Delta$, 
\item  $\Lambda_0\ge \Delta/4$,
\item  $c\log n \le \Lambda_t\le c^2\log^2 n$ 
\end{enumerate}
\end{minipage}
\begin{minipage}{0.5\textwidth}
\begin{enumerate}[label=(\roman*),noitemsep]
\setcounter{enumi}{3}
\item $\Lambda_i\ge c\log n$,  $0\le i\le t$ 
\item $\Lambda_i\le\Lambda_{i+1}^2$,  $0\le i<t$, and
\item $\Lambda_i\le\sqrt{\frac{\Delta \Lambda_{i+1}}{\log n}}$,  $0<i<t$\label{obsi:sending}
\end{enumerate}
\end{minipage}
\end{observation}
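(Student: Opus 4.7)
The plan is to unroll each recurrence in closed form and then verify the six items by direct substitution. In the first regime $1 \le i \le t'$, iterating $p_i = p_{i-1}^{3/2}$ from $p_1 = \log^{-3/2} n$ gives $p_i = \log^{-(3/2)^i} n$, so $\Lambda_i = \Delta \log^{-(3/2)^i} n$. The ceiling in $t' = \lceil \log_{3/2} \log_{\log n} \sqrt{\Delta}\,\rceil$ places $(3/2)^{t'}$ in the interval $[\log_{\log n} \sqrt{\Delta},\, (3/2)\log_{\log n}\sqrt{\Delta})$, pinning $\Lambda_{t'}$ into $[\Delta^{1/4}, \Delta^{1/2}]$. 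In the second regime $t' < i \le t$, the relation $p_i = \sqrt{p_{i-1}/\Delta}$ rewrites cleanly as $\Lambda_i = \sqrt{\Lambda_{i-1}}$, so the layer sizes fall by iterated square roots.

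With these closed-forms, items (i)--(iv) follow quickly from the stopping rule and monotonicity. For (iii): by maximality of $t$, appending one more step would yield $\Lambda_{t+1} = \sqrt{\Lambda_t} < c\log n$, which is the same as $\Lambda_t < c^2 \log^2 n$; paired with $\Lambda_t \ge c\log n$ by definition, this is precisely the double bound. For (i), existence of $t$ follows since $\Lambda_{t'} \ge \Delta^{1/4}$ is much larger than $c\log n$ under $\Delta \ge c'\log^4 n$, while $t \le 3\log\log\Delta$ comes from the fact that both $t'$ and the number of square-root steps needed to descend from $\sqrt{\Delta}$ down to $O(\log^2 n)$ are each $O(\log\log\Delta)$. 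Item (iv) is immediate from the monotone decrease of $\{\Lambda_i\}$ together with $\Lambda_t \ge c\log n$. For (ii), a geometric bound gives $\sum_{i\ge 1} p_i = O(p_1) = o(1)$ as $n$ grows, so $p_0 \ge 3/4 \ge 1/4$.

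The arithmetic core sits in (v) and (vi), which I would verify by substituting the closed-forms in each regime. For (v): in the first regime, $\Lambda_{i+1}^2/\Lambda_i = \Delta p_i^2$ is at least $1$ exactly when $p_i \ge \Delta^{-1/2}$, and this is ensured for $i \le t'-1$ by the defining inequality $(3/2)^{t'-1} < \log_{\log n}\sqrt{\Delta}$; in the second regime $\Lambda_{i+1}^2 = \Lambda_i$ holds by construction; the boundary $i = 0$ reduces to $\Delta \ge \log^3 n$, covered by the hypothesis. For (vi), I rewrite the claim as $\Lambda_i^2 \log n \le \Delta \Lambda_{i+1}$. In the second regime this becomes $\Lambda_i^{3/2} \le \Delta/\log n$, i.e., $\Lambda_i \le (\Delta/\log n)^{2/3}$; the tightest instance is $i = t'$ with $\Lambda_{t'} \le \sqrt{\Delta}$, which reduces to $\Delta \ge \log^4 n$, accommodated by the hypothesis. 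In the first regime, substituting the closed-form turns (vi) into a comparison of exponents of $\log n$, which I would verify using the recurrence together with the definition of $t'$.

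The hard part, I expect, is the careful constant- and exponent-tracking near the transition $i = t'$ between the two regimes: there both (v) and (vi) are tightest, and one must choose the hidden constant $c'$ in $\Delta \ge c'\log^4 n$ large enough (as a function of $c$) to close the inequalities, as well as verify that the ceiling in the definition of $t'$ does not disrupt the boundary estimates. The rest of the verification is largely mechanical given the closed-form expressions for $p_i$.
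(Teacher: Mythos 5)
Your overall route is the same as the paper's: closed forms $p_i=\log^{-(3/2)^i}n$ in the first regime, the identity $\Lambda_i=\sqrt{\Lambda_{i-1}}$ in the second, the stopping rule for (iii), monotonicity for (iv), the condition $p_i\ge\Delta^{-1/2}$ for (v), and the reduction of (vi) in the second regime to $\Lambda_i\le(\Delta/\log n)^{2/3}$ via $\Lambda_{t'}\le\sqrt{\Delta}$ and $\Delta\ge\log^4 n$. Items (i)--(v) and the second-regime half of (vi) are handled correctly and essentially identically to the paper (for (i) you should make the bound $t'\le 2\log\log\Delta$, $t-t'\le\log\log\Delta$ explicit rather than just "$O(\log\log\Delta)$ each", since the claim is the concrete constant $3$, but that is cosmetic).

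The genuine gap is exactly the step you defer as "largely mechanical": the first-regime case of (vi). Carrying out the exponent comparison, $\Lambda_i\le\sqrt{\Delta\Lambda_{i+1}/\log n}$ with $\Lambda_i=\Delta\log^{-(3/2)^i}n$ is equivalent to $(3/2)^i\ge\bigl((3/2)^{i+1}+1\bigr)/2$, i.e.\ to $(3/2)^i\ge 2$, which holds only for $i\ge 2$. At $i=1$ it fails: $\Lambda_1=\Delta\log^{-3/2}n$ while $\sqrt{\Delta\Lambda_2/\log n}=\Delta\log^{-13/8}n<\Lambda_1$. So the verification you postponed does not close; no choice of $c'$ helps, since the discrepancy is a positive power of $\log n$. (For what it is worth, the paper's own proof of this sub-case is also flawed at the same index: it asserts $p_{i+1}=\log^{-(3/2)^{i+1}}n\le\log^{-3}n$ for $i\ge 1$, but $p_2=\log^{-9/4}n>\log^{-3}n$; the assertion, and hence the observation's item (vi), only holds from $i=2$ on, and the $i=1$ case would need a separate fix, e.g.\ adjusting $p_1$ or treating $i=1$ like the boundary case $i=0$ is treated in (v).) You should not present this step as routine; it is the one place in the observation where the claimed inequality actually breaks.
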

\begin{proof}
\begin{enumerate}[label=(\roman*)]
\item By the definition of $t'$, we have $p_{t'-1}\ge \Delta^{-1/2}$, implying that $p_{t'}\ge \Delta^{-3/4}>c\log n /\Delta$, for $\Delta\ge c^4\log^4 n$. 
Thus $t$ is well defined, we have $t>t'$ and by $t$'s definition we have $p_t\Delta\geq c\log n$. Also note that $p_{t'}\Delta\le \sqrt{\Delta}$. 

For $t\ge i>t'$, we have $p_{t'}\Delta\le \sqrt{\Delta}$, $p_i\Delta=\sqrt{p_{i-1}\Delta}$, and $p_t\Delta\ge c\log n$, hence $t-t'\le \log\log\Delta$. Since $t'\le 2\log\log\Delta$, we obtain (i). 

\item  Note that
for $1\le i\le t'$, we have $p_i=(\log n)^{-(3/2)^i}$, and since $\log n> 4$, it holds that $\sum_{i\le t'} p_i<1/4$. For $i>t'$, we have that $p_i<1/\sqrt{\Delta}$, and since $t-t'\le \log\log\Delta$, we have that $\sum_{i>t'}p_i<\log\log\Delta/\sqrt{\Delta}\le 1/2$, if $\Delta\ge 16$; hence, $p_0=1-\sum_{1}^t p_i\ge 1/4$, and $\Lambda_0\ge \Delta/4$. 

\item  By the definition of $t$, we have $p_t\ge c\log n/\Delta$, and  $\sqrt{p_t/\Delta}<c\log n/\Delta$. They imply that $c\log n\le \Lambda_t< c^2\log^2 n$ 

\item  Follows from the fact that $p_i$ is a decreasing sequence, and $p_t\ge c\log n/\Delta$. 

\item  As observed above, $p_{t'}\ge \Delta^{-3/4}$. 
The latter holds for $i<t'$ as well. Then, $p_{i}\ge \Delta^{-3/4}$ implies that $(p_i\Delta)^{4/3}\ge\Delta^{1/3}$ holds for all $0\leq i\leq t'$. Using this in the last inequality, we obtain, for $t'> i>0$,  
\[
\Lambda_i=p_{i}\Delta =p_{i+1}^{2/3}\Delta= (p_{i+1}\Delta)^{2/3}\cdot \Delta^{1/3}\le (p_{i+1}\Delta)^{2}=\Lambda_{i+1}^2\ .
\]
For $t>i\ge t'$, we have $\Lambda_i=\Lambda_{i+1}^2$, by the definition of $p_i$. For $i=0$, $\Lambda_0/\Lambda_1^2=\log^3/\Delta<1$.

\item For $0<i<t'$, we have
\[
\Lambda_{i}^2=\Delta^2 p_i^2 =\Delta^2 p_{i+1}^{4/3}=\Delta\Lambda_{i+1}\cdot p_{i+1}^{1/3}\le  \Delta\Lambda_{i+1}/\log n
,\]
since $p_{i+1}= \log^{-(3/2)^{i+1}}n\leq \log^{-3} n$ holds for $i\ge 1$.

 For $t>i\ge t'$, we have $p_i\le \Delta^{-1/2}$, hence $\Lambda_i\le \sqrt{\Delta}$.
 Unrelated, due to $\Lambda_{i+1}=\sqrt{\Lambda_{i}}$ it suffices to prove  $\Lambda_{i}\le \sqrt{\Delta/\log n}\cdot \Lambda_i^{1/4}$, which reduces to $\Lambda_i\le (\Delta/\log n)^{2/3}$. The latter holds for $\Lambda_i\le \sqrt{\Delta}$, since $\Delta>\log^4 n$. \qedhere
\end{enumerate}
\end{proof}

\paragraph{Synchronized Color Trial.} In {\synchronizedcolortrial} (see Alg.~\ref{alg:synchtrial}), each vertex of $R_i^C$ is assigned a random candidate color distinct from those of other nodes in $R_i^C$, which it then tries via {\trycolor}. To achieve this, each vertex of $R_i^C$ selects a random subset $P(v)$ of its palette $\Psi(v)$  and sends $P(v)$ to the clique leader $w_C$, who then locally processes the nodes in $R_i^C$ in an arbitrary order and assigns each node $v$ a random color from its set $P(v)$ that is different from the colors assigned to previous nodes. We let $|P(v)|=\Pi_i=c_P\max(1,|R_i^C|/\Lambda_{i+1})\log n$, where $c_P>0$ is a large enough constant, specified in \Cref{lem:colorpickprobability}.
\begin{algorithm}[H]\caption{{\synchronizedcolortrial} (executed in  $R_i^C$, for all $C$ in parallel)}
\label{alg:synchtrial}
  \begin{algorithmic}[1]
  \STATE Leader $w_C$ computes  $|R_i^C|$ and sends it to all nodes in $R_i^C$.
     \STATE Each node $v\in R^C_i$ sends a uniformly random subset $P(v)\subseteq\Psi(v)$  of size $|P(v)|= \Pi_i$ to $w_{C}$. 
    \STATE $w_C$ processes the nodes in $R_i^C$ in an arbitrary order $v_1,v_2\dots$, where $v_j$ is assigned a candidate color $c_j$ chosen uniformly at random from $ P(v_j) \setminus \{c_1, c_2, \ldots, c_{j-1}\}$. 
    \STATE $w_C$ sends each node $v_j$ its  candidate color $c_j$.
    \STATE {\trycolor}($v_j$, $c_j$) in $G[R_i]$, for all $j\ge 1$. \label{st:trying}

  \end{algorithmic}
\end{algorithm}

We show in  \Cref{ssec:randomGreedyImplementation}  that one iteration of {\synchronizedcolortrial} can be implemented in the {\CONGEST} model in $O(1)$ rounds, using the \emph{clique overlay} computed in step~\ref{st:overlay} of Alg.~\ref{alg:main}.
Note that this is nearly trivial to do in the \LOCAL model, as any node in $R_i^C$ is within distance $2$ from the clique leader $w_C$. Further, each node can send its whole palette $\Psi(v)$ to the leader, which simplifies the analysis.

\subsection{Degree Reduction for Dense Nodes}
\label{ssec:degreeReduction}
We now prove that $O(\log\log \Delta)$ repetitions of {\synchronizedcolortrial} (line~\ref{st:core})  
reduces the external degree of each node in $R_i$, and 2 more iterations reduce the number of neighbors of each node $u\in V$ in $R_i$ to $O(\log^2 n)$. 

We need the following notation. 
For $v\in V$, let $r_i(v)=N(v)\cap R_i$ denote its number of uncolored neighbors in $R_i$. For $v\in R_i^C$, let $e_i(v)$ denote its \emph{external degree} -- its number of uncolored neighbors in $R_i\setminus R_i^C$ -- and $a_i(v)$ its \emph{antidegree} -- its number of non-adjacent nodes in $R_i^C$. 
Note that $e_i(v), r_i(v)$ and $a_i(v)$ may change during the execution of the algorithm, but only downwards.

We begin by bounding the size of $R^C_i$, as well as various degrees of nodes, restricted to $R_i^C$, immediately after the partitioning. 

\begin{lemma}\label{lem:nbrsinri}
Let $C$ be an almost-clique, and $0\le i\le t$. After line~\ref{st:partition} of Alg.~\ref{alg:dense}, it holds w.h.p.\ that: (i) $|R_i^C|=\Theta(\Lambda_i)$, (ii) $r_i(u)=O(\Lambda_i)$, for $u\in V$, (iii)  $|N(v)\cap R_i^C|=\Theta(\Lambda_i)$  and  $e_i(v)=O(\Lambda_i)$, for  $v\in C$.
\end{lemma}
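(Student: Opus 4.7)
All three parts are concentration-of-measure statements for sums of independent indicator variables: each uncolored node independently joins the layer $R_i^{C'}$ of its almost-clique $C'$ with probability $p_i$, and the sizes/degrees in question are sums of such indicators over appropriate vertex sets. The plan is therefore to identify each quantity's expectation, show it lies in $\Theta(\Lambda_i)$ (or $O(\Lambda_i)$), and apply a Chernoff bound. The crucial leverage is Observation~\ref{obs:lambdaprops}(iv), namely $\Lambda_i \ge c\log n$ for an arbitrarily large constant $c$ (provided $\Delta$ is large enough, which we are assuming).

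For part (i), note that $|R_i^C|$ is a sum of $|C|$ independent Bernoulli$(p_i)$ variables, so $\E[|R_i^C|] = p_i |C| \in [(1-\eps)\Lambda_i,(1+\eps)\Lambda_i]$ by Definition~\ref{def:acd}. The standard multiplicative Chernoff bound then yields $|R_i^C| = \Theta(\Lambda_i)$ with high probability, since the expectation is $\Omega(\log n)$. For the first half of part (iii), the same argument applied to the $|N(v)\cap C|\ge (1-\eps)\Delta$ neighbors of $v\in C$ that lie inside $C$ gives $|N(v)\cap R_i^C| = \Theta(\Lambda_i)$ w.h.p. For part (ii), an arbitrary $u\in V$ has at most $\Delta$ neighbors distributed across the almost-cliques, each independently joining its layer $R_i$ with probability $p_i$; hence $\E[r_i(u)] \le p_i \Delta = \Lambda_i$, and an upper-tail Chernoff bound yields $r_i(u) = O(\Lambda_i)$ w.h.p.

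For the second half of part (iii), the external degree $e_i(v)$ of $v\in C$ is a sum of at most $e(v) \le \eps\Delta$ independent Bernoulli$(p_i)$ variables (Lemma~\ref{lem:acdproperties}(iv)), so $\E[e_i(v)] \le \eps \Lambda_i$. Again we apply the upper-tail Chernoff bound. The one subtlety here is that $\E[e_i(v)]$ may be much smaller than $\log n$ (if $e(v)$ is small), in which case the multiplicative form does not give $O(\E[e_i(v)])$ directly; however, we can use the standard ``large-deviation'' form $\Pr[X \ge \mu + c'\log n] \le n^{-\Omega(c')}$ to conclude $e_i(v) = \E[e_i(v)] + O(\log n) = O(\Lambda_i)$ w.h.p., invoking Observation~\ref{obs:lambdaprops}(iv) once more. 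After a union bound over all $C$, all $i\le t$, and all relevant $u$ or $v$, polynomially many events hold jointly w.h.p., finishing the lemma.

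The proof is essentially routine; the only place that requires care is ensuring that every quantity to which we apply Chernoff either has expectation $\Omega(\log n)$ or is handled by an additive rather than multiplicative tail bound, which is exactly the reason the partition parameters were designed so that $\Lambda_i \ge c\log n$ throughout.
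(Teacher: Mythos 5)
Your approach is the same as the paper's: compute expectations, note that $\Lambda_i\ge c\log n$ by Observation~\ref{obs:lambdaprops}, apply Chernoff, and union bound. One detail you overlook, which matters only for the lower bounds (the $\Omega(\Lambda_i)$ halves of (i) and (iii)): the partition in line~\ref{st:partition} is applied to the \emph{uncolored} nodes of $C$, and some nodes of $C$ may already have been permanently colored during \slackgeneration. So $|R_i^C|$ is not a sum of $|C|$ Bernoulli$(p_i)$ variables, and $\E[|R_i^C|]=p_i|C|$ is not literally correct; if too many nodes of $C$ had been colored beforehand, the lower bound could fail. The paper closes this by observing that each node participates in \slackgeneration only with probability $p=1/20$, hence remains uncolored (and then joins $R_i^C$) independently with probability $p_i'\ge \tfrac{19}{20}p_i$, which keeps all expectations at $\Theta(\Lambda_i)$. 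With that one-line fix your argument goes through. A minor stylistic point: for $e_i(v)=O(\Lambda_i)$ you do not need the separate additive-tail argument, since $e_i(v)\le r_i(v)$ and part (ii) already gives $r_i(v)=O(\Lambda_i)$, which is how the paper concludes.
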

\begin{proof}
Initially, $|C|=(1\pm \eps)\Delta$, and each node $v\in C$ has at least $(1-\eps)\Delta$ neighbors in $C$. Before partitioning, a node in $C$ can only be colored in {\slackgeneration}, where it participates with probability $p=\frac{1}{20}$. Thus, every node $v\in C$ joins $R^C_i$ w.p.\ $p'_i\ge \frac{19p_i}{20}$, and similarly, every node $v\in V\setminus V_{sparse}$ joins $R_i$ independently w.p.\ $p'_i$. The expected size of $R_i^C$, as well as the expected number of neighbors of a vertex $v\in C$ in $R^C_i$ is $\Theta(p_i\Delta)=\Theta(\Lambda_i)$. Similarly, the expected number of neighbors of a node $u\in V$ in $R_i$ is $O(p_i\Delta)=O(\Lambda_i)$. Since $\Lambda_i>c\log n$, for a large enough constant $c$ (Obs.~\ref{obs:lambdaprops}), all claims follow by using Chernoff bound (\ref{eq:chernoffless}). (Note that $e_i(v)\le r_i(v)$.)
\end{proof}

The following lemma highlights how many ``free'' colors each node in $R_i^C$ has, both due to its slack, as well as due to the fact that during its processing, its neighbors in $R_{i+1}^C$ stay uncolored.

\begin{lemma}[candidate color assignment]\label{lem:colorpickprobability}
Let $0\le i<t$. If $\Pi_i=c_P\max(1,|R_i^C|/\Lambda_{i+1})\log n$, for a large enough constant $c_P>0$, then  in every iteration of {\synchronizedcolortrial} in layer $i$, each node $v\in R_i^C$ has a palette of size at least $|R_i^C|+3e_i(v)+\Omega(\Lambda_{i+1})$ and is assigned a candidate color from $P(v)$, w.h.p., even if conditioned on an arbitrary color assignment to other nodes in $R_i^C$. 
\end{lemma}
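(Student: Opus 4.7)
The proof naturally splits into two parts: (a) a deterministic lower bound on $|\Psi(v)|$ valid at the start of the current iteration, and (b) a probabilistic argument that the random subset $P(v_j)$ of $\Psi(v_j)$ contains a color outside $\{c_1,\dots,c_{j-1}\}$.

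For part (a), the plan is to use the identity $|\Psi(v)| = |U(v)| + s_v$, where $U(v)$ is the set of uncolored neighbors of $v$ in $G$ and $s_v$ is its slack. I would decompose $U(v)$ into three disjoint contributions: (i) uncolored neighbors inside $R_i^C$, which number at least $|R_i^C|-1-a_i(v)$; (ii) neighbors inside the put-aside layer $R_{i+1}^C$, which by \Cref{lem:nbrsinri} contains $\Theta(\Lambda_{i+1})$ neighbors of $v$ and, crucially, is entirely untouched by the algorithm so far (only layers $R_0,\dots,R_i$ have been processed); and (iii) uncolored external neighbors in $R_i\setminus R_i^C$, contributing at least $e_i(v)$. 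Summing and rearranging, the target inequality $|\Psi(v)|\ge |R_i^C|+3e_i(v)+\Omega(\Lambda_{i+1})$ reduces to the slack condition $s_v\ge 2e_i(v)+a_i(v)+O(1)$. This condition is exactly what the preceding $O(1)$ \rct{} iterations of line~\ref{st:oneShotFor2} are meant to enforce, in combination with the initial slack guarantee of \Cref{lem:sparseGetsSlack} (slack $\Omega(e(v)+a(v))$) and the monotonicity of slack under additional colorings.

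For part (b), I would condition on an arbitrary (adversarial) assignment of candidates $c_1,\dots,c_{j-1}$ to the previously processed nodes. Since $P(v_j)$ is drawn uniformly at random from $\Psi(v_j)$ independently of these choices, conditioning does not affect its distribution. The failure event is $P(v_j)\subseteq X$ with $X=\Psi(v_j)\cap\{c_1,\dots,c_{j-1}\}$; we have $|X|\le j-1<|R_i^C|$, and by the palette bound from (a), $|\Psi(v_j)|\ge|R_i^C|+\Omega(\Lambda_{i+1})$. Since the $\Pi_i$ elements of $P(v_j)$ are sampled without replacement and the ratio $(|X|-k)/(|\Psi(v_j)|-k)$ is decreasing in $k$ when $|X|<|\Psi(v_j)|$,
\[
  Pr\bigl[P(v_j)\subseteq X\bigr]\;\le\;\left(\frac{|X|}{|\Psi(v_j)|}\right)^{\Pi_i}\;\le\;\exp\!\left(-\Omega\!\left(\frac{\Pi_i\,\Lambda_{i+1}}{|R_i^C|+\Lambda_{i+1}}\right)\right).
\]
Plugging in $\Pi_i=c_P\max(1,|R_i^C|/\Lambda_{i+1})\log n$, a two-case analysis on whether $|R_i^C|\le\Lambda_{i+1}$ or not shows that the exponent is $\Omega(c_P\log n)$ in either regime, so the failure probability is $n^{-\Omega(c_P)}$. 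A union bound over the at most $|R_i^C|\le n$ nodes in the layer yields the w.h.p.\ guarantee for $c_P$ chosen large enough.

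The main obstacle is the slack accounting in part (a): \Cref{lem:sparseGetsSlack} only guarantees $s_v=\Omega(e(v)+a(v))$ with some fixed but possibly small constant, and we need the hidden constant to absorb the full $3e_i(v)+a_i(v)+O(1)$ overhead of the palette inequality. This is precisely why \Cref{alg:dense} begins each layer with a few \rct{} iterations that shrink $e_i(v)$ to a small fraction of $s_v$, tightening the slack-to-external-degree ratio before {\synchronizedcolortrial} is invoked. Once the palette bound is in place, part (b) is a routine sampling-without-replacement argument, with the put-aside layer $R_{i+1}^C$ functioning as an ``effective slack'' $\Omega(\Lambda_{i+1})$ that guarantees enough room in $\Psi(v_j)$ to evade the already-chosen candidates.
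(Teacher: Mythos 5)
Your proposal follows essentially the same route as the paper's proof: lower-bound the palette by slack plus the three groups of uncolored neighbors (same-layer, put-aside layer $R_{i+1}^C$, external), use the initial slack of \Cref{lem:sparseGetsSlack} together with the $O(1)$ \rct{} iterations of line~\ref{st:oneShotFor2} to make the slack dominate $2e_i(v)+a_i(v)$, and then bound $\Pr[P(v)\subseteq\mathcal{C}]$ by $(|R_i^C|/(|R_i^C|+\Omega(\Lambda_{i+1})))^{\Pi_i}\le n^{-\Omega(c_P)}$. The sampling step and the two-regime analysis of the exponent match the paper.

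One case is missing from your part (a): \Cref{lem:sparseGetsSlack} only guarantees slack $\Omega(a(v)+e(v))$ when $a(v)+e(v)\ge c\log n$, and \Cref{lem:oneshotslowdecrease} likewise needs palettes/sets of size $\Omega(\log n)$ to shrink $e_i(v)$. For a node with $a_0+e_0=O(\log n)$ you may have $s_v=0$, so your reduction to ``$s_v\ge 2e_i(v)+a_i(v)+O(1)$'' fails for such nodes. The paper handles this explicitly by absorbing the $O(\log n)$ deficit $3e_i(v)+a_i(v)$ into the $\Theta(\Lambda_{i+1})$ contribution of the put-aside layer, which by \Cref{obs:lambdaprops} is $\Omega(\log n)$ with an arbitrarily large leading constant; this only shrinks the constant in the $\Omega(\Lambda_{i+1})$ term of the conclusion. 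Your decomposition already contains everything needed for this fix, but the case split should be stated, since without it the claimed palette bound is unjustified for low-sparsity dense nodes.
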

\begin{proof}
Let $v\in C$ be a node. Let $a_i(v)=a_0$ and $e_i(v)=e_0$  be the initial 
antidegree and external degree of $v$ before slack generation (line~\ref{st:mainslack} in Alg.~\ref{alg:main}). By Lemma~\ref{lem:sparseGetsSlack}, after  slack generation, $v$ has slack $c(a_0+e_0)$,  for a constant $c>0$, w.h.p. By Lemma~\ref{lem:nbrsinri}, $v$ also has $c_1\Lambda_{i+1}$ uncolored neighbors in $R_{i+1}^C$, for a constant $c_1>0$, w.h.p.; hence, $v$ initially has a palette of size $|\Psi(v)|\ge |N(v)\cap R_i^C|  + c(a_0+e_0) + c_1\Lambda_{i+1}$. If $a_0+e_0=\Omega(\log n)$, with a large enough coefficient, then, by Lemma~\ref{lem:oneshotslowdecrease}, after  $O(1)$ executions of {\rct} (line~\ref{st:oneShotFor2} in Alg.~\ref{alg:dense}), we have $a_i(v)<ca_0$ and $e_i(v)<(c/3)e_0$, w.h.p., and  
the palette size in each subsequent round is 
\begin{align}
|\Psi(v)|&\ge |N(v)\cap R_i^C| + c(a_0+e_0) + c_1\Lambda_{i+1}\ge |N(v)\cap R_i^C|+a_i(v)+3e_i(v)+c_1\Lambda_{i+1}\notag\\
&\ge |R_i^C| + 3e_i(v) + c_1\Lambda_{i+1}\ .\label{eq:palettesizeslack}
\end{align}
If, on the other hand, $a_0+c_0=O(\log n)$, then (\ref{eq:palettesizeslack}) still holds, with a different constant $c'_1=c_1/4$ in front of $\Lambda_{i+1}$, since by Obs.~\ref{obs:lambdaprops}, $\Lambda_{i+1}=\Omega(\log n)$, with a large enough coefficient, which we can choose so that $c_1\Lambda_{i+1}>4(a_0+e_0)>c_1\Lambda_{i+1}/4 + 3(a_0+e_0)$.
 Let $\mathcal{C}$ be the set of colors that the leader assigns to the nodes in $R_i^C$ preceding $v$. It follows from (\ref{eq:palettesizeslack}) that for any set  $\mathcal{C}$, the probability that $P(v)\subseteq \mathcal{C}$ is at most  
\[
\left(\frac{|R_i^C|}{|R_i^C| + c_1\Lambda_{i+1}}\right)^{|P(v)|}=\left(1-\frac{c_1\Lambda_{i+1}}{|R_i^C|+c_1\Lambda_{i+1}}\right)^{|P(v)|}\le \exp\left(-h\right)\ ,
\]
where $h=c_1|P(v)|\Lambda_{i+1}/(c_1\Lambda_{i+1}+|R_i^C|)>c_3\log n$, for a large enough constant $c_3>0$, since $|P(v)|=\Pi_i=c_P\max(1,|R_i^C|/\Lambda_{i+1})\log n$, for a large enough constant $c_P>0$. Thus, 
it holds w.h.p.\ that the leader assigns $v$ a color from $P(v)\setminus \mathcal{C}$.  
\end{proof}

\begin{lemma}\label{lem:exdegreeSmall}
Let $i<t$ and assume $R_0,\dots,R_{i-1}$ have been colored. After $O(\log\log \Delta)$ iterations of {\synchronizedcolortrial}, it holds for every node $v\in R_i^C$ that $e_i(v)=O(\log n)$, w.h.p.
\end{lemma}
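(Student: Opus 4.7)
The plan is to apply the fast degree-reduction lemma (Lemma~\ref{lem:basiconeshot}) on the auxiliary ``external'' graph $H$ whose vertex set is $R_i$ and whose edges are precisely the edges of $G$ between different almost-cliques. In $H$, we have $\deg_H(v)=e_i(v)$, so shrinking $H$-degrees is exactly the target. This reduces the problem to two tasks: (a) verifying a per-iteration probability bound of the form required by Lemma~\ref{lem:basiconeshot}, and (b) verifying its slack hypothesis.

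For task (a), I would use Lemma~\ref{lem:colorpickprobability}: in each iteration of \synchronizedcolortrial{}, every $v\in R_i^C$ is w.h.p.\ assigned a candidate color $c_v$, and $c_v$ is, by symmetry, essentially uniform over $\Psi(v)\setminus \mathcal{C}$, where $\mathcal{C}$ is the set of at most $|R_i^C|$ colors previously assigned by the leader within $C$. Since the palette lower bound of Lemma~\ref{lem:colorpickprobability} is $|\Psi(v)|\ge |R_i^C|+3e_i(v)+c_1\Lambda_{i+1}$, the effective palette from which $c_v$ is drawn has size at least $3e_i(v)+c_1\Lambda_{i+1}$. A union bound over the at most $e_i(v)$ colors chosen by external neighbors then yields
\[
\Pr[v \text{ not permanently colored}] \;\le\; \frac{e_i(v)}{3e_i(v)+c_1\Lambda_{i+1}} \;\le\; \frac{e_i(v)}{c_1\Lambda_{i+1}}\ ,
\]
irrespective of the other nodes' random choices. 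This suggests setting the slack parameter to $s_v=c_1\Lambda_{i+1}$ uniformly, which by Observation~\ref{obs:lambdaprops}(iv) is $\Omega(\log n)$ with the needed constant.

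With this setup, Lemma~\ref{lem:basiconeshot} applied to $H$ would conclude: after $O(\log\log s^*)=O(\log\log\Lambda_{i+1})=O(\log\log\Delta)$ iterations, every $v\in R_i$ has at most $O((s_v/s^*)\log n)=O(\log n)$ uncolored external neighbors, which is what we want. A union bound over all $v$ and all $C$ keeps the failure probability polynomially small.

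The main obstacle is checking the slack hypothesis $s_v\ge 2\deg_H(v)$. Initially $e_i(v)=O(\Lambda_i)$ by Lemma~\ref{lem:nbrsinri}, and for layers near $t'$ the ratio $\Lambda_i/\Lambda_{i+1}$ can be polynomial in $\Delta$, so the hypothesis may fail at the start. To handle this, I would exploit that whenever $e_i(v)\ge c_1\Lambda_{i+1}$ the probability bound sharpens to the absolute constant $\tfrac{1}{3}$, driving a geometric per-round shrinkage of $e_i(v)$ via Chernoff. Combined with Observation~\ref{obs:lambdaprops}(v)--(vi), which implies $\log\log(\Lambda_i/\Lambda_{i+1})=O(\log\log\Delta)$, one shows that a warmup of $O(\log\log\Delta)$ rounds suffices to bring $e_i(v)$ below $s_v/2$; from there, the clean regime of Lemma~\ref{lem:basiconeshot} (quadratic shrinkage governed by $M_{k+1}\le M_k^2 + O(\log n)/s^*$) takes over and finishes in another $O(\log\log\Delta)$ rounds, yielding the final bound $e_i(v)=O(\log n)$. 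Combining the warmup and the Lemma~\ref{lem:basiconeshot} phase gives the claimed $O(\log\log\Delta)$ total round count.
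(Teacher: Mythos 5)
Your overall architecture matches the paper's: the auxiliary graph $H$ on $R_i$ containing only inter-clique edges, the per-iteration success probability extracted from Lemma~\ref{lem:colorpickprobability} via the uniformity of $c_v$ over $\Psi(v)\setminus\mathcal{C}$, and the hand-off to Lemma~\ref{lem:basiconeshot}. The derivation in your task (a) is essentially the paper's inequality~(\ref{eq:cliquesuccprob}). The problem is your choice of a \emph{uniform} slack $s_v=c_1\Lambda_{i+1}$ and the warmup you invent to make the hypothesis $s_v\ge 2\deg_H(v)$ hold. In the regime $e_i(v)\gg\Lambda_{i+1}$ the failure probability is only bounded by the constant $\tfrac{1}{3}$, so each round shrinks the external degree by a constant factor; driving it from $\Theta(\Lambda_i)$ down to $\tfrac{1}{2}c_1\Lambda_{i+1}$ therefore costs $\Theta\bigl(\log(\Lambda_i/\Lambda_{i+1})\bigr)$ rounds. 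For layers $i\ge t'$ one has $\Lambda_{i+1}=\sqrt{\Lambda_i}$ and $\Lambda_i\ge\Delta^{1/4}$, so $\Lambda_i/\Lambda_{i+1}=\sqrt{\Lambda_i}=\Delta^{\Omega(1)}$ and the warmup takes $\Theta(\log\Delta)$ rounds. The quantity $\log\log(\Lambda_i/\Lambda_{i+1})$ that you invoke from Observation~\ref{obs:lambdaprops} is not the relevant one: constant-factor decay is governed by $\log$ of the ratio, not $\log\log$, and nothing accelerates the decay before the external degree falls below $\Lambda_{i+1}$.

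The paper avoids the warmup entirely by using a \emph{node-dependent} slack $s_v=2e_i(v)+\Omega(\Lambda_{i+1})$. The term $2e_i(v)$ is not free: it comes from {\slackgeneration}, which by Lemma~\ref{lem:sparseGetsSlack} gives every dense node slack $\Omega(e(w)+a(w))$ proportional to its \emph{own} initial external degree (the $O(1)$ iterations of {\rct} in line~\ref{st:oneShotFor2} fix the constants so that the surviving slack dominates $a_i(v)+3e_i(v)$). With this $s_v$, the hypothesis $d_v=e_i(v)\le s_v/2$ holds from the first iteration of {\synchronizedcolortrial}, so Lemma~\ref{lem:basiconeshot} applies immediately. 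The price is heterogeneity: $\max_{u,w}(s_u/s_w)=O(\Lambda_i/\Lambda_{i+1})=O(\Lambda_{i+1})$, so one application only yields $e_i(v)=O(\Lambda_{i+1}\log n)$, and the paper bootstraps with two further applications of Lemma~\ref{lem:basiconeshot} using progressively smaller caps on $s_v$ (first $O(\Lambda_{i+1}\log n)$, then $O(\log^2 n)$) to reach $O(\log n)$. If you want to keep your cleaner single-application endgame, you must replace your warmup with this slack-from-sparsity argument; as written, the warmup does not fit in the $O(\log\log\Delta)$ budget.
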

\begin{proof} 
The goal is to apply Lemma~\ref{lem:basiconeshot} with a suitable subgraph $H$. To obtain $H$ from $G[R_i]$ each node $v\in R_i^C$ removes all but the edges to nodes in $R_i\setminus R_i^C$. Note that the uncolored degree of a vertex in $H$ corresponds to its (uncolored) external degree in $G[R_i]$.
When restricted to $H$ the algorithm (line~\ref{st:core} in Alg.~\ref{alg:dense}) is still a valid coloring algorithm, since nodes in $V_{sparse}\cup \cup_{j\neq i} R_j$ do not participate, and pairs of nodes within the same almost-clique $C$ are never assigned the same color.   
To apply Lemma~\ref{lem:basiconeshot} we need to show that in iteration $j$ of {\synchronizedcolortrial} each vertex $v$ gets colored with probability at least $1-N_j(v)/s_v$ irrespective of the color choices of other nodes, for a suitable choice of $s_v$, where $N_j(v)$ are the uncolored neighbors of $v$ in iteration $j$ in $H$.  
 We condition the rest of the proof on the high probability event that every node in $R_i$ in every iteration is assigned a candidate color (due to \Cref{lem:colorpickprobability}).

Let $v\in R_i^C$ be a vertex, and consider a fixed iteration of {\synchronizedcolortrial}. Let $\mathcal{C}$ be the (random) set of candidate colors assigned to the nodes of $R_i^C$ preceding $v$. In the rest of the paragraph, we condition on an arbitrary outcome of $\mathcal{C}$. By the assumption above, 
node $v$ is assigned a candidate color $c_v\in P(v)\setminus \mathcal{C}$. 
  It follows from the randomness of $P(v)$ and the random choice of the candidate color $c_v$, that $c_v$ is uniformly distributed in $\Psi(v)\setminus \mathcal{C}$. Let $\mathcal{C'}$ be the set of colors assigned to the external neighbors of $v$. It follows that even when conditioned on arbitrary $\mathcal{C'}$, $v$ is permanently colored, i.e., $c_v\in \Psi(v)\setminus (\mathcal{C}\cup \mathcal{C'})$ holds, w.p.\ at least $1-|\mathcal{C'}|/|\Psi(v)\setminus \mathcal{C}|$. We know that $|\mathcal{C}|<|R_i^C|$, $|\mathcal{C'}|\leq|N_H(v)|$, and  $|\Psi(v)|\ge |R_i^C|+3e_i(v)+\Omega(\Lambda_{i+1})$ (from \Cref{lem:colorpickprobability}). Thus, $v$ gets permanently colored w.p.\ at least \begin{equation}\label{eq:cliquesuccprob}
  1-\frac{|\mathcal{C'}|}{|\Psi(v)\setminus \mathcal{C}|}\geq 1-\frac{|N_j(v)|}{2e_i(v)+\Omega(\Lambda_{i+1})}=1-\frac{|N_j(v)|}{s_v}\ , 
  \end{equation} 
irrespective of the candidate color assignments of other nodes in $G$ and where $s_v=2e_i(v)+\Omega(\Lambda_{i+1})$.

Since also $d_v=|N_j(v)|= e_i(v)\le s_v/2$ and $s_v=\Omega(\Lambda_{i+1})=\Omega(\log n)$, with a large enough constant factor (provided by Obs.~\ref{obs:lambdaprops}), 
Lemma~\ref{lem:basiconeshot} applies with parameters $d_v$ and $s_v$. Note that $s_v\in O(\Lambda_i)\cap \Omega(\Lambda_{i+1})$, so after $O(\log\log \min s_v)=O(\log\log \Delta)$ iterations, each node in $R_i$ has $e_i(v)=O(\max_{u,w\in R_i} (s_u/s_w)\cdot  \log n)=O(\Lambda_{i+1}\log n)$, w.h.p. 

Now, we bound $e_i(v)$ further by  applying  \Cref{lem:basiconeshot} again with a smaller upper bound on $\max_{u,w\in R_i} (s_u/s_w)$: Let us replace $s_v=\min(s_v, O(\Lambda_{i+1}\log n))$; note that we still have $s_v\ge 2d_v=2e_i(v)$, 
as well as $s_v=\Omega(\Lambda_{i+1})$, and hence, $\max_{u,w\in R_i}(s_u/s_w)=O(\log n)$. Applying \Cref{lem:basiconeshot} again with the same reasoning and parameters $d_v=e_i(v)$ and $s_v$, we see that after  $O(\log\log\Delta)$ more iterations,   $e_i(v)=O(\log^2 n)$,  w.h.p. A final application of \Cref{lem:basiconeshot} with $d_v=e_i(v)$ and $s_v=O(\log^2 n)$ implies $e_i(v)=O(\log n)$, in $O(\log\log\log n)$ more iterations, w.h.p.
\end{proof}

\begin{lemma}\label{lem:degreeSmall}
Let $i<t$ and assume $R_0,\dots,R_{i-1}$ have been colored, and for every node $v\in R_i$, $e_i(v)=O(\log n)$. After 2 iterations of {\synchronizedcolortrial}, it holds, for every node $u\in V$, that $r_i(u)=O(\log^2 n)$, w.h.p.
\end{lemma}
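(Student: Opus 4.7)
The plan is to couple the per-iteration coloring-success bound from inequality~(\ref{eq:cliquesuccprob}) with Chernoff concentration, and iterate twice.

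First I would bound the per-node failure probability. Inequality~(\ref{eq:cliquesuccprob}) (from the proof of Lemma~\ref{lem:exdegreeSmall}) says that each $v \in R_i^C$ is permanently colored in one {\synchronizedcolortrial} iteration with probability at least $1 - |N_j(v)|/s_v$, where $s_v = 2e_i(v) + \Omega(\Lambda_{i+1})$, and this bound holds \emph{irrespective of} the random choices of all other nodes. Using the hypothesis $e_i(v) = O(\log n)$ together with Obs.~\ref{obs:lambdaprops}(iv) (so that $\Lambda_{i+1}/\log n$ is at least a sufficiently large constant, as guaranteed by the case assumption $\Delta = \omega(\log^4 n)$), the failure probability reduces to at most $p := c_1 \log n / \Lambda_{i+1}$ for some constant $c_1$, which can be made at most $1/2$.

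Next, I would fix $u \in V$ and let $M_j$ denote the number of uncolored neighbors of $u$ in $R_i$ at the start of the $j$-th of the two iterations. Because the per-node failure probability $p$ holds unconditionally on all other randomness, the number of neighbors of $u$ that remain uncolored after this iteration is stochastically dominated by a Binomial$(M_j,p)$ variable. A Chernoff bound of the form~(\ref{eq:chernoffmore}) with additive slack $\Theta(\log n)$ then gives
\[
M_{j+1} \le p \cdot M_j + O(\log n) \quad \text{w.h.p.}
\]
Starting from $M_0 \le r_i(u) = O(\Lambda_i)$ (Lemma~\ref{lem:nbrsinri}) and using Obs.~\ref{obs:lambdaprops}(v) to write $\Lambda_i \le \Lambda_{i+1}^2$, one iteration yields $M_1 = O(\Lambda_{i+1} \log n)$, and a second iteration gives $M_2 = O(\log^2 n)$. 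A union bound over the at most $n$ choices of $u$ completes the argument.

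The only technical point to be careful about is the independence needed for Chernoff: the failure events for distinct neighbors of $u$ are not actually independent, because for neighbors that share an almost-clique the leader draws their candidate colors sequentially from overlapping palette subsets. However, since the per-node failure-probability bound $p$ is valid \emph{unconditionally} on all other randomness, the failure indicators are stochastically dominated by an i.i.d.\ Bernoulli$(p)$ sequence, and the Chernoff bound applies to the dominating sum. This is exactly the coupling trick already used inside Lemma~\ref{lem:basiconeshot}, so it poses no real obstacle here.
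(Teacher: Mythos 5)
Your proposal is correct and follows essentially the same route as the paper: bound the per-node failure probability in one iteration of {\synchronizedcolortrial} by $O(\log n/\Lambda_{i+1})$ using inequality~(\ref{eq:cliquesuccprob}) together with the hypothesis $e_i(v)=O(\log n)$, handle the dependence among neighbors' failure indicators via the conditional-probability form of the Chernoff bound, and iterate twice starting from $r_i(u)=O(\Lambda_i)$ with $\Lambda_i\le\Lambda_{i+1}^2$ from Obs.~\ref{obs:lambdaprops}(v). Your explicit additive $O(\log n)$ slack in the recursion and the stochastic-domination phrasing are cosmetic variants of what the paper does.
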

\begin{proof}
We condition on the high probability events that for each node $u\in V$, $r_i(u)=O(\Lambda_i)$ (\Cref{lem:nbrsinri}), each $v\in R_i^C$ has palette size $|\Psi(v)|\ge |R_i^C| + 3e_i(v)+\Omega(\Lambda_{i+1})$  and receives a candidate color in every iteration (\Cref{lem:colorpickprobability}). 
Consider a node $u\in V$ and an arbitrary iteration of {\synchronizedcolortrial}. For every neighbor $v\in N(u)\cap R_i$, let $X_v$ be a binary random variable that is 1 iff $v$ stays uncolored in the iteration. 
As it was shown in (\ref{eq:cliquesuccprob}),  $Pr[X_v=1]\le e_i(v)/(2e_i(v)+\Omega(\Lambda_{i+1}))=O(\log n/\Lambda_{i+1})$, 
even when conditioned on an adversarial choice of candidate colors for nodes in $R_i\setminus \{v\}$, and hence, when conditioned on arbitrary values of $X_u$, $u\in R_i\setminus \{v\}$ (since \emph{trying a color} is deterministic). 
Recalling that $r_i(u)=O(\Lambda_i)$, we have  $E\left[\sum_{v\in N(u)\cap R_i} X_v\right]=\frac{O(\Lambda_i\log n)}{\Lambda_{i+1}}$, and since the latter is $\Omega(\log n)$, an application of Chernoff bound (\ref{eq:chernoffmore}) implies that after the iteration, $r_i(u)=|N(u)\cap R_i|=\sum_{v\in N(u)\cap R_i} X_v=\frac{O(\Lambda_i\log n)}{\Lambda_{i+1}}$, w.h.p.
By the same reasoning ($s_v$ does not decrease), after another iteration, $r_i(u)=O(\max(1, \Lambda_i\log n/\Lambda^2_{i+1})\log n)$, w.h.p., which  is in $O(\log^2 n)$, by Obs.~\ref{obs:lambdaprops}, (v).
\end{proof}

\subsection{CONGEST Implementation and Proof of Lemma~\ref{lem:coloringDense}}
\label{ssec:randomGreedyImplementation}
All steps in Alg.~\ref{alg:dense}, except for the candidate color assignment, can be implemented in {\CONGEST} by design. The computation and distribution of $|R_i^C|$ in {\synchronizedcolortrial} can be done in $O(1)$ rounds using standard aggregation tools and the fact that $C$ has diameter $2$ (Lemma~\ref{lem:acdproperties}). 
It remains to show that nodes can indeed send their sets $P(v)$ to their leader in $O(1)$ {\CONGEST} rounds, using the clique overlays of almost-cliques.  
Recall that colored nodes automatically leave $R_i, R_i^C$, reducing their size.
\begin{lemma}\label{lem:rcisize} 
After line \ref{st:smalldeg1} of Alg.~\ref{alg:dense}, $|R_0^C|<\Delta/\log^2 n$, w.h.p.
\end{lemma}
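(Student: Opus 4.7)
The plan is to apply \Cref{lem:oneshotslowdecrease} to $S=R_0^C$ in each of the $O(\log\log n)$ \rct{} iterations of line~\ref{st:lglgnoneshot} of \Cref{alg:dense}, obtaining a constant-factor shrinkage of $|R_0^C|$ per iteration; the subsequent \colorsmalldegreenodes{} call in line~\ref{st:smalldeg1} can only decrease $|R_0^C|$ further. Initially, each uncolored node of $C$ joins $R_0^C$ independently with probability at most $p_0\le 1$, so a Chernoff bound on the $|C|\le (1+\eps)\Delta$ independent indicators gives $|R_0^C|=O(\Delta)$ w.h.p.

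The crux is establishing the palette-size invariant required by \Cref{lem:oneshotslowdecrease}: at every point during line~\ref{st:lglgnoneshot}, each $v \in R_0^C$ satisfies $|\Psi(v)|\ge c\log n$. Since only nodes in $R_0$ can be colored during line~\ref{st:lglgnoneshot}, every neighbor of $v$ in $V\setminus R_0$ stays uncolored throughout, and by the nonnegativity of slack we have $|\Psi(v)|\ge|N(v)\cap(V\setminus R_0)|$. The key structural fact is that the partition probabilities satisfy $1-p_0=\sum_{i\ge 1} p_i\ge p_1=1/\log^{3/2}n$. The layer-assignment coin flips are independent across nodes and place each $u$ into $R_0$ with probability at most $p_0$, irrespective of the \slackgeneration{} outcome; so for each $u\in N(v)$, an independent ``not in $R_0$'' indicator has probability $\ge 1-p_0\ge 1/\log^{3/2}n$. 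Using $|N(v)|\ge (1-\eps)\Delta$ and $\Delta=\omega(\log^4 n)$, a Chernoff bound then yields $|N(v)\cap(V\setminus R_0)|=\Omega(\Delta/\log^{3/2}n)=\omega(\log^{5/2}n)\gg c\log n$, and a union bound over $v\in \cup_i C_i$ establishes the invariant.

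With the invariant in hand, each \rct{} iteration of line~\ref{st:lglgnoneshot} colors at least $|R_0^C|/9$ nodes of $R_0^C$ by \Cref{lem:oneshotslowdecrease}, shrinking $|R_0^C|$ by a factor of $8/9$ as long as $|R_0^C|\ge c\log n$. After $k=O(\log\log n)$ iterations, either $|R_0^C|$ has already dropped below $c\log n<\Delta/\log^2 n$ (using $\Delta=\omega(\log^3 n)$), or $|R_0^C|\le O(\Delta)\cdot (8/9)^k<\Delta/\log^2 n$ for an appropriate choice of the hidden constant. Line~\ref{st:smalldeg1} only removes nodes from $R_0^C$, so the bound is preserved.

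The main obstacle is the palette-size invariant, which rests on $p_0$ being bounded away from $1$ by the design choice $p_1=1/\log^{3/2}n$: it guarantees that an $\Omega(1/\log^{3/2}n)$-fraction of each dense node's neighborhood lies outside $R_0$ and hence remains uncolored throughout line~\ref{st:lglgnoneshot}, preventing palettes from collapsing as many neighbors get permanently colored. Without this slackness in the partition, the analysis would break down precisely because $p_0$ could be arbitrarily close to $1$.
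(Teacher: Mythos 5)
Your overall strategy is sound, and it differs from the paper's in an interesting way. The paper's proof applies \Cref{lem:oneshotslowdecrease} only to the nodes of $R_0^C$ whose uncolored degree is $\Omega(\log n)$ --- for those the palette hypothesis is free, since slack is nonnegative and hence the palette is at least as large as the uncolored degree --- and then relies on line~\ref{st:smalldeg1} to color and remove the low-degree leftovers. You instead establish the palette hypothesis for \emph{all} of $R_0^C$ uniformly, using the neighbors set aside in later layers, so the constant-factor shrinkage applies to the whole set and line~\ref{st:smalldeg1} is not even needed for the bound. Both routes work; yours needs an extra structural fact about the layer partition, the paper's needs the mop-up step.

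One step is misstated, though it is fixable. Nonnegativity of slack gives $|\Psi(v)|\ge \#\{\text{uncolored neighbors of } v\}$, not $|\Psi(v)|\ge |N(v)\cap(V\setminus R_0)|$: the set $N(v)\setminus R_0$ contains neighbors in $V_{sparse}$ (all already colored when \colordensenodes{} starts) and dense neighbors colored during \slackgeneration{}, and these contribute nothing to the palette. The quantity you want is $|N(v)\cap\bigcup_{i\ge1}R_i|$, which does remain untouched throughout lines~\ref{st:lglgnoneshot}--\ref{st:smalldeg1}. Correspondingly, the counting should be restricted to $N(v)\cap C$ (of size at least $(1-\eps)\Delta$ by the ACD definition): each such neighbor survives \slackgeneration{} uncolored w.p.\ at least $19/20$ and then joins a layer $i\ge1$ w.p.\ $1-p_0\ge p_1=\log^{-3/2}n$, independently, so a Chernoff bound gives $|N(v)\cap\bigcup_{i\ge1}R_i^C|=\Omega(\Delta/\log^{3/2}n)=\omega(\log^{5/2}n)$ w.h.p. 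This is essentially \Cref{lem:nbrsinri}(iii) with $i=1$ (where $\Lambda_1=\Delta/\log^{3/2}n$), so you could simply cite that lemma rather than re-derive it. With this repair your proof is correct.
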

\begin{proof}
Initially, $|R_0^C|=O(\Delta)$. By Lemma~\ref{lem:oneshotslowdecrease}, after each application of {\rct} in Step \ref{st:lglgnoneshot}, the number of nodes in $R_0^C$ of degree $\Omega(\log n)$ decreases by a constant factor, w.h.p. Hence, after $O(\log \log n)$ iterations, it decreases by a factor $O(\log^2 n)$, w.h.p. Since we also color (and remove) the small degree vertices in step~\ref{st:smalldeg1}, the size of $R_0^C$ shrinks to $\Delta/\log^2 n$, w.h.p.
\end{proof}

\begin{lemma}\label{lem:runtime}
W.h.p., in each iteration of {\synchronizedcolortrial} in $R_i^C$, $i<t$, each node succeeds to send a sub-palette $P(v)$ of size  $\Pi_i$ to  leader $w_C$ in $O(1)$ rounds. 
\end{lemma}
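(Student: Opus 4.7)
The plan is to reduce the statement to a routing problem on the clique overlay of $C$ constructed in step~\ref{st:overlay} of Alg.~\ref{alg:main}. By \Cref{thm:congestedclique}, this overlay simulates the congested clique on $C$ with constant-factor overhead, so Lenzen's routing becomes available: any pattern in which every node sends and receives $O(|C|)=O(\Delta)$ messages of $O(\log n)$ bits can be executed in $O(1)$ rounds. Since each color fits in one CONGEST message (the colorspace has size $\poly(n)$), the lemma reduces to verifying, for the pattern in which each $v\in R_i^C$ ships its $\Pi_i$-sized sample $P(v)$ to the leader $w_C$, that (a) every sender transmits $O(\Delta)$ messages and (b) the leader receives $O(\Delta)$ messages.

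The per-sender bound (a) is immediate from the definition $\Pi_i=c_P\max(1,|R_i^C|/\Lambda_{i+1})\log n$ together with $\Lambda_{i+1}=\Omega(\log n)$ (Obs.~\ref{obs:lambdaprops}(iv)) and the high-probability bound $|R_i^C|=O(\Lambda_i)=O(\Delta)$ from \Cref{lem:nbrsinri}, which is maintained across iterations since $|R_i^C|$ only shrinks as vertices get permanently colored; this gives $\Pi_i=O(|R_i^C|)=O(\Delta)$.

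The main obstacle, and precisely what the design of the probabilities $p_i$ is tuned for, is bounding the total load $|R_i^C|\cdot\Pi_i$ at the leader by $O(\Delta)$. I would split on the two regimes created by the $\max$ in $\Pi_i$. When $|R_i^C|\ge \Lambda_{i+1}$, the load is $\Theta(|R_i^C|^2\log n/\Lambda_{i+1})=O(\Lambda_i^2\log n/\Lambda_{i+1})$, which Obs.~\ref{obs:lambdaprops}(vi), $\Lambda_i\le \sqrt{\Delta \Lambda_{i+1}/\log n}$, reduces to $O(\Delta)$ for every $0<i<t$. When $|R_i^C|<\Lambda_{i+1}$, the load is $O(|R_i^C|\log n)\le O(\Lambda_{i+1}\log n)=O(\Delta/\sqrt{\log n})$, using $\Lambda_{i+1}\le \Lambda_1=\Delta/\log^{3/2}n$. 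The index $i=0$ is excluded from Obs.~\ref{obs:lambdaprops}(vi), so for it I would invoke \Cref{lem:rcisize}, which guarantees $|R_0^C|\le \Delta/\log^2 n<\Lambda_1$; only the second regime applies, and the load is $O(\Delta/\log n)$.

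With (a) and (b) in hand, Lenzen's routing through the overlay delivers all $P(v)$ to $w_C$ in $O(1)$ CONGEST rounds w.h.p., proving the lemma.
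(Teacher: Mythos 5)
Your proof follows essentially the same route as the paper's: reduce to Lenzen routing over the clique overlay of \Cref{thm:congestedclique}, bound the per-sender load by $O(\Delta)$, and bound the leader's receive load by $O(\Delta)$ via the same case split on $|R_i^C|$ versus $\Lambda_{i+1}$, using Obs.~\ref{obs:lambdaprops}(vi) for $i>0$ and \Cref{lem:rcisize} for $i=0$ (your observation that the large regime cannot occur at $i=0$ is a slightly cleaner packaging of the same facts). The one point the paper verifies that you skip is that each $v$ actually has $\Pi_i$ colors available to sample, i.e.\ $|\Psi(v)|\ge \Pi_i$; this follows from the palette lower bound $|\Psi(v)|\ge |R_i^C|+\Omega(\Lambda_{i+1})=\Omega(\max(1,|R_i^C|/\Lambda_{i+1})\log n)$ of \Cref{lem:colorpickprobability} and should be stated, since it is part of what the lemma asserts.
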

\begin{proof}
We use the clique overlay of almost-clique $C$, provided by \Cref{thm:congestedclique}, together with Lenzen's scheme~\cite{Lenzen13}. 
To this end, we need to ensure that each node has to send and receive $O(\Delta)$ colors, and that it indeed has $\Pi_i= c_P\max(1,|R_i^C|/\Lambda_{i+1})\log n$ colors in its palette. The latter follows from Lemma~\ref{lem:colorpickprobability}: each node $v\in R_i^C$ has palette size at least 
$|R_i^C|+\Omega(\Lambda_{i+1})=\Omega(\max(1,|R_i^C|/\Lambda_{i+1})\log n)$, w.h.p., where the equality holds as long as $\Lambda_{i+1}>c\log n$, for a large enough constant $c$ (provided by Obs.~\ref{obs:lambdaprops}).

Consider a fixed iteration of {\synchronizedcolortrial}. Note that $\Pi_i=O(\Delta)$, since $|R_i^C|<(1+\eps)\Delta$ and $\Lambda_{i+1}=\Omega(\log n)$, as observed above; hence, it remains to show that $w_C$ has to receive only $O(\Delta)$ colors. 
There are $|R_i^C|$ uncolored nodes in $C$, each sending $c_P \max(1,|R_i^C|/\Lambda_{i+1})\log n$ colors to $w_C$. If $|R_i^C|<\Lambda_{i+1}$, then  $w_C$ has to receive $O(|R_i^C|\log n)=O(\Delta)$ colors, since by Lemma~\ref{lem:rcisize},  $|R_0^C|<\Delta/\log^2 n$, and by Lemma~\ref{lem:nbrsinri}, $|R_i^C|=O(\Lambda_i)=O(\Delta/\log n)$, for $i>0$, w.h.p. Otherwise, 
$w_C$ has to receive $O(|R_i^C|^2\log n/\Lambda_{i+1})$ colors, which again is in $O(\Delta)$: for $i>0$, we have $\Lambda_i^2\log n/\Lambda_{i+1}\le \Delta$,  by Obs.~\ref{obs:lambdaprops},~\ref{obsi:sending}, while for  $i=0$, we have, by Lemma~\ref{lem:rcisize},  $|R_0^C|<\Delta/\log^2 n$, w.h.p., and $\Lambda_1=\Delta/\log^{3/2} n$, so $|R_0^C|^2\log n/\Lambda_1<\Delta$.
This completes the proof. 
\end{proof}

\begin{proof}[Proof of Lemma~\ref{lem:coloringDense}]

\Cref{lem:exdegreeSmall,lem:degreeSmall} imply that w.h.p., after processing layer $i<t$, each node $u\in V$ has at most $O(\log^2 n)$ neighbors in $R_i$. Similarly, \Cref{lem:nbrsinri} implies that every node $u\in V$ has $O(\Lambda_t)=O(\log^2 n)$ neighbors in $R_t$, w.h.p. Altogether, after processing all layers, every node $u\in V$ has at most $O(t\log^2 n)=O(\log\log\Delta \cdot \log^2 n)$ neighbors in $V\setminus V_{sparse}$; hence, {\colorsmalldegreenodes} (line~\ref{st:smalldeg3}) successfully colors $G[V\setminus V_{sparse}]$, w.h.p.

Partitioning into layers can be done without communication.  
The preprocessing in line~\ref{st:lglgnoneshot} takes $O(\log\log n)$ rounds, followed by $T$ rounds for {\colorsmalldegreenodes}, which is applied on a graph with $n$ nodes, maximum degree $\poly \log(n)$ and color space size $\poly(n)$.
We iterate through $O(\log\log \Delta)$ layers, and each layer makes $O(1)$ iterations of {\rct} (each taking $O(1)$ rounds), followed by $O(\log\log\Delta)$ iterations of {\synchronizedcolortrial} (each taking $O(1)$ rounds).
Finally, we apply {\colorsmalldegreenodes} to a graph with $n$ nodes, maximum degree $\poly\log(n)$ and color space size $\poly(n)$.
Thus, the runtime is in 
$O\big(\log\log n + (\log\log \Delta)^2 +T\big)$.
\end{proof}

\section{Computation of ACD and Clique Overlay }\label{sec:ACD}

\subsection{ACD Computation in Congest}
\label{ssec:acdComputation}
A distributed algorithm \emph{computes an ACD} if each $C_i$ has a unique AC-ID that is known to all nodes in $C_i$, and each node in $V_{sparse}$ knows it is in $V_{sparse}$.  The goal of the remainder of this section is to prove the following lemma.
\begin{lemma}[ACD computation] 
\label{lem:ACD}
There is an $O(1)$ round randomized CONGEST algorithm that, given a graph $G=(V,E)$ with maximum degree $\Delta\ge c\log^{2}n$, for a sufficiently large constant $c$, computes an $(\eps,\eta)$-ACD with $\eps=1/3$ and $\eta=\eps/108$, w.h.p. 
\end{lemma}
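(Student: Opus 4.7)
The plan is to implement in $O(1)$ rounds the bootstrapping idea sketched in \Cref{sec:technicaloverview}: use a sparse sample $S$ to detect pairwise similarity at a ``skeleton'' level, then attach everyone else. First, I would sample $S\subseteq V$ by including each vertex independently with probability $p=c_0/\sqrt{\Delta}$ for a large constant $c_0$. Since $\Delta=\Omega(\log^{2}n)$, Chernoff gives that w.h.p.\ every vertex $v$ with $d_v=\Omega(\sqrt{\Delta}\log n)$ satisfies $|N(v)\cap S|=\Theta(\sqrt{\Delta})$; in particular every candidate member of a dense cluster (those with $d_v\geq(1-\eps)\Delta$) has $\Theta(\sqrt{\Delta})$ representatives in $S$. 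Condition on this event.

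Next, in a single round, every $v\in V$ picks a uniformly random $S$-neighbor $\sigma(v)$ and broadcasts its ID to $N(v)$. Each $S$-vertex $u$ thereby observes, for every $u'\in S$ at distance at most $2$ from it, the count $X_{u,u'}=|\{w\in N(u):\sigma(w)=u'\}|$. Conditioned on $S$, the summands of $X_{u,u'}$ are independent Bernoulli indicators, and
\[
\E[X_{u,u'}]=\sum_{w\in N(u)\cap N(u')}\frac{1}{|N(w)\cap S|}=\Theta\!\left(\frac{|N(u)\cap N(u')|}{\sqrt{\Delta}}\right).
\]
Thus $\E[X_{u,u'}]=\Theta(\sqrt{\Delta})$ when $u,u'$ are ``very similar'' (say $|N(u)\cap N(u')|\geq(1-\eps')\Delta$ for a suitable $\eps'$) and at most $\eps\sqrt{\Delta}/4$ when $|N(u)\cap N(u')|\leq\eps\Delta/4$. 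Because $\sqrt{\Delta}=\Omega(\log n)$, Chernoff concentration plus a union bound over the $O(|S|^{2})=O(n^{2}/\Delta)$ ordered $S$-pairs lets $u$ correctly classify each relevant $S$-neighbor as similar or dissimilar, w.h.p.

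Given the skeleton, I would declare two $S$-nodes clique-mates iff each classifies the other as similar, yielding a partition of (a subset of) $S$ into groups, each of which selects its minimum-ID member as the AC-ID. Since an almost-clique has $G$-diameter at most~$2$ by \Cref{lem:acdproperties}, one more round over $G$ suffices to propagate the AC-ID to every $S$-member of its group. Finally, each non-$S$ vertex $v$ listens once more to the AC-IDs carried by its $S$-neighbors and joins the unique AC-ID that appears on more than $(1-\eps)|N(v)\cap S|$ of them, if one exists; otherwise it declares itself in $V_{sparse}$. The vote is accurate w.h.p.\ because $|N(v)\cap S|=\Omega(\log n)$ for every relevant~$v$.

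The main obstacle is threading the three ACD conditions through this single pointer-broadcast round so as to certify both directions at once: (i) every $C_i$ must satisfy $(1-\eps)\Delta\leq|C_i|\leq(1+\eps)\Delta$ together with $|N(v)\cap C_i|\geq(1-\eps)\Delta$ for every $v\in C_i$, and (ii) no $v\in V_{sparse}$ may be $\eta$-dense for $\eta=\eps/108$. The two-order-of-magnitude gap between $\eta$ and $\eps$ is exactly what absorbs the constant-factor slack in the similarity classification and in the vote: any $\eta$-dense $v$ has $(1-\eta)\Delta$ friends that are mutually $2\eta$-similar, so their $S$-representatives fall into a single skeleton group and $v$ sees that group's AC-ID on nearly all of its $S$-neighbors; conversely, a vertex that joins $C_i$ inherits $(1-O(\eta))\Delta$ intra-$C_i$ neighbors from the vote. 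A constant-round pruning pass that removes vertices whose intra-$C_i$ degree falls below $(1-\eps)\Delta$ (reassigning them to $V_{sparse}$) and dissolves any group whose surviving size drops below $(1-\eps)\Delta$ then certifies all the clique bounds. All steps use $O(1)$ rounds of $O(\log n)$-bit messages, proving the lemma.
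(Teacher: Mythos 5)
Your overall architecture is the paper's: sample $S$ at rate $\Theta(1/\sqrt{\Delta})$, have every vertex broadcast one random $S$-neighbor's ID, let $S$-nodes count repetitions to detect similarity, anchor cliques at minimum IDs, let the remaining vertices vote, and prune at the end. But three of your steps hide genuine gaps that the paper's proof is specifically built to close.

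First, the gossip step needs a normalization you omit. You write $\E[X_{u,u'}]=\sum_{w\in N(u)\cap N(u')}1/|N(w)\cap S|$ and call it $\Theta(|N(u)\cap N(u')|/\sqrt{\Delta})$, but your conditioning only controls $|N(w)\cap S|$ for vertices $w$ of degree $\Omega(\sqrt{\Delta}\log n)$. A common neighbor $w$ of small degree can have $|N(w)\cap S|=O(1)$ and contribute $\Omega(1)$ rather than $O(1/\sqrt{\Delta})$ to the sum, so a pair $u,u'$ with only $\sqrt{\Delta}$ low-degree common neighbors can have $\E[X_{u,u'}]=\Theta(\sqrt{\Delta})$ and be falsely classified as similar. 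This kills the soundness direction of the detection (the analogue of part (2) of \Cref{lem:friendsDetected}) and with it the bounds $|C_i|\le(1+\eps)\Delta$ and $|N(v)\cap C_i|\ge(1-\eps)\Delta$. \Cref{alg:acd} fixes this by having $v$ forward its chosen ID only with probability $\min(1,|S\cap N(v)|/(2\sqrt{\Delta}))$, which makes every common neighbor report a fixed ordered pair with probability exactly $1/(2\sqrt{\Delta})$ regardless of its degree.

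Second, ``declare two $S$-nodes clique-mates iff each classifies the other as similar, yielding a partition of (a subset of) $S$ into groups, each of which selects its minimum-ID member'' is not an $O(1)$-round local procedure. Approximate mutual similarity is not transitive, the ``groups'' are connected components of an auxiliary graph that no single node knows, and electing the minimum ID of a component is circular: one needs the component to compute its minimum and the minimum to define the component. The paper replaces this by a purely local rule --- each $S$-node that detects enough incident friendship \emph{edges} joins $S_{dense}$ and broadcasts the minimum ID among its detected dense friends, and a vertex adopts an ID only if it hears it at least $(1-11\delta)\sqrt{\Delta}$ times --- and the fact that a dense node together with \emph{all} of its friends then agrees on a single AC-ID is the technical heart of the argument (\Cref{lem:delta4}), not something that can be asserted from ``their $S$-representatives fall into a single skeleton group.'' Relatedly, your final pruning reassigns individual low-intra-degree vertices to $V_{sparse}$; to certify condition 1 of \Cref{def:acd} you must guarantee that nothing $\eta$-dense ever lands in $V_{sparse}$, which the paper obtains by discarding only \emph{whole} components and proving (\Cref{lem:ACDFormation}) that every component containing a $\delta/4$-dense node passes the filter, so that discarded components contain no $\eta$-dense node at all. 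Your per-vertex pruning carries no such guarantee.
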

The algorithm {\computeacd}, described in Alg.~\ref{alg:acd}, computes an ACD, using a bootstrapping-style procedure. We begin by sampling a subset $S$ of vertices, where each vertex is independently sampled w.p.\ $1/\sqrt{\Delta}$ (line~\ref{st:acdsample}). The idea is to use the whole graph to relay information between the $S$-nodes, in order to efficiently find out which pairs of $S$-nodes belong to the same almost-clique (to be constructed), then let each node  $v\in V\setminus S$ decide whether it joins such an $S$-almost-clique $C$, depending on the size of $N(v)\cap C$.  To construct the $S$-almost-cliques, each node $v\in V$ chooses a random $S$-neighbor (ID) and broadcasts it to its neighbors (line~\ref{st:acdgossip}). A key observation is that if two $S$-nodes are similar, i.e., share many neighbors, then they will likely receive each other's IDs many times, which allows them to detect similarity (with some error), as well as friend edges (lines~\ref{st:acdsimilar}--\ref{st:acdfriend}). The $S$-nodes that have detected many friend edges are likely to be dense, and join a set $S_{dense}$ (line~\ref{st:acddense}). Roughly speaking, the connected components induced by $S_{dense}$ and detected friend edges form the $S$-almost-cliques.  Each node in $S_{dense}$ selects the minimum ID of an $S_{dense}$-neighbor and sends it to the other neighbors, thus proposing to form an almost-clique with that ID (line~\ref{st:acdsendmin}). Each node in $V$ that receives the same ID from many neighbors also joins the almost-clique with that ID (line~\ref{st:formclique}).  
 We prove that each connected component $C$ that contains a sufficiently dense node has the desired properties (w.h.p.): $|C|\approx \Delta$, and each node $v\in C$ has $\approx \Delta$ neighbors in $C$. The components not satisfying the required properties are moved into $V_{sparse}$ (line~\ref{st:acdfilter}), as they do not contain a dense node (w.h.p.).

Throughout this section we let $\delta=\eps/27<1/80$, and require $\Delta\geq c\log^{2}n$, for a constant $c>0$, as in Lemma~\ref{lem:ACD}. Our claims hold for smaller $\delta$, provided that $\Delta>(c/\delta^4)\log^2 n$, for a large enough constant $c>0$.
For any $\gamma\in [0,1]$, we let $F_{\gamma}$ be the set of $\gamma$-friend edges in $G$, and  let $G_{F_{\gamma}}=(V,F_{\gamma})$ be the graph induced by $F_\gamma\cup V$.

\begin{algorithm}[H]\caption{{\computeacd} ($\delta\in (0,1/80)$)}\label{alg:acd}
\begin{algorithmic}[1]
\STATE \textbf{Sample} Each node selects itself into $S$ independently w.p.\ $1/\sqrt{\Delta}$ and notifies its neighbors.\label{st:acdsample}
\STATE \textbf{Gossip} Each node $v\in V$ selects uniformly at random one ID of an $S$-neighbor (if it has any) and forwards it with probability $\min(1, |S\cap N(v)|/(2\sqrt{\Delta}))$ to each of its $S$-neighbors.\label{st:acdgossip}
\STATE \textbf{Detect similar nodes} Each $S$-node $v$ decides it is \emph{approximately $\delta$-similar} to node $u$ if $v$ receives the ID of $u$ at least $(1-2\delta)\sqrt{\Delta}/2$ times.\label{st:acdsimilar}
\STATE \textbf{Detect friendship edges} The set $F\subseteq E$ consists of the edges between vertices in $S$ where at least one endpoint identified the other one as approximately $\delta$-similar.\label{st:acdfriend}

\STATE \textbf{Detect density} Each $S$-node with more than  $(1-2\delta)\sqrt{\Delta}$ incident $F$-edges joins a set $S_{dense}$.\label{st:acddense}

\STATE{Each node  $v\in S_{dense}$  selects the minimum ID of a node $u\in S_{dense}$ such that $\{u,v\}\in F$ and broadcasts it to its neighbors in $G$.}\label{st:acdsendmin}

\STATE{\textbf{Form  cliques} Any vertex $v\in V$ (including vertices in $S$) that receives the same ID at least  $(1-11\delta)\sqrt{\Delta}$ times adapts this ID as its AC-ID, otherwise $v$  joins $V_{sparse}$.\label{st:formclique}}
\STATE{\textbf{Remove bad cliques} Use the AC-ID node $w$ of each connected component $C$ as a leader node to ensure: If $|C|<(1-\delta)\Delta$ or there is a node $u\in C$ with $|N(u)\cap C|<(1-27\delta)\Delta$ then all nodes in $C$ join $V_{sparse}$.}\label{st:acdfilter}
\end{algorithmic}
\end{algorithm}

In the following analysis, we let $H=(S_{dense}, F\cap S_{dense}\times S_{dense})$ denote the subgraph induced by dense nodes ($S_{dense}$) and friendship edges ($F$).
We first show that vertices in $S$ detect friend edges between them (with some error). 

\begin{lemma}[$S$-friend edges are detected] 
\label{lem:friendsDetected}
Let $\delta< 1/4$ and $\Delta>(c/\delta^4)\log^2 n$, for a large enough constant $c>0$.
W.h.p., (1) any $\delta$-friend edge between vertices in $S$ is contained in $F$, and (2) any edge in $F$ is a $4\delta$-friend edge. 
\end{lemma}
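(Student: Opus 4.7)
The plan is to count, for a fixed pair $u,v\in S$, how many times $v$ receives $u$'s ID, and to show this count is concentrated around $|N(u)\cap N(v)|/(2\sqrt{\Delta})$. Comparing the concentrated value to the threshold $(1-2\delta)\sqrt{\Delta}/2$ used in step~\ref{st:acdsimilar} then yields both claims in a symmetric fashion.

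First I would establish the structural event $\mathcal{E}$ that $|S\cap N(w)|\le 2\sqrt{\Delta}$ for every $w\in V$. Since $|S\cap N(w)|$ is binomial with mean at most $\sqrt{\Delta}$, a Chernoff bound combined with a union bound over nodes gives $\mathcal{E}$ w.h.p., using $\Delta=\Omega(\log^2 n)$. The point of $\mathcal{E}$ is that under it, the forwarding probability $\min(1,|S\cap N(w)|/(2\sqrt{\Delta}))$ equals $|S\cap N(w)|/(2\sqrt{\Delta})$, so for any $w\in N(u)\cap N(v)$ with $u,v\in S$, the probability that $w$ selects $u$'s ID \emph{and} forwards it to $v$ is exactly $\tfrac{1}{|S\cap N(w)|}\cdot\tfrac{|S\cap N(w)|}{2\sqrt{\Delta}}=\tfrac{1}{2\sqrt{\Delta}}$, independent of the random size $|S\cap N(w)|$.

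Condition on the sampling of $S$ (in particular on $u,v\in S$) and on $\mathcal{E}$. The indicators $X_w$ that $v$ receives $u$'s ID via $w$, for $w\in N(u)\cap N(v)$, are then independent Bernoullis of mean $1/(2\sqrt{\Delta})$, using independence of the random choices made at different relays. A Chernoff bound yields $|\sum_w X_w-\mu|\le \delta\sqrt{\Delta}/2$ w.h.p., where $\mu=|N(u)\cap N(v)|/(2\sqrt{\Delta})$; this uses $\Delta\ge c\log^2 n/\delta^4$, since the additive gap $\delta\sqrt{\Delta}$ relative to $\mu=\Theta(\sqrt{\Delta})$ produces a deviation exponent of order $\delta^2\sqrt{\Delta}$.

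For claim~(1), a $\delta$-friend edge has $\mu\ge (1-\delta)\sqrt{\Delta}/2$, so the count at $v$ exceeds $(1-2\delta)\sqrt{\Delta}/2$, triggering the similarity decision and placing $\{u,v\}$ in $F$. For claim~(2) I argue by contradiction: if $\{u,v\}\in F$ but $|N(u)\cap N(v)|<(1-4\delta)\Delta$, then $\mu<(1-4\delta)\sqrt{\Delta}/2$ for the receptions at both endpoints, and concentration pushes each reception count strictly below $(1-2\delta)\sqrt{\Delta}/2$, so neither endpoint would declare the other approximately $\delta$-similar, contradicting $\{u,v\}\in F$. A union bound over the $O(n^2)$ candidate pairs finishes the argument. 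The main delicate point is cleanly decoupling the two sources of randomness — the sampling of $S$ from each relay's independent coin flips — and conditioning on $\mathcal{E}$ does exactly this by making every useful per-$w$ probability equal to the uniform value $1/(2\sqrt{\Delta})$, which is precisely the value for which the algorithmic threshold $(1-2\delta)\sqrt{\Delta}/2$ is tuned.
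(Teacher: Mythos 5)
Your proposal is correct and follows essentially the same route as the paper: condition on every node having few $S$-neighbors so that the selection probability $1/|S\cap N(w)|$ and the forwarding probability $|S\cap N(w)|/(2\sqrt{\Delta})$ cancel to give each common neighbor an independent relay probability of exactly $1/(2\sqrt{\Delta})$, then apply Chernoff and a union bound, comparing the concentrated count of $|N(u)\cap N(v)|/(2\sqrt{\Delta})$ against the threshold $(1-2\delta)\sqrt{\Delta}/2$ in both directions. The only differences are cosmetic (you cap $|S\cap N(w)|$ at $2\sqrt{\Delta}$ where the paper uses $(1+\delta)\sqrt{\Delta}$, and you phrase the deviation additively rather than multiplicatively), and your explicit decoupling of the randomness of $S$ from the relays' coins is a point the paper leaves implicit.
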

\begin{proof}
Let $u,v\in S$, and let $R_{u,v}$ be the set of common neighbors of $u$ and $v$. The rest of the proof is conditioned on the event that every node in $G$ has at most $(1+\delta)\sqrt{\Delta}$ neighbors in $S$. By Lemma~\ref{lem:nbrhdpreserved}, this holds w.h.p. (using the bound on $\Delta$). Given this, every node in $R_{u,v}$ independently sends the ID of $u$ to $v$ w.p.\ $1/2\sqrt{\Delta}$. Thus, the expected number of times $v$ receives the ID of $u$ is  $|R_{u,v}|/2\sqrt{\Delta}$.

\textbf{(1):}
Assume that $\{u,v\}$ is a $\delta$-friend edge.  It holds that $|R_{u,v}|\ge (1-\delta)\Delta$,  and hence $v$ receives the ID of $u$ at least $(1-\delta)\sqrt{\Delta}/2$ times in expectation, and at least $(1-\delta)^2\sqrt{\Delta}/2\ge (1-2\delta)\sqrt{\Delta}/2$ times, w.h.p., by a Chernoff bound (\ref{eq:chernoffless}). Claim (1)  follows by a union bound over all $\delta$-dense edges.

\textbf{(2):}
 Now, assume that $\{u,v\}$ is not a $4\delta$-friend edge. Then $|R_{u,v}|< (1-4\delta)\Delta$, and by a similar application of (\ref{eq:chernoffmore}), $v$ receives the ID of $u$ less than $(1+\delta)|R_{u,v}|/2\sqrt{\Delta}< (1+\delta)(1-4\delta)\sqrt{\Delta}/2\le (1-2\delta)\sqrt{\Delta}/2$ times, w.h.p. Similarly, $u$ receives the ID of $v$ less than $(1-2\delta)\sqrt{\Delta}/2$ times, w.h.p. Claim (2) now follows by a union bound over all edges that are not $4\delta$-dense. 
\end{proof}

In line~\ref{st:acddense} of Alg.~\ref{alg:acd} $S$ nodes use the number of incident friend edges to detect whether they are dense. The next lemma provides the guarantees of this step.
\begin{lemma}[Density detection in $S$]\label{lem:Sdense}
Let $\delta<4$ and $\Delta>(c/\delta^4)\log^2 n$, for a large enough constant $c>0$. W.h.p., (1) if $v\in S$ is  $\delta/2$-dense  then it is contained in $S_{dense}$, and (2) if $v\in S$ is  not $4\delta$-dense  then it is not contained in $S_{dense}$. 
\end{lemma}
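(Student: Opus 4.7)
The plan is to reduce both claims to Chernoff bounds on the number of an $S$-node's friend-neighbors that end up in the sample $S$, combined with the $F$-edge detection guarantees of Lemma~\ref{lem:friendsDetected}. Throughout, the threshold for joining $S_{dense}$ is $(1-2\delta)\sqrt{\Delta}$ incident $F$-edges, so the strategy in both parts is to sandwich the $F$-degree of $v$ between sampled friend counts via Lemma~\ref{lem:friendsDetected}, then show that the gap between the density parameters ($\delta/2$ vs.\ $2\delta$, and $2\delta$ vs.\ $4\delta$) leaves enough slack for Chernoff concentration.

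For Part (1), suppose $v \in S$ is $\delta/2$-dense. Then $v$ has at least $(1-\delta/2)\Delta$ neighbors that are $\delta/2$-friends of $v$, each of which is also a $\delta$-friend. Each such neighbor lies in $S$ independently with probability $1/\sqrt{\Delta}$, so the expected number of $\delta$-friends of $v$ in $S$ is at least $(1-\delta/2)\sqrt{\Delta}$. A Chernoff lower-tail bound, using $\Delta > (c/\delta^4)\log^2 n$ to guarantee the additive slack $(\delta/2)\sqrt{\Delta} = \Omega(\sqrt{\log n}/\delta)$ concentrates, gives at least $(1-\delta)\sqrt{\Delta}$ such $\delta$-friends of $v$ in $S$ w.h.p. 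By Lemma~\ref{lem:friendsDetected}(1), every $\delta$-friend edge with both endpoints in $S$ lies in $F$; hence $v$ has at least $(1-\delta)\sqrt{\Delta} \ge (1-2\delta)\sqrt{\Delta}$ incident $F$-edges and joins $S_{dense}$.

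For Part (2), suppose $v \in S$ is not $4\delta$-dense, so $v$ has fewer than $(1-4\delta)\Delta$ $4\delta$-friends among its neighbors. By Lemma~\ref{lem:friendsDetected}(2), every edge of $F$ is a $4\delta$-friend edge, so any $F$-edge incident to $v$ goes to a $4\delta$-friend of $v$ that happens to be in $S$. The expected number of $4\delta$-friends of $v$ in $S$ is strictly less than $(1-4\delta)\sqrt{\Delta}$, and a Chernoff upper-tail bound pushes this to at most $(1-3\delta)\sqrt{\Delta} < (1-2\delta)\sqrt{\Delta}$ w.h.p.; thus $v$ fails the $S_{dense}$ threshold. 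A union bound over the $O(n)$ nodes in $S$ completes both parts.

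There is no real obstacle here—the lemma is essentially a bookkeeping exercise once Lemma~\ref{lem:friendsDetected} is in hand. The only delicate point is arithmetic: checking that the three density parameters ($\delta/2$, $2\delta$, $4\delta$) are spread widely enough that, after a multiplicative $(1\pm\delta)$ Chernoff distortion in the sampling step, the $\delta/2$-dense case still clears the threshold $(1-2\delta)\sqrt{\Delta}$ from above while the non-$4\delta$-dense case stays strictly below it. This is where the lower bound $\Delta > (c/\delta^4)\log^2 n$ is used: it makes the additive Chernoff errors smaller than $\delta\sqrt{\Delta}$ with high probability, so the separation between the two regimes survives.
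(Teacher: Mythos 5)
Your proposal is correct and follows essentially the same route as the paper: both parts sandwich the $F$-degree of $v$ by combining the friend-edge detection guarantees of Lemma~\ref{lem:friendsDetected} with Chernoff concentration of the number of ($\delta/2$- resp.\ $4\delta$-)friends of $v$ that land in $S$ (the paper packages the Chernoff step as Lemma~\ref{lem:nbrhdpreserved}, but the computation is the same). The arithmetic with the thresholds $(1-2\delta)\sqrt{\Delta}$ and $(1-3\delta)\sqrt{\Delta}$ also matches the paper's.
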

\begin{proof}
\textbf{(1):} Let $v$ be a $\delta/2$-dense node and $R_v\subseteq N(v)$ be the set of its $\delta/2$-friends. We know that $|R_v|\ge (1-\delta/2)\Delta$. By Lemma~\ref{lem:friendsDetected}, all edges $\{u,v\}$ with $u\in R_v\cap S$ are in $F$, w.h.p. By Lemma~\ref{lem:nbrhdpreserved} (using the bound on $\Delta$), $|R_v\cap S|\ge (1-\delta)|R_v|/\sqrt{\Delta}\ge (1-\delta)(1-\delta/2)\sqrt{\Delta}\ge (1-2\delta)\sqrt{\Delta}$, and hence $v$ is in $S_{dense}$, w.h.p. Claim (1) now follows by a union bound over all $\delta/2$-dense nodes.

\textbf{(2):} Let $v$ be a node that is not $4\delta$-dense. Let $R_v\subseteq N(v)$ be the set of $4\delta$-friends of $v$. It holds that $|R_v|< (1-4\delta)\Delta$. By Lemma~\ref{lem:nbrhdpreserved} (using the bound on $\Delta$), $|R_v\cap S|<(1+\delta)(1-4\delta)\Delta\le (1-3\delta)\sqrt{\Delta}$, w.h.p.
By Lemma~\ref{lem:friendsDetected}, for $u\in (N(v)\setminus R_v)\cap S$ the edge $\{u,v\}$ is not in $F$, w.h.p. Hence, $v$ has at most $|R_v\cap S|< (1-3\delta)\sqrt{\Delta}$ incident edges in $F$, and hence $v$ does not join $S_{dense}$. Claim (2) now follows by a union bound over all nodes that are not $4\delta$-dense. 
\end{proof}

We will use the following observation from~\cite{ACK19} in the proof of Lemma~\ref{lem:delta4}.

\begin{observation}[\cite{ACK19}]
\label{obs:manyFriends} Let $\gamma<1/2$,
 $v$ be an $\gamma$-dense vertex,  and $S_v\subseteq N(v)$ be the $(1-\gamma)\Delta$ neighbors of $v$ along $\gamma$-friend edges. Every vertex in $S_v$ is $2\gamma$-dense.
\end{observation}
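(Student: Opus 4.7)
The plan is to verify that each vertex $u\in S_v$ has at least $(1-2\gamma)\Delta$ neighbors $w$ satisfying $|N(u)\cap N(w)|\ge (1-2\gamma)\Delta$, by exhibiting a concrete candidate set of such $w$'s, namely $S_v\cap N(u)$ itself.

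First I would fix an arbitrary $u\in S_v$ and count how large $S_v\cap N(u)$ is. Since $S_v\subseteq N(v)$ (every $\gamma$-friend of $v$ is a neighbor of $v$) and also $N(u)\cap N(v)\subseteq N(v)$, both sets live inside $N(v)$, which has size at most $\Delta$. Because $|S_v|=(1-\gamma)\Delta$ by assumption and $|N(u)\cap N(v)|\ge (1-\gamma)\Delta$ (from $u\in S_v$ being a $\gamma$-friend of $v$), inclusion–exclusion inside $N(v)$ gives
\[
|S_v\cap N(u)|\ \ge\ |S_v|+|N(u)\cap N(v)|-|N(v)|\ \ge\ (1-2\gamma)\Delta.
\]

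Next I would show that every $w\in S_v\cap N(u)$ is actually a $2\gamma$-friend of $u$. Such a $w$ satisfies $w\in N(u)$, and, since $w\in S_v$, also $|N(w)\cap N(v)|\ge (1-\gamma)\Delta$. Applying \Cref{obs:deltaintersections} with $A=N(u)$, $B=N(w)$, $C=N(v)$ (so $a=b=\gamma$ and $c=0$) yields
\[
|N(u)\cap N(w)|\ \ge\ |N(u)\cap N(w)\cap N(v)|\ \ge\ (1-2\gamma)\Delta,
\]
so $\{u,w\}$ is indeed a $2\gamma$-friend edge.

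Combining the two steps, $u$ has at least $(1-2\gamma)\Delta$ neighbors joined to it via $2\gamma$-friend edges, so $u$ is $2\gamma$-dense. There is no real obstacle here: the whole argument is a double application of the ``three-set intersection'' principle packaged in \Cref{obs:deltaintersections}, once to bound $|S_v\cap N(u)|$ and once to certify that each such neighbor is a $2\gamma$-friend. The only thing to watch is that $S_v$ and $N(u)\cap N(v)$ are constrained to lie in the common ambient set $N(v)$, which is what makes the first inclusion–exclusion step tight enough to give the desired $(1-2\gamma)\Delta$ bound.
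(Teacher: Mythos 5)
Your proof is correct and follows essentially the same route as the paper's: both arguments apply the three-set intersection bound (Obs.~\ref{obs:deltaintersections}) twice, once to show $|S_v\cap N(u)|\ge(1-2\gamma)\Delta$ and once to show that every such common neighbor is $2\gamma$-similar (hence a $2\gamma$-friend) of $u$. No gaps.
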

\begin{proof}
Let $w,w'\in S_v$. Since the edges $\{v,w\},\{v,w'\}$ are $\gamma$-friend edges, $|N(w)\cap N(v)|\ge (1-\gamma)\Delta$ and $|N(w')\cap N(v)|\ge (1-\gamma)\Delta$. By Obs.~\ref{obs:deltaintersections}, $|N(w)\cap N(w')|\ge (1-2\gamma)\Delta$. 
Similarly, since $|S_v|\ge (1-\gamma)\Delta$ and $|N(w)\cap N(v)|\ge (1-\gamma)\Delta$, we have $|N(w)\cap S_v|\ge (1-2\gamma)\Delta$. Thus, every node in $S_v$ has at least $(1-2\gamma)\Delta$ neighbors in $S_v$, all $2\gamma$-similar to it, i.e., it has at least $(1-2\gamma)\Delta$ of $2\gamma$-friends, as required. 
\end{proof}

In the next lemmas we show that the computed components satisfy the properties of an ACD.
\begin{lemma}\label{lem:delta4}
Let $\delta<1/20$ and $\Delta>(c/\delta^4)\log^2 n$, for a large enough constant $c>0$. Let $v$ be a $\delta/4$-dense node in $V$ and $R_v$ the set of its $\delta/4$-friends.  All vertices in $R_v\cup \{v\}$ output the same AC-ID, w.h.p.
\end{lemma}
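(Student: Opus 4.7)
The plan is to pick a canonical representative $u^*$ (a specific low-ID sampled node near $v$) and show that for every $x\in R_v\cup\{v\}$, at least $(1-11\delta)\sqrt{\Delta}$ of the $V$-neighbors of $x$ broadcast $\mathrm{id}(u^*)$, so that line~\ref{st:formclique} assigns $\mathrm{id}(u^*)$ as the AC-ID of $x$.

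First I would upgrade densities: since $v$ is $\delta/4$-dense, Obs.~\ref{obs:manyFriends} with $\gamma=\delta/4$ gives that every $w\in R_v$ is $\delta/2$-dense; $v$ itself is trivially $\delta/2$-dense. By Lemma~\ref{lem:Sdense}(1), every vertex in $(R_v\cup\{v\})\cap S$ therefore lands in $S_{dense}$ w.h.p. Using $|R_v|\ge(1-\delta/4)\Delta$ and the hypothesis $\Delta\ge c\log^2 n/\delta^4$, a Chernoff bound yields $|R_v\cap S|=\Theta(\sqrt{\Delta})$ w.h.p. I let $u^*$ be the minimum-ID vertex in $(R_v\cup\{v\})\cap S$, which by the above lies in $S_{dense}$.

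Next, for each $x\in R_v\cup\{v\}$ I would count neighbors $y\in N(x)$ that broadcast $\mathrm{id}(u^*)$. Since $x$ and $u^*$ are each $\delta/4$-similar to $v$, Obs.~\ref{obs:deltaintersections} (with $C=N(v)$) yields $(1-O(\delta))\Delta$ common neighbors $y$ of $x$ and $u^*$ that additionally lie in $R_v$. Each such $y$ is $\delta/2$-dense and $\delta/2$-similar to $u^*$ (again by Obs.~\ref{obs:deltaintersections}), so $\{y,u^*\}$ is a $\delta$-friend edge; once $y$ lands in $S$, Lemma~\ref{lem:Sdense}(1) places $y$ in $S_{dense}$ and Lemma~\ref{lem:friendsDetected}(1) puts $\{y,u^*\}$ in $F$. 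A Chernoff bound on $S$-sampling then gives $\ge(1-O(\delta))\sqrt{\Delta}$ such $y\in N(x)\cap S_{dense}$ with $\{y,u^*\}\in F$.

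The main obstacle is verifying that each such $y$ actually broadcasts $\mathrm{id}(u^*)$, i.e., that $u^*$ minimises IDs over \emph{all} $F\cap S_{dense}$-neighbors of $y$, not merely those in $R_v\cup\{v\}$. By Lemma~\ref{lem:friendsDetected}(2), every $F$-neighbor $z$ of $y$ is a $4\delta$-friend of $y$; combined with $y\in R_v$, Obs.~\ref{obs:deltaintersections} forces $z$ to be $O(\delta)$-similar to $v$, confining possible competitors of $u^*$ to a small ``closure'' of $R_v$. I would either redefine $u^*$ as the minimum-ID node in that closure (which still lies in $R_v\cap S$ w.h.p., since $R_v$ forms a constant fraction of the closure and sampling into $S$ is uniform), or show directly that such competitors cannot simultaneously be $\delta$-similar to $y$ for the adversary to cause disagreement. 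Tracking the $O(\delta)$ slack through the Obs.~\ref{obs:deltaintersections} and Chernoff steps yields the threshold $(1-11\delta)\sqrt{\Delta}$, and a union bound over $x\in R_v\cup\{v\}$ completes the argument.
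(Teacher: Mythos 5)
You correctly isolate the crux of the argument --- making sure that the designated low-ID node is the minimum over each broadcaster's \emph{entire} $F\cap S_{dense}$-neighborhood, not just over $R_v\cup\{v\}$ --- but neither of your proposed fixes closes this gap. First, the claim that the minimum-ID node of the closure ``still lies in $R_v\cap S$ w.h.p.\ since $R_v$ forms a constant fraction of the closure'' is false: IDs are arbitrary, not random, so if the (few) closure nodes outside $R_v$ happen to carry the smallest IDs, the minimum-ID sampled node of the closure lies outside $R_v$ with probability $1-(1-1/\sqrt{\Delta})^{\Theta(\delta\Delta)}\approx 1$. Worse, even granting $u^*$ in the closure but outside $R_v$, its similarity to a candidate $y\in R_v$ degrades to roughly $5\delta$, which is beyond the $\delta$-friend detection guarantee of Lemma~\ref{lem:friendsDetected}(1) (and within the regime where part (2) says the edge is \emph{not} detected), so $\{y,u^*\}$ need not be in $F$ at all and $y$ need not broadcast $\mathrm{id}(u^*)$. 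Second, the alternative ``competitors cannot be $\delta$-similar to $y$'' is simply not true: a node $z\notin R_v$ with a tiny ID can perfectly well be a $\delta$-friend of many of the $y$'s, in which case they all broadcast $\mathrm{id}(z)$ instead. So the proposal, as written, does not yield a proof.

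The paper resolves this differently and avoids any global representative. It argues \emph{pairwise}: for each $v'\in R_v$ it sets $S_{v,v'}=R_v\cap N(v')\cap S$, lets $u$ be the minimum-ID node of the full $H$-neighborhood $N_H(S_{v,v'})$ (where $H$ is the graph on $S_{dense}$ with edge set $F$), and restricts attention to $X\subseteq S_{v,v'}$, the nodes actually $H$-adjacent to $u$. By minimality of $u$ over $N_H(S_{v,v'})\supseteq N_H(y)$ for every $y\in X$, each $y\in X$ broadcasts $\mathrm{id}(u)$ \emph{by construction} --- no competitor can exist. The counting effort then goes into showing $|X|\ge(1-11\delta)\sqrt{\Delta}$, by picking one $H$-neighbor $w\in S_{v,v'}$ of $u$ and chaining Obs.~\ref{obs:deltaintersections} with Lemma~\ref{lem:nbrhdpreserved} to show $u$ has many $F$-neighbors inside $S_{v,v'}$. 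Although $u$ a priori depends on $v'$, consistency across all $v'$ comes for free: since $|N(v)\cap S|\le(1+\delta)\sqrt{\Delta}$, a node can receive at most one ID at least $(1-11\delta)\sqrt{\Delta}$ times, so $v$ adopting $\mathrm{id}(u(v'))$ for every $v'$ forces all these IDs to coincide. If you want to repair your write-up, this ``take the minimum over the whole $H$-neighborhood and only count its actual $H$-neighbors'' step is the missing idea.
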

\begin{proof}
 Fix $v'\in R_v$ and introduce the following sets:
\begin{align*}
  R_{v'} & =R_v\cap N(v')\\
S_{v,v'} & =R_{v'}\cap S\subseteq N(v)\cap N(v')\cap S	
\end{align*}
By Obs.~\ref{obs:manyFriends}, all nodes in $R_v$ are $\delta/2$-dense and as all nodes in $S_{v,v'}$ are contained in $R_v$, Lemma~\ref{lem:Sdense}, (1) implies that $S_{v,v'}\subseteq S_{dense}$. Thus, all nodes in $S_{v,v'}$ participate in the ID broadcasting in Step 6. 
Let $u$ be the node of minimum ID in $N_H(S_{v,v'})$, and let $X\subseteq S_{v,v'}$ be the set of nodes that have node $u$ as a neighbor in $H$. By the definition of $u$ (no node in $X$ has a smaller ID neighbor in $H$) all nodes in $X$ forward $u$'s ID to $v$ and $v'$ in Step 6. In the remainder of the proof, we show that $|X|\ge (1-11\delta)\sqrt{\Delta}$, w.h.p. The claim then follows by a union bound over all $v'$.

 Note that $|R_v|\ge (1-\delta/4)\Delta$ and $|N(v')\cap N(v)|\ge (1-\delta/4)\Delta$ (since $v,v'$ are $\delta/4$-friends), hence Obs.~\ref{obs:deltaintersections} implies that $|R_{v'}|\ge (1-\delta/2)\Delta$. 
  Let $w\in S_{v,v'}\subseteq R_v$ be a neighbor of $u$ in $H$, i.e., $\{u,w\}\in F$ ($w$ exists by the definition of $u$).
Since $w$ is a $\delta/4$-friend of $v$, $|N(w)\cap N(v)|\geq (1-\delta/4)\Delta$ holds, and since $|R_{v'}|\ge (1-\delta/2)\Delta$, Obs.~\ref{obs:deltaintersections} implies that $|N(w)\cap R_{v'}|\ge (1-\delta)\Delta$. 

Therefore, by Lemma~\ref{lem:nbrhdpreserved} (using the bound on $\Delta$), $|N(w)\cap S_{v,v'}|\ge (1-\delta)^2\sqrt{\Delta}\ge (1-2\delta)\sqrt{\Delta}$. Also by Lemma~\ref{lem:nbrhdpreserved}, $|N(w)\cap S|\le (1+\delta)\sqrt{\Delta}$. 
Edge $\{u,w\}$ is contained in $F$ (by the definition of $u$ and $w$) and by  Lemma~\ref{lem:friendsDetected}, (2) it is a  a $4\delta$-friend edge, w.h.p. Thus, $|N(w)\cap N(u)|\ge (1-4\delta)\Delta$. By Lemma~\ref{lem:nbrhdpreserved},  $|N(w)\cap N(u)\cap S| \ge (1-\delta)(1-4\delta)\sqrt{\Delta}\ge (1-5\delta)\sqrt{\Delta}$. 
Applying Obs.~\ref{obs:deltaintersections} to sets $C=N(w)\cap S$, $A=N(u)$ and $B=S_{v,v'}$, we see that $|N(w)\cap N(u)\cap S_{v,v'}|\ge (1-8\delta)\sqrt{\Delta}$.

From Lemma~\ref{lem:nbrhdpreserved}, $|N(u)\cap S|\le (1+\delta)\sqrt{\Delta}$. Let $T$ be the set of neighbors of $u$ along $F$-edges. Since $u$ is in $S_{dense}$, we have $|T|\ge (1-2\delta)\sqrt{\Delta}$. Applying Obs.~\ref{obs:deltaintersections} with $C=N(u)\cap S_{v,v'}$, $A=T$, and $B=S_{v,v'}$, we see that
$u$ is adjacent to at least $(1-11\delta)\sqrt{\Delta}$ nodes in $S_{v,v'}$, along edges belonging to $F$, i.e., $|X|\geq (1-11\delta)\sqrt{\Delta}$.
\end{proof}

\begin{lemma}\label{lem:commonw}
Let $\delta<1/13$ and $\Delta>(c/\delta^4)\log^2 n$, for a large enough constant $c>0$. Let $C$ be a component with AC-ID $w$, and $v\in C$. W.h.p.,  $|N(v)\cap N(w)|\ge (1-13\delta)\Delta$.
\end{lemma}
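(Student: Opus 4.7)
The plan is to trace back the condition under which $v$ adopted $w$ as its AC-ID in Step~\ref{st:formclique} of Alg.~\ref{alg:acd}, and then use the sampling-concentration tool (Lemma~\ref{lem:nbrhdpreserved}) to lift a lower bound on the intersection $N(v)\cap N(w)\cap S$ to a lower bound on $|N(v)\cap N(w)|$.

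First, I would unpack what ``$v$ adopted the AC-ID $w$'' means. By Step~\ref{st:formclique}, $v$ received the ID $w$ at least $(1-11\delta)\sqrt{\Delta}$ times. Each such receipt is delivered, in Step~\ref{st:acdsendmin}, by a distinct neighbor $u\in N(v)\cap S_{dense}$ which selected $w$ as the minimum-ID $F$-neighbor in $S_{dense}$; in particular $\{u,w\}\in F$, so $u\in N(w)$ and $u\in S$. Collecting these $u$'s into a set $T_v\subseteq N(v)\cap N(w)\cap S$ gives
\[
|N(v)\cap N(w)\cap S|\;\ge\;|T_v|\;\ge\;(1-11\delta)\sqrt{\Delta}.
\]

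Next I would argue by contradiction using the sampling lemma. Suppose $|N(v)\cap N(w)|<(1-13\delta)\Delta$. Since the elements of $S$ are independently sampled with probability $1/\sqrt{\Delta}$ and $\Delta$ is large enough, Lemma~\ref{lem:nbrhdpreserved} gives w.h.p.
\[
|N(v)\cap N(w)\cap S|\;\le\;(1+\delta)(1-13\delta)\sqrt{\Delta}\;<\;(1-11\delta)\sqrt{\Delta},
\]
where the last inequality holds for $\delta<1/13$. This contradicts the lower bound from the previous paragraph, so $|N(v)\cap N(w)|\ge (1-13\delta)\Delta$, w.h.p.

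The entire argument is really just bookkeeping on top of Lemma~\ref{lem:nbrhdpreserved}, so there is no serious obstacle; the main point to be careful about is that each of the $(1-11\delta)\sqrt{\Delta}$ receipts counted by Step~\ref{st:formclique} truly comes from a \emph{distinct} neighbor in $N(v)\cap N(w)\cap S$ (which is fine because each $u\in S_{dense}$ broadcasts a single ID in Step~\ref{st:acdsendmin}, and $F\subseteq E$ forces every $F$-neighbor of $w$ to lie in $N(w)$). One should also take a final union bound over all such pairs $(v,w)$ to make the high-probability statement uniform, but this is standard.
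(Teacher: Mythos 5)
Your proposal is correct and follows essentially the same route as the paper: both extract at least $(1-11\delta)\sqrt{\Delta}$ distinct broadcasters of $w$'s ID lying in $N(v)\cap N(w)\cap S$ (using $F\subseteq E$), and then invoke Lemma~\ref{lem:nbrhdpreserved} to rule out $|N(v)\cap N(w)|<(1-13\delta)\Delta$ by contradiction. The only difference is cosmetic bookkeeping (your explicit distinctness check and final union bound), which the paper leaves implicit.
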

\begin{proof}
Let $v$ be a node such that there is a subset $S_v\subseteq N(v)\cap S_{dense}$ of size at least $(1-11\delta)\sqrt{\Delta}$ that broadcast the same ID $w$ in Step 6. 
Thus, $w$ and $v$ have $(1-11\delta)\sqrt{\Delta}$ common neighbors in $S$, and hence $|N(v)\cap N(w)|\ge (1-13\delta)\Delta$, as otherwise by Lemma~\ref{lem:nbrhdpreserved} (using the bound on $\Delta$), $|N(u)\cap N(w)\cap S|<(1+\delta)|N(v)\cap N(w)|/\sqrt{\Delta}<(1-11\delta)\sqrt{\Delta}$ (using $\delta<1/13$), w.h.p.
\end{proof}

\begin{lemma}[ACD forming]\label{lem:ACDFormation}
Let $\delta<29$ and $\Delta>(c/\delta^4)\log^2 n$, for a large enough constant $c>0$. The following hold w.h.p.:
\begin{enumerate}[label=(\arabic*)]
\item  every $\delta/4$-dense node in $V$ adapts an AC-ID. For every component $C$ that contains a $\delta/4$-dense node, 
\item   $|C|\ge (1-\delta/4)\Delta$,
\item  every node $v\in C$ has at least $(1-27\delta)\Delta$ neighbors in $C$,
\item  $|C|\le (1+25\delta)\Delta$.
\end{enumerate}
\end{lemma}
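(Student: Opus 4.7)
The plan is to derive the four claims in sequence from Lemmas \ref{lem:delta4} and \ref{lem:commonw}, together with one small inclusion--exclusion trick and a double-counting argument, taking a union bound at the end over all nodes and components to ensure everything holds w.h.p.

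Claim (1) is immediate from Lemma \ref{lem:delta4}: every $\delta/4$-dense node $v$ is shown there to output the same AC-ID as all of $R_v$, so in particular $v$ adopts an AC-ID. For claim (2), fix a $\delta/4$-dense node $v \in C$. By Lemma \ref{lem:delta4}, $R_v \cup \{v\} \subseteq C$, and since $|R_v| \ge (1-\delta/4)\Delta$ by the definition of $\delta/4$-density, we conclude $|C| \ge (1-\delta/4)\Delta$.

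Claim (3) is the technical heart and I expect it to be the main obstacle: the natural tool, Lemma \ref{lem:commonw}, bounds $|N(v) \cap N(w)|$ rather than $|N(v) \cap C|$, so I need to bridge $N(w)$ and $C$. The plan is a two-step argument. First, pick a $\delta/4$-dense seed $v^* \in C$ and apply Lemma \ref{lem:commonw} to it, getting $|N(v^*) \cap N(w)| \ge (1-13\delta)\Delta$. Combined with $R_{v^*} \subseteq N(v^*) \cap C$ of size $\ge (1-\delta/4)\Delta$, inclusion--exclusion inside $N(v^*)$ (of size $\le \Delta$) yields
\[
|C \cap N(w)| \ge |R_{v^*} \cap N(w)| \ge (1-\delta/4)\Delta + (1-13\delta)\Delta - \Delta \ge (1-14\delta)\Delta.
\]
Second, for arbitrary $v \in C$, Lemma \ref{lem:commonw} gives $|N(v) \cap N(w)| \ge (1-13\delta)\Delta$, and a second inclusion--exclusion inside $N(w)$ (of size $\le \Delta$) yields
\[
|N(v)\cap C| \ge |N(v) \cap N(w) \cap C| \ge (1-13\delta)\Delta + (1-14\delta)\Delta - \Delta = (1-27\delta)\Delta.
\]

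For claim (4), I will double-count edges between $C$ and $N(w)$. Every $v \in C$ contributes at least $(1-13\delta)\Delta$ such edges by Lemma \ref{lem:commonw}, while each $u \in N(w)$ contributes at most $\Delta$ and $|N(w)| \le \Delta$, giving $|C|(1-13\delta)\Delta \le \Delta^2$, i.e., $|C| \le \Delta/(1-13\delta) \le (1+25\delta)\Delta$ for sufficiently small $\delta$ (which holds under the hypothesis of the lemma). Finally, a union bound over the $\le n$ candidate seed nodes $v^*$ and $\le n$ vertices $v$ inside each of the $\le n$ components ensures that all applications of Lemmas \ref{lem:delta4} and \ref{lem:commonw} hold simultaneously w.h.p.
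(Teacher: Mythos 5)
Your proposal is correct. Claims (1)--(3) follow essentially the paper's own route: (1) and (2) are read off from Lemma~\ref{lem:delta4}, and (3) is the same two-step inclusion--exclusion with the same intermediate constant $(1-14\delta)\Delta$ --- the paper phrases the first step as bounding $|R_u\cap N(w)|$ for the dense seed $u$ and then intersects $N(v)\cap N(w)$ with $R_u\subseteq C$ inside $N(w)$, which is exactly your bridge from $N(w)$ to $C$. The only genuine divergence is in claim (4). The paper writes $A=N(w)$, $B=(N(A)\setminus A)\cap C$, and double-counts the edges between $A$ and $B$ (at most $13\delta\Delta|A|$ from the $A$ side, at least $(1-13\delta)\Delta|B|$ from the $B$ side) to get $|B|<\frac{13\delta}{1-13\delta}|A|<25\delta|A|$ and hence $|C|\le(1+25\delta)\Delta$. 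You instead double-count all (ordered) adjacent pairs between $C$ and $N(w)$: at least $|C|(1-13\delta)\Delta$ from the $C$ side by Lemma~\ref{lem:commonw}, at most $\Delta^2$ from the $N(w)$ side, giving $|C|\le\Delta/(1-13\delta)$. This is cleaner --- it avoids the paper's decomposition of $C$ into distance-1 and distance-2 shells around $w$ --- but it does require the final arithmetic check $1/(1-13\delta)\le 1+25\delta$, i.e.\ $\delta\le 12/325$; this holds under the intended hypothesis (the ``$\delta<29$'' in the statement is evidently a typo for $\delta<1/29$, and the algorithm uses $\delta=\eps/27=1/81$), so you should state that verification explicitly rather than just asserting ``sufficiently small $\delta$''. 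Both approaches yield the same constant; yours trades a slightly tighter range of admissible $\delta$ for a shorter argument.
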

\begin{proof}

\textbf{(1), (2):} Immediately follows from Lemma~\ref{lem:delta4}.

\textbf{(3):} Let $C$ be the component with AC-ID $w$, and $v\in C$ be an arbitrary node in it. By Lemma~\ref{lem:commonw}, $|N(v)\cap N(w)|\ge (1-13\delta)\Delta$ holds w.h.p. Let $u\in C$ be the $\delta/4$-dense node in $C$. We know from Lemma~\ref{lem:delta4} that the set $R_u$ of at least $(1-\delta/4)\Delta$ neighbors of $u$ is contained in $C$. Since $|N(u)\cap N(w)|\ge (1-13\delta)\Delta$ and $|N(v)\cap N(w)|\ge (1-13\delta)\Delta$, we have $|R_u\cap N(w)|\ge (1-14\delta)\Delta$, and hence $|N(v)\cap N(w)\cap R_u|\ge (1-27\delta)\Delta$. The latter implies the claim since $R_u\subseteq C$.

\textbf{(4):} Following the proof of (3), let $w$ be the AC-ID of a component $C$ whose size we want to bound. By the definition of the algorithm, every node in $C$ is within distance 2 from $w$. Let $A=N(w)$ and $B=(N(A)\setminus A)\cap C$. Note that $A\cup B=C$. 
By Lemma~\ref{lem:commonw}, for each node $v\in C$, $|N(v)\cap N(w)|>(1-13\delta)\Delta$, w.h.p. 
Thus, for each node $u\in A$, $|N(u)\cap A|\ge (1-13\delta)\Delta$ and hence $|N(u)\cap B|<13\delta\Delta$. On the other hand, for each node $v\in B$, $|N(v)\cap A|\ge (1-13\delta)\Delta$. The former implies that the number of edges between $A$ and $B$ is at most $13\delta|A|$, while the latter implies that it is at least $(1-13\delta)|B|$. Hence, $|B|<\frac{13\delta}{1-13\delta} |A|<25\delta|A|$ which implies the claim. 
\end{proof}

\begin{proof}[Proof of Lemma~\ref{lem:ACD}]

The output is well-defined: Every node $v$ adopts at most one AC-ID, since by Lemma~\ref{lem:nbrhdpreserved}, $|N(v)\cap S|\le (1+\delta)\sqrt{\Delta}$, and $(1-11\delta)\sqrt{\Delta}>(1+\delta)\sqrt{\Delta}/2$, for $\delta<20$.
The main properties follow immediately from the algorithm definition and Lemma~\ref{lem:ACDFormation}, recalling that $\delta=\frac{\eps}{27}$.

Steps 1 to 6 clearly take $O(1)$ rounds. Step 7 takes $O(1)$ rounds, since by the definition of the algorithm, every node in a component with AC-ID  $w$ is within distance 2 from $w$; hence, $w$ can aggregate the size of $C$ and the degrees of nodes within $C$ in $O(1)$ rounds.
\end{proof}

\begin{remark}
The constants in Lemma~\ref{lem:acdproperties} can be improved by deriving the properties claimed there directly from our algorithm. However, that would make the exposition more complicated.
\end{remark}

\subsection{Clique Overlay for Almost-Cliques}\label{sec:congestedcliqueoverlay}

Consider a $\Delta$-clique $K_{\Delta}$ with a set of \emph{routing requests}, where each node has $O(\Delta)$ messages to send to or receive from other nodes in $K_\Delta$ (a node can have several messages addressed to the same destination). 
In his celebrated work, Lenzen~\cite{Lenzen13} designed an algorithm that allows an execution of any given set of routing requests in $O(1)$ {\CONGEST} rounds. In order to achieve a similar result for almost-cliques, we simulate a clique over a given almost-clique. Recall that an almost-clique has diameter 2 (\Cref{lem:acdproperties}). A \emph{clique overlay} of an almost-clique $C$ is a collection $O$ of length-2 paths, containing a path between any pair of non-adjacent nodes in $C$. The \emph{congestion} of an overlay $O$ is the maximum, over the edges $e$ of $G$, number of occurrences of $e$ in $C$. A distributed algorithm computes an overlay $O$ of an almost clique $C$ if for every  non-edge $\{u,v\}$ in $C$,  nodes $u,v$ know the node $w$ that forwards messages from $v$ to $u$ and vice versa in $O$.

Given a clique overlay of congestion 2 for an almost clique $C$, we can combine it with Lenzen's algorithm to execute in $O(1)$ rounds any given set of routing requests where each node sends and receives $O(\Delta)$ messages.

We give a randomized algorithm, called {\computecliqueoverlay}, that, for a given almost clique $C$,  computes a clique overlay with congestion 2 in $O(\log\log n)$ rounds. The main idea is to model the construction of an overlay as a list coloring problem, where missing edges of $C$ are the vertices of a graph, while the nodes of $C$ are the colors: a vertex $uv$ picking color $w$ means $w$ is used in the overlay to forward messages from $u$ to $v$ and back. As we note below, the runtime can be improved by using slightly more involved algorithms.

\begin{theorem}\label{thm:congestedclique}
Assume that $\eps\le 1/15$ and $\eps\Delta>c\log^2 n$, for a large enough constant $c>0$. There is a $O(\log\log n)$-round \CONGEST algorithm that, for any almost-clique $C$ with a leader node $w_C$, computes a clique overlay $O$ of congestion 2.
\end{theorem}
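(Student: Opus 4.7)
The plan is to encode the overlay problem as a list coloring problem on an auxiliary graph $H_C$ and solve it via iterations of a random color trial, in the spirit of the sparse coloring of \Cref{sec:sparsecoloring}. Let $H_C$ have one vertex per non-edge of $G[C]$, put an $H_C$-edge between two non-edges that share an endpoint in $C$, and give each $e=\{u,v\}$ the palette $\Psi(e)=(N(u)\cap N(v)\cap C)\times\{1,2\}$, that is, two ``slot-copies'' of each possible relay. Any proper list coloring $\varphi$ of $H_C$ then yields an overlay in which $w$ relays $e=\{u,v\}$ whenever $\varphi(e)=(w,i)$; since at most one non-edge incident to any given $u$ can carry the color $(w,i)$, each $G$-edge $\{u,w\}$ is used by at most two length-$2$ paths (one per slot), and the overlay has congestion $\le 2$.

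By \Cref{lem:acdproperties} the palette size is $|\Psi(e)|\ge 2(1-3\eps)\Delta$, while the degree of $e=\{u,v\}$ in $H_C$ is at most $(a(u)-1)+(a(v)-1)\le 4\eps\Delta$. For $\eps\le 1/15$ this gives a slack of at least $(2-10\eps)\Delta\ge 4\Delta/3$, which under the hypothesis $\eps\Delta>c\log^2 n$ is also $\omega(\log^2 n)$. I then run a synchronized random color trial on $H_C$: each uncolored $e$ independently picks a uniformly random pair from its current palette and keeps it when no $H_C$-neighbor picked the same pair. The per-iteration failure probability of $e$ is bounded by the $H_C$-degree of $e$ divided by the slack, so \Cref{lem:basiconeshot} applies and $O(\log\log\Delta)$ iterations suffice to drive the uncolored $H_C$-degree of every vertex to $O(\log n)$; the remaining instance, whose slack is still $\Omega(\log^2 n)$, is then finished in $O(\log\log n)$ further iterations by leveraging the very large slack-to-degree ratio, in much the same way that {\colorsparsenodes} passes control to a fast finisher in \Cref{sec:sparsecoloring}.

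The main obstacle is realising each iteration in $O(1)$ \CONGEST rounds, since the two endpoints of any non-edge are, by definition, non-adjacent in $G$. The intended implementation has each non-edge $\{u,v\}$ owned by its smaller-ID endpoint $u$, which picks a random candidate relay $w\in N(u)\cap C$ and sends a single-edge request to $w$; since $a(u)\le 2\eps\Delta$ while $u$ has $(1-\eps)\Delta$ neighbors in $C$, the requests can be scheduled so that every edge $\{u,w\}$ carries at most one of them. Each relay $w$ then checks whether $w\in N(v)$, which holds with probability at least $1-4\eps$ by \Cref{lem:acdproperties}, groups the surviving requests by the destination $v$, retains at most two per $v$ to respect the slot budget, and forwards the outcomes along $\{w,v\}$ and back along $\{w,u\}$ with $O(1)$ messages per edge per iteration. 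The technical heart of the proof will be verifying this scheduling and showing that, despite losing some proposals to the ``$w\notin N(v)$'' event and to the slot-capping at $w$, every non-edge still receives a usable candidate with the probability required by \Cref{lem:basiconeshot}.
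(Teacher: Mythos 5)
Your high-level reduction (vertices of an auxiliary graph $H_C$ are the non-edges of $G[C]$, colors are common neighbors acting as relays, each non-edge handled by one endpoint that proposes relays from its own neighborhood) is the same as the paper's, but there are three concrete problems. First, the slot-doubling $\Psi(e)=(N(u)\cap N(v)\cap C)\times\{1,2\}$ destroys the congestion bound rather than helping it: a $G$-edge $\{u,w\}$ is traversed both by non-edges incident to $u$ that choose relay $w$ (colors $(w,1)$ and $(w,2)$, one each by properness) \emph{and} by non-edges incident to $w$ that choose relay $u$ (colors $(u,1)$ and $(u,2)$), so you only get congestion $4$, not the claimed $2$. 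The doubling is also unnecessary: already without it the palette size is at least $(1-3\eps)\Delta$ against $H_C$-degree at most $4\eps\Delta$, which for $\eps\le 1/15$ gives slack at least three times the degree; dropping the slots recovers congestion $2$ by exactly the "one path of color $w$ at $u$, one path of color $u$ at $w$" argument.

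The more serious gap is the use of \Cref{lem:basiconeshot}. Its hypothesis requires the per-iteration failure probability of a vertex to be at most $|N_i(v)|/s_v$, i.e., to vanish as the uncolored degree shrinks. But, as you yourself note in the implementation, the handler $u$ of a non-edge $\{u,v\}$ cannot compute the true palette $N(u)\cap N(v)\cap C$ (that would require learning $N(v)$ for a non-neighbor $v$); it can only sample from the apparent palette $N(u)\cap C$, and the proposal is wasted whenever $w\notin N(v)$, which happens with probability $\Theta(\eps)$ -- a \emph{constant} independent of the current degree. Hence the failure probability per single trial never drops below a constant, the hypothesis of \Cref{lem:basiconeshot} is violated, and the doubly-exponential degree reduction does not occur. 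This is precisely why the paper settles for a constant success probability per round, argues only that the number of uncolored non-edges per handler drops by a constant factor each round (so $O(\log\log n)$ rounds reduce it to $\eps\Delta/(3\log n)$), and then finishes with a single round in which each handler tries $3\log n$ relays \emph{in parallel} for each of its few remaining non-edges, driving the failure probability to $2^{-\Omega(\log n)}$; the paper's closing remark explicitly says parallel trials are needed to suppress the apparent-vs-real palette failure. Relatedly, your finisher does not close: with degree $O(\log n)$ and slack $\Omega(\log^2 n)$, $O(\log\log n)$ further single trials leave a vertex uncolored with probability about $(1/\log n)^{O(\log\log n)}=\exp(-O(\log^2\log n))$, which is far from $n^{-\Omega(1)}$, and handing off to {\colorsmalldegreenodes} as in {\colorsparsenodes} is neither within the $O(\log\log n)$ budget nor obviously implementable on the virtual graph $H_C$, whose adjacent vertices can be handled by non-adjacent nodes of $G$.
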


\begin{proof}

We assume the nodes in $C$ know the IDs of all non-neighbors in $C$. This can be achieved via standard aggregation methods, e.g., by forming a breadth-first-search tree rooted at the leader $w_C$, which  aggregates the sizes of subtrees, and counts and enumerates the nodes from $1$ to $|C|$. As a result, each node knows its new ID, as well as $|C|$, and in one additional round of communication, also knows the IDs of its neighbors; hence, it also knows the IDs of its non-neighbors. 

We reduce the construction of a congestion-2 overlay to a coloring problem with high bandwidth.
%
We form a graph $H$ with a vertex $uv$ for each non-edge $\{u,v\}$ in $G[C]$. Vertices $uv$, $wt$ of $H$ are adjacent if  
$\{u,v\} \cap \{w,t\} \ne \emptyset$. We refer here to vertices of $G$ as \emph{nodes} to distinguish from \emph{vertices} of $H$. 
The colorspace of our coloring problem is the set $[|C|]$ of recomputed node IDs in $C$. The initial palette $\Psi(uv)$ of a vertex $uv$ is the set $N(u)\cap N(v) \cap C$ of common neighbors of $u$ and $v$ in $C$. 

Observe that solving the constructed list coloring instance for $H$ yields a congestion-2 clique overlay $O$ for $C$. Namely, each non-edge (vertex) $uv$ is assigned a 2-path via some node $w$ (color), and each edge $\{u,w\}$ of $C$ is only used in 2-paths of color $u$ or $w$, and so it appear at most twice in $O$. Further note that every vertex in $H$ has degree at most $2\eps\Delta$, while the palette size is at most $(1-3\eps)\Delta$ (using Obs.~\ref{obs:deltaintersections}). As a result, assuming that $\eps\le 1/15$, each vertex has slack three times larger than its degree, that is,  we have a \emph{sparse node coloring} instance, with a caveat: the computation needs to be done in $G[C]$.

We have each vertex $uw$ of $H$ \emph{handled} by the incident node $u$ with higher ID. The handler $u$ has an imperfect information about the palette: during the algorithm, it maintains an \emph{apparent palette} $\Psi'(uv)=\{w\in N(u) : uw\notin O\}$, which contains some unusable colors, as $u$ does not know which of its neighbors $w\in N(u)$ are adjacent to $v$ ($\Psi'(uv)\setminus\Psi(uv)$ consists of those neighbors that are not adjacent to $v$). We let the handler nodes simulate {\rct} in $H$, with the apparent palettes of vertices, but only retain colors that belong to the original palettes. The execution is done in a sequence of pairs of rounds. In the first round of each pair, node $u$ picks a uniformly random color $c_v$ from each palette $\Psi'(uv)$ for each vertex $uv$ it handles. If the same color is sampled for two or more vertices $uv,uv',\dots$ then they are not colored in this round. For each remaining $uv$, $u$ sends the ID of $v$ to $c_v$ (note that $u$ sends a single message to $c_v$). In the second round of the pair, consider a node $w$ that receives node IDs $v_1,v_2,\dots,v_t$ from neighbors $u_1,u_2,\dots,u_t$. We assume that for each $i$, $w$ is adjacent to both $u_i$ and $v_i$; otherwise $w$ is not a usable color for $u_iv_i$, and ignores that pair. If $t=1$, $w$ broadcasts $u_1,v_1$ to its neighbors, indicating that $u_1v_1$ is colored with $w$, otherwise $w$ broadcasts a message indicating that color $w$ was not taken in the given round. All handler nodes update their palettes accordingly.

It is easy to see that the above correctly simulates {\rct} in $H$ with apparent palettes. 
A vertex $uv$ is colored in a given round if its random color lands in $\Psi(v)$ and is different from the colors picked by its neighbors; hence, if $d_0,d$ denote the number of uncolored neighbors of $uv$ at the beginning of the algorithm and in the current round, respectively, then $uv$ is colored w.p.
\[
\frac{|\Psi'(uv)|-|\Psi'(uv)\setminus\Psi(uv)|-d}{|\Psi'(uv)|}\ge 1-\frac{d_0+d}{|\Psi'(uv)|}\ge 1/2\ ,
\]
where the last inequality follows from the observation that each vertex has slack $3d_0$ (and slack never decreases). Moreover, the probability bound clearly holds even when conditioned on adversarial candidate color choices of neighbors. Thus, also using that $\Delta=\Omega(\log^2 n)$, with a large enough coefficient, we can apply Chernoff bound~(\ref{eq:chernoffmore}) to show that while there are at least $\frac{\eps\Delta}{3\log n}$ uncolored vertices, in each iteration, the number of uncolored vertices  decreases by a constant factor, w.h.p. (we use the assumption that $\eps\Delta=\Omega(\log^2 n)$, with a large enough coefficient); therefore, in $O(\log\log n)$ rounds there are at most $\frac{\eps\Delta}{3\log n}$ vertices remaining to be colored, w.h.p. To finish coloring, in one more round, each node $u$ tries $3\log n$ colors in parallel, for each vertex $uv$ it handles. The probability of success of each trial is at least $1/2$, by the same analysis as above, since we can view parallel trials by the same vertex as trials by $3\log n$ different vertices; even so, each vertex will have degree bounded by $2\eps\Delta$, and slack at least three times that, as before. Thus, each parallel trial for a given vertex succeeds w.p. at least $1/2$, irrespective of the outcome of other trials (including the trials for other vertices). Since there are at least $3\log n$ trials per vertex, Chernoff bound~(\ref{eq:chernoffless}) implies that each vertex is successfully colored, w.h.p. 
\end{proof}
\begin{remark}
In the proof above, the ability of a node $u$ to locally resolve  conflicts between the candidate color picks by the vertices $uv$ it handles makes the coloring problem more like a \LOCAL coloring, since every vertex can now try many colors in parallel. This  suggests that the multi-trial technique of~\cite{SW10} can be applied to get a $O(\log^* n)$ time algorithm. A quicker improvement can be obtained by using a variant of \Cref{lem:basiconeshot}, which, given that all nodes have equal initial slack $\Omega(\Delta)$, reduces their degrees to $O(\Delta/\log n)$ in $O(\log\log\log n)$ rounds. Note that  we need to use parallel trials here too, in order to suppress the failure probability caused by the difference between the apparent and real palettes.
\end{remark}

\section{Coloring Small Degree Graphs}\label{sec:smalldegree}
In this section, we show how to $(deg+1)$-list color a graph in $O(\log\Delta)+\poly\log\log(n)$ rounds. This result is efficient in the small degree case, i.e., when the maximum degree is bounded by $\poly\log(n)$, with runtime reducing to $O(\log^5\log n)$ rounds. We call the obtained algorithm {\colorsmalldegreenodes}.
\begin{theorem}\label{thm:smallDegree}
Let $H$ be a subgraph of $G$ with maximum degree $\Delta_H$. Assume that each node $v$ of degree $d_v$ in $H$ has a palette $\Psi(v)\subseteq [U]$ of size $|\Psi(v)|\ge d_v+1$ from a colorspace of size $U=\poly(n)$. 
There is a $O(\log\Delta_H+\log^5\log n)$-round randomized algorithm that w.h.p. colors $H$ by assigning each node $v$ a color from its palette.
\end{theorem}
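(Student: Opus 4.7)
The plan is to follow the shattering framework of \cite{BEPSv3}, combined with the network decomposition of \cite{RG19,GGR20} and the parallel RCT simulation idea of \cite{Ghaffari2019}, adapting each step to work correctly in \CONGEST. First, run {\RCT} on $H$ for $O(\log \Delta_H)$ iterations. Since each palette satisfies $|\Psi(v)|\ge d_v+1$, the standard shattering analysis of \cite{BEPSv3} guarantees that, w.h.p., after this phase the subgraph induced by uncolored nodes decomposes into connected components, each of size at most $N=\poly\log n$, and moreover each such component lies inside a $O(\log n)$-radius ball of $H$. This preshattering runs in $O(\log \Delta_H)$ \CONGEST rounds, since {\RCT} is a single-message exchange per round.

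Second, apply the $\polyloglog(n)$-round \CONGEST network decomposition of \cite{RG19,GGR20} to the graph induced by the uncolored components. This partitions them into $O(\log\log n)$ color classes of clusters, each cluster of weak diameter $O(\log^4 \log n)$ and contained in a single component. Processing one color class at a time requires $O(\log^5\log n)$ rounds in total. Within a color class, all clusters operate in parallel, and the task reduces to $(deg+1)$-list coloring each cluster (of at most $N$ nodes with weak diameter $D=O(\log^4\log n)$), using only communication through the underlying graph.

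Third, to color one cluster, simulate $\Theta(\log n)$ independent copies of {\RCT} in parallel, each running for $\Theta(\log N)=O(\log\log n)$ iterations, following \cite{Ghaffari2019}. The analysis of {\RCT} on an $N$-node graph shows that a single instance colors all cluster nodes w.p.\ at least $1-1/\poly(N)$, so with $\Theta(\log n)$ independent instances, at least one succeeds w.h.p. Using aggregation over the cluster (in $O(D)$ rounds), the cluster leader identifies a successful instance and broadcasts its index, committing every node to the corresponding coloring. This is where the key \CONGEST obstacle appears: sending $\Theta(\log n)$ candidate colors per round along each edge requires much more than $O(\log n)$ bits. The standard fix, following \cite{HKMN20,Ghaffari2019}, is to first have the cluster compute a \emph{color space reduction} that maps $\bigcup_v \Psi(v)$ injectively into a space of size $\poly(N)$; after this, each color is encoded in $O(\log\log n)$ bits, and $\Theta(\log n/\log\log n)$ independent color trials fit into one $O(\log n)$-bit message, yielding the desired parallelism. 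Constructing this mapping within a cluster takes $O(D\cdot \polyloglog n)=\polyloglog(n)$ rounds using the small diameter and standard hashing/derandomization tools from \cite{censor2017derandomizing,BKM19,HKMN20}.

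The main obstacle is correctly implementing the parallel simulation under the \CONGEST bandwidth bound while keeping the round complexity matching the \LOCAL bound; this is resolved by the color space reduction together with the small cluster diameter. Putting the three phases together, the overall runtime is $O(\log \Delta_H)$ for preshattering, $O(\log^5\log n)$ for the network decomposition, and $O(\polyloglog n)$ per color class for the clusterwise list coloring (summed over $O(\log\log n)$ classes, still absorbed in $O(\log^5\log n)$), giving the claimed $O(\log \Delta_H+\log^5\log n)$ bound. Finally, I would verify that the subroutine used for list coloring a cluster does not inherit the error pointed out in Remark~\ref{rem:smallError} of the paper, by appealing to the corrected variant presented there.
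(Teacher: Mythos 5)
Your proposal is correct and follows essentially the same route as the paper: shattering via $O(\log\Delta_H)$ iterations of \RCT, the network decomposition of \cite{GGR20}, a cluster-local colorspace reduction to $\poly(N)$ so that $\Theta(\log n/\log\log n)$ trials fit in one message, parallel simulation of $O(\log n)$ independent \RCT{} instances per cluster, and agreement on a successful instance via aggregation over Steiner trees. The only step you treat as a black box---the colorspace reduction---is precisely the one the paper must rebuild from scratch (\Cref{lem:ColorSpaceReduction}, via polynomial hashing derandomized by conditional expectations) because the cited construction of \cite{Ghaffari2019} is flawed, but you explicitly flag this and defer to the corrected variant, so no gap remains.
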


Similar results are known in the \LOCAL model  \cite{BEPSv3,RG19,GGR20} and the \CONGEST model \cite{Ghaffari2019,GGR20}, where the latter has slightly worse runtime. We provide a (mostly) self contained proof of the result, meantime fixing an error in the state-of-the-art \CONGEST algorithm (see Remark~\ref{rem:smallError}) and slightly improving the runtime, and thus matching the runtime of the state-of-the-art in the \LOCAL model.
 Our  algorithm consists of four steps: \emph{Shattering}, \emph{Network Decomposition}, \emph{Colorspace Reduction} and \emph{Cluster Coloring}. 
These four steps have been used in a similar (but not identical) manner in \cite{Ghaffari2019}.   Colorspace reduction is an adaptation of a result in \cite{HKMN20}, and the shattering part stems from \cite{BEPSv3}.

\paragraph{High Level Overview}
The \emph{shattering} part consists of $O(\log \Delta_H)$ iterations of {\rct}. This results in a partial coloring of the graph, after which the uncolored parts of $H$ form connected components of size  $N=\poly\log(n)$, w.h.p., as shown in \cite{BEPSv3}. In the remainder of the algorithm, such  components are processed in parallel and independently from each other. 

In the \emph{network decomposition } part, each component is partitioned into $\Gamma_1,\ldots,\Gamma_c$ collections of \emph{clusters}, with $c=O(\log\log n)$. The important properties of this decomposition are: (i) the clusters in each $\Gamma_i$ are independent, i.e., there is no edge between the nodes of any two distinct $Q, Q'\in\Gamma_i$, (ii) each cluster $Q\in \Gamma_i$ \emph{essentially} has a $\poly\log\log(n)$ diameter. By Property (i), all clusters in $\Gamma_i$ can be colored in parallel without interference, while Property (ii) is used to do that efficiently. 

The algorithm that we use to color the clusters has runtime depending on the size of the colorspace. In order to make the coloring procedure efficient, we perform \emph{color reduction}, where we deterministically compute a function $f_Q$ that maps the color palettes of all nodes in $Q$ to a much smaller colorspace of size $\poly(N)$, while preserving the palette sizes. In the new color space, each color can be represented with $O(\log\log n)$ bits.  This step is run on all clusters  in all $\Gamma_i$ in parallel.

In the \emph{cluster coloring part}, we iterate through the groups $\Gamma_1,\ldots,\Gamma_c$. When processing $\Gamma_i$ each cluster $Q\in \Gamma_i$ is colored in parallel. Note that we need to solve a $(deg+1)$-list coloring on $Q$: all colors of previously colored neighbors in a cluster in $\Gamma_1,\ldots,\Gamma_{i-1}$ are removed from the palettes of nodes in $Q$. In the last step, we solve the list coloring problem on $Q$ by simulating $O(\log n)$ parallel and independent instances of the simplest coloring algorithm: iterate {\rct}, for $O(\log N)$ rounds; note that the size of $Q$ is upper bounded by $N=\poly\log(n)$. The failure probability of each instance is $1/\poly(N)$, and   w.h.p.\ (in $n$), at least one of these instances is successful~\cite[Sec. 4]{Ghaffari2019}. After executing all instances for $O(\log N)$ iterations, the nodes in $Q$ agree on a successful instance and get permanently colored with the colors chosen in that instance. 

\smallskip

The color space reduction is necessary for sending the random color trials of several parallel instances in one $O(\log n)$-bit \CONGEST message.  Alternatively, one could replace the cluster coloring procedure with the deterministic $(deg+1)$-list coloring algorithm from \cite{BKM19}. However, its runtime depends logarithmically on the color space, and one would need the same color space reduction to obtain a $\poly\log\log(n)$ round algorithm. The runtime of our approach is dominated by the time to compute a network decomposition.

\paragraph{Shattering}  Barenboim, Elkin, Pettie, and Su \cite{BEPSv3} showed that $O(\log \Delta_H)$ rounds are sufficient to reduce the $(deg+1)$-list coloring problem on an $n$-node graph $H$  to several independent $(deg+1)$-list coloring instances on subgraphs with  $N=\polylog(n)$ nodes. In the remaining parts, we focus on handling one such subgraph. Let us detail the algorithm of \cite{BEPSv3}, to demonstrate its simplicity (the analysis is nontrivial though).
By \cite[Lemma 5.3]{BEPSv3}, after $O(\log\Delta_H)$ iterations of {\rct} in $H$, the maximum size of a connected component of uncolored nodes in $H$ is $O(\Delta_H^2\log n)$, w.h.p.\ One can now partition the uncolored vertices into two groups (depending on the uncolored degree) that need to be list-colored one after the other. In one of the groups, $O(\log\Delta_H)$ more iterations of {\rct} suffice to reduce the maximum size of an uncolored connected component to $N=\poly\log(n)$, while in the other group, uncolored components are bounded in size by $N$, by design. 
The analysis is independent of the colorspace, and this shattering procedure can be executed in \CONGEST. The two groups are processed sequentially, but inside each group, all uncolored components are colored in parallel. Notice that there are no conflicts between them, as all communication happens within individual components. Henceforth, we fix such a component $\smallComp$ and describe the coloring algorithm in $\smallComp$. By omitting excess colors, we can also assume that the list size of each vertex is upper bounded by $d_{\smallComp}(v)+1\leq |\smallComp|\leq N$. 

\paragraph{Network decomposition} 
To color a component $\smallComp$, we first compute a \emph{network decomposition} of $\smallComp$, using a deterministic algorithm from \cite{GGR20}. To highlight that all (uncolored) components of $G$ are handled in parallel and independently, we formulate all results in this section from the viewpoint of a graph $\smallComp$.
\begin{definition}
A \emph{network decomposition} of a graph $\smallComp$, with weak diameter $d$ and a number of colors $c$, consists of a partition of $\smallComp$ into vertex-induced subgraphs or \emph{clusters} $S_1,\dots,S_t$ and a coloring of the clusters with $c$ colors, such that: 1. For any $i\neq j$, if $S_i$ and $S_j$ have the same color, then there is no edge with one endpoint in $S_i$ and the other in $S_j$, 2. for each $i$ and $u,v\in S_i$, the distance from $u$ to $v$ in $\smallComp$ is bounded by $d$. \end{definition}
In a network decomposition with $c$ colors, let $\Gamma_1,\ldots,\Gamma_c$ denote the collection of clusters colored with the colors $1$ to $c$, respectively. 
\begin{theorem}\cite{GGR20}\label{thm:netdec}
Let $\smallComp$ be $k$-node graph where each node has a $b=\Omega(\log k)$-bit ID. There is a deterministic algorithm that computes a network decomposition of $\smallComp$ with $O(\log k)$ colors and weak diameter $O(\log^2 k)$, in $O(\log^5 k + (\log^* b) \log^4 k)$
rounds of the \CONGEST model with $b$-bit messages. Moreover, for each cluster $Q$, there is a Steiner tree $T_Q$ with radius $O(\log^2 k)$ in $\smallComp$,
for which the set of terminal nodes is $Q$. Each vertex of $\smallComp$ is in  $O(\log k)$ Steiner trees of each color.
\end{theorem}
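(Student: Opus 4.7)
The plan is to follow the deterministic ball-carving framework that underlies current network-decomposition algorithms, adjusting the parameters to meet the bounds in the statement.

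At the top level I would build the decomposition one color class at a time across $c = O(\log k)$ \emph{phases}. Let $V_1 = V(\smallComp)$, and in phase $i$ construct the $i$-th color class $\Gamma_i$ of clusters, each of weak diameter $O(\log^2 k)$, such that no two clusters of $\Gamma_i$ are adjacent; the vertices left unclustered form $V_{i+1}$. The main invariant is $|V_{i+1}| \le |V_i|/2$, so after $O(\log k)$ phases every vertex lies in some cluster, giving $O(\log k)$ colors.

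Within a single phase I would run $O(\log k)$ \emph{levels} of cluster growing. Each alive vertex starts as a singleton cluster labeled by its ID. At every level each cluster attempts to absorb its alive one-hop neighbors, breaking conflicts by ID; a vertex adjacent to two distinct clusters is declared \emph{deferred} and moves to $V_{i+1}$. The standard amortized argument, using the bit-expansion of cluster IDs as a potential, shows that the set of unresolved vertices shrinks by a constant factor per level, and that after $O(\log k)$ levels all but half of $V_i$ have either joined a cluster of $\Gamma_i$ or deferred to $V_{i+1}$. A single level expands each cluster's radius by one; implemented by pipelined broadcast along the current cluster's BFS tree, one level costs $O(\log^2 k)$ \CONGEST rounds. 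Multiplying through yields $O(\log^4 k)$ rounds per phase and $O(\log^5 k)$ overall.

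The main obstacles appear in implementing ball-growing under $b$-bit messages. Cluster identifiers are $b$ bits long, so naively broadcasting them along $O(\log^2 k)$-radius trees blows up the runtime by a factor of $b$. The plan is to first run a Linial-style iterated refinement on the IDs, compressing them to $O(\log k)$ effective bits in $O(\logstar b)$ rounds of $b$-bit communication; this contributes the additive $(\logstar b)\log^4 k$ term and lets each subsequent level send labels in a single \CONGEST round. For the Steiner-tree-with-bounded-load property, I would retain, for each cluster of $\Gamma_i$, the union of growth-BFS trees recorded level-by-level. Because a single vertex can be overtaken by a new cluster at most $O(\log k)$ times during phase $i$ (again by the bit-potential argument), each vertex appears on only $O(\log k)$ such trees per color class, matching the load claim. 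The hard part, as in prior work, is verifying that the amortized potential analysis remains tight simultaneously with the pipelined \CONGEST implementation and the load bookkeeping — naive combinations of these three ingredients typically lose additional $\log k$ factors, and avoiding that loss is where the bulk of the technical work sits.
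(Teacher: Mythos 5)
This theorem is not proved in the paper at all: it is imported verbatim from \cite{GGR20}, so the only meaningful comparison is with the published Rozho\v{n}--Ghaffari/GGR construction that your sketch is evidently trying to reconstruct. Your top-level skeleton ($O(\log k)$ phases, each clustering at least half of the surviving vertices into non-adjacent weak-diameter-$O(\log^2 k)$ clusters, pipelined broadcast inside clusters, ID compression to explain the $\log^* b$ term) is the right shape, but the phase-internal mechanism you describe is not the one that makes the argument work, and the numbers you state do not cohere. In the actual algorithm a phase is a nested loop: an outer pass over the $O(\log k)$ bits of the (compressed) cluster label, and for each bit an inner sequence of $O(\log k)$ propose/grow steps in which clusters whose label has the current bit set absorb adjacent vertices from clusters with the bit unset, stopping (and deleting their current boundary) once fewer than a $\Theta(1/\log k)$ fraction of their vertices would join. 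It is this stopping rule, summed over the $O(\log k)$ bit-steps, that bounds the total deleted fraction by $1/2$; and it is the $O(\log k)\times O(\log k)$ nesting that accounts for the $O(\log^2 k)$ radius growth per phase and, with $O(\log^2 k)$ rounds of in-cluster communication per growth step, the $O(\log^5 k)$ total. Your single loop of ``$O(\log k)$ levels'' with the rule ``a vertex adjacent to two distinct clusters is deferred'' does neither: in a dense graph that rule defers essentially everybody, there is no mechanism forcing only a constant fraction of $V_i$ into $V_{i+1}$, and your own arithmetic ($O(\log k)$ levels times $O(\log^2 k)$ rounds per level) gives $O(\log^3 k)$ per phase, not the $O(\log^4 k)$ you assert, while also only supporting weak diameter $O(\log k)$ rather than $O(\log^2 k)$.

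Two further points. First, your account of the $(\log^* b)\log^4 k$ term is internally inconsistent: you say the Linial-style ID compression takes ``$O(\log^* b)$ rounds of $b$-bit communication'' and in the same breath that it ``contributes the additive $(\log^* b)\log^4 k$ term.'' In \cite{GGR20} the $\log^* b$ counts \emph{iterations} of a label-reduction procedure, each of which must itself be run over clusters of weak diameter $O(\log^2 k)$ and therefore costs $\Theta(\log^4 k)$ rounds; that multiplication is the content of the term, and it is missing from your sketch. Second, you explicitly defer the entire quantitative core --- ``verifying that the amortized potential analysis remains tight simultaneously with the pipelined \CONGEST implementation and the load bookkeeping'' --- which is precisely where the proof lives. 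As written, the proposal is an outline of the known approach with a wrong inner loop and the decisive estimates left unproved; it would not substitute for the citation.
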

Aggregation procedures can be efficiently pipelined in the computed network decomposition. 
\begin{lemma}[(simplified), \cite{GGR20}] \label{lem:pipelining}
 Let $\smallComp$ be a communication graph on n vertices. Suppose that each vertex of $\smallComp$ is
part of some cluster $Q$ such that each such cluster has a rooted Steiner tree $T_Q$  of diameter at most
$R$ and each node of $\smallComp$ is contained in at most $P$ such trees. Then, in $O(P + R)$ rounds of the
\CONGEST model with $b$-bit messages for $b \geq P$, we can perform the following operations for all
clusters in parallel: broadcast, convergecast, minimum, summation and bit-wise maximum.
\end{lemma}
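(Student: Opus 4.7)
The plan is to reduce all five listed operations to pipelined convergecast on the Steiner trees $\{T_Q\}_Q$. Broadcast is the time-reverse of convergecast, and $\min$, summation, and bit-wise $\max$ are exactly the kind of associative operators that make convergecast on a single tree an $R$-round wave: root $T_Q$ at some designated terminal, let each node wait until it has heard from all its $T_Q$-children, combine those values with its own via the operator, and forward a single $b$-bit aggregate to its $T_Q$-parent. Since $T_Q$ has depth at most $R$, this clearly finishes within $R$ rounds in isolation.

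The obstacle, and the only interesting point, arises when all clusters run simultaneously: one must control congestion on edges that belong to several Steiner trees. The key structural observation is that every edge $(u,v)$ of $\smallComp$ lies in at most $P$ of the trees. Indeed, any tree $T_Q$ containing edge $(u,v)$ must contain both endpoints, and each endpoint lies in at most $P$ trees by hypothesis. Because the convergecast on $T_Q$ sends exactly one aggregated message over each of its edges during the whole procedure, the total number of messages that need to cross any edge of $\smallComp$ is at most $P$, while each message travels at most $R$ hops along its tree.

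Having reduced the problem to routing with dilation $R$ and per-edge congestion at most $P$, I would conclude with a standard pipelining argument of Leighton--Maggs--Rao type to obtain an $O(P+R)$-round schedule. Concretely, assign each tree an independent release delay in $\{0,1,\dots,\Theta(P)\}$, launch its convergecast wave after that delay, and check that with positive probability no edge is oversubscribed in any round; one then derandomizes or replaces the argument by an explicit shifted-wave schedule indexed by tree IDs. The assumption $b\ge P$ is a mild slack on message size that suffices to carry the small amount of scheduling/identifier bits accompanying each $b$-bit aggregate. Once convergecast is in place, broadcast follows by reversing the schedule (sending from the root downward along $T_Q$), and minimum, summation, and bit-wise maximum are immediate instantiations with operators $\min$, $+$, and bit-wise $\vee$.

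The step I expect to need the most care is exhibiting a single globally synchronized schedule in which no node ever has to send two conflicting messages over the same edge in the same round, while still respecting the ``parent waits for children'' semantics of convergecast: one must simultaneously coordinate, per $(\text{node}, \text{tree})$ pair, both the time at which the aggregate is ready and the time at which the scheduled edge slot becomes available. This is exactly what the Leighton--Maggs--Rao framework delivers in our regime of dilation $R$ and congestion $P$, and once it is in force the $O(P+R)$ bound for all five primitives follows uniformly.
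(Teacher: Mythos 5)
First, a point of reference: the paper does not prove this lemma itself --- it is imported from \cite{GGR20}, and the only argument the paper supplies is the short justification that bit-wise maximum fits the criterion of \cite{GGR20}, namely that the operator is associative and that the first $i$ (leftmost or rightmost) bits of $x_1\circ\dots\circ x_k$ are determined by the first $i$ bits of the $x_j$'s. Your reduction of all five primitives to convergecast and your observation that every edge of $\smallComp$ lies in at most $P$ trees are both correct, but the scheduling step is where your proof breaks. Assigning each tree an independent random delay in $\{0,\dots,\Theta(P)\}$ does \emph{not} yield, with positive probability, a schedule in which no edge is oversubscribed in any round: with $P$ packets crossing one edge and uniform delays in a range of size $\Theta(P)$, the maximum per-round load on that edge is $\Theta(\log P/\log\log P)$ with overwhelming probability, and the event that every round has $O(1)$ load is super-exponentially unlikely, so no union bound or local lemma rescues the statement as you wrote it. The full Leighton--Maggs--Rao $O(\mathrm{congestion}+\mathrm{dilation})$ theorem is a non-constructive existence result (LLL plus iterated frame refinement) for \emph{independent packets on fixed paths}; it does not directly accommodate the precedence constraints of convergecast (a node cannot emit its aggregate before all children have reported), nor does it say how nodes compute the schedule distributively within the $O(P+R)$ budget. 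You correctly identify this as the delicate step but then assert that LMR ``delivers exactly this,'' which it does not.

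The intended mechanism is far more elementary and is precisely why the lemma carries the hypothesis $b\ge P$ --- a hypothesis your proof never uses, which is a strong sign the argument has gone off the intended track. Since each edge lies in at most $P$ trees, split each $b$-bit message on that edge into $P$ slots of $\lfloor b/P\rfloor\ge 1$ bits, one slot per tree; then \emph{every} tree makes progress on \emph{every} edge in \emph{every} round, and no scheduling or delay assignment is needed at all. The prefix-computability highlighted in the paper's remark is exactly what makes this chunked pipelining sound: a node that has received the first $j$ chunks of each child's subtree aggregate already knows the first $j$ chunks of its own (for summation one uses the low-order bits, since carries propagate upward), so a straightforward induction on the height $h$ of a node in $T_Q$ shows it forwards chunk $j$ to its parent by round $h+j$, and the root has the full aggregate by round $O(R+P)$; broadcast is the time reversal. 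If you want a self-contained proof, this is the route to take; repairing the LMR route would require both a precedence-aware version of the $O(c+d)$ scheduling theorem and a distributed construction of the schedule, neither of which is available off the shelf.
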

For the precise definition of broadcast, convergecast, minimum and summation, we refer to \cite{GGR20}.
In the \emph{bit-wise maximum} of a cluster $Q$, each vertex $v\in Q$  has a bit string of $\ell=O(b)$ bits $s_1(v),\ldots,s_{\ell}(v)$, and we say $Q$ computes the bit-wise maximum if a designated leader node in $Q$ knows the bit string $s_1,\ldots,s_{\ell}$, where $s_i=\max_{v\in Q}s_i(v)$. 
While  the computation of  bit-wise maximum is not explicitly mentioned  in \cite{GGR20}, it follows from their claim that a function $\circ$ can be computed in the claimed runtime if it is associative and $p_i(x_1\circ \dots \circ x_k)$ can be computed from $p_i(x_1), \ldots, p_i(x_k)$ where $p_i$ denotes the $i$ leftmost (or rightmost) bits of a string. Both properties clearly hold for the bit-wise maximum.

We apply Theorem~\ref{thm:netdec}  to each component $\smallComp$. Since $\smallComp$ has size $O(\log^3 n)$ and IDs of size $O(n)$, we obtain a network decomposition with $O(\log\log n)$ colors and weak diameter $O(\log^2\log n)$, in $O(\log^5\log n)$ time, and each vertex is in at most $O(\log\log n)$ Steiner trees of each color.

\paragraph{Colorspace Reduction} We map the colorspace $[U]$ to a smaller colorspace that is comparable to the size $N$ of an uncolored component. 
 This is described in the following lemma, which is a reformulation of a similar lemma from~\cite{HKMN20}. 
 \begin{lemma}\label{lem:ColorSpaceReduction}
Let $N>\log n$ be an upper bound on the number of vertices and the diameter of some subgraph $H$ of a connected communication network $G$. Let the list of each node $v$ have size $|\Psi(v)|\le N$, and the colorspace have size $U=\poly(n)$. 
There is a constant $c_0>0$ such that all vertices of $H$ learn the same mapping $f:[U]\to [N^{c_0}]$ s.t. for each vertex $v\in H'$, $|f(\Psi(v))|=|\Psi(v)|$, by executing $O(\log N)$ iterative summation and broadcast primitives in $H$. 
\end{lemma}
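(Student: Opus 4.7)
The plan is to make all nodes of $H$ agree on a hash $f=h_{a^*}$ drawn from a small pairwise-independent family by carrying out a method-of-conditional-expectations derandomization entirely through the aggregation primitives of \Cref{lem:pipelining}.

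First, I would fix a Carter--Wegman-style pairwise-independent hash family $\mathcal{H}=\{h_a\}_{a\in \mathcal{A}}$ with domain $[U]$ and range $[N^{c_0}]$, for a constant $c_0\ge 5$. Since $\Pr_a[h_a(x)=h_a(y)]\le N^{-c_0}$ for $x\neq y$ and $\sum_v\binom{|\Psi(v)|}{2}\le N\cdot\binom{N}{2}\le N^3$, the expected number of palette-internal collisions under random $a$ is at most $N^{3-c_0}<1$, so some $a^*\in\mathcal{A}$ makes $h_{a^*}$ injective on every palette. Once all nodes know this $a^*$, the map $f=h_{a^*}$ is their common hash and satisfies $|f(\Psi(v))|=|\Psi(v)|$.

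Next, I would pin down $a^*$ incrementally while maintaining the invariant that the conditional expectation of the total palette-collision count (over the yet-to-be-fixed bits of $a$) stays strictly below $1$. In each iteration, an additional block of the description of $a^*$ is committed: every node $v$ locally enumerates pairs in $\binom{\Psi(v)}{2}$ to compute, for each candidate extension of the prefix fixed so far, its local contribution to the conditional expectation; a summation primitive in $H$ aggregates these contributions across $v$; a broadcast primitive then disseminates the aggregated value, letting every node deterministically choose the same extension that preserves the invariant. When the prefix is complete, the conditional expectation equals the actual number of collisions of $h_{a^*}$, which must therefore be $0$.

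The main obstacle is the round budget: the description of $a^*$ has $\Theta(\log n)$ bits, but only $O(\log N)$ primitives are allowed, and each primitive carries only $O(\log n)$ bits per cluster. The idea, following the color-space reduction of \cite{HKMN20}, is to commit a block of bits per iteration while encoding the global query as a single $O(\log n)$-bit integer (a pair-collision count for a proposed extension, bounded by $|\mathcal{A}|\cdot N^3=\poly(n)$), with a short binary search choosing the next block from the aggregated value. Verifying that the resulting per-iteration data fits within the bandwidth of one summation plus one broadcast call of Steiner-tree pipelined primitives, while still preserving the conditional-expectation invariant across $O(\log N)$ blocks, is the technical heart of the argument.
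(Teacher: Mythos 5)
Your skeleton — pick a low-collision hash family, observe that the expected number of within-palette collisions is below $1$, then derandomize the seed by the method of conditional expectations using the summation/broadcast primitives — is exactly the paper's plan. But the step you yourself flag as ``the technical heart'' is a genuine gap, and it is not patchable by the block-commit/binary-search device you sketch. A pairwise-independent family over domain $[U]$ with $U=\poly(n)$ necessarily has seed length $\Theta(\log U)=\Theta(\log n)$ (a small-modulus construction like $x\mapsto ax+b \bmod p$ with $p=\poly(N)$ is \emph{not} pairwise independent — colors with $x\equiv x'\pmod p$ always collide; this is precisely the error in \cite{Ghaffari2019} that Remark~\ref{rem:smallError} discusses). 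Conditional-expectation derandomization must make $\Theta(\log n)$ binary decisions in total, and each decision requires a fresh global aggregation: whether you fix one bit per iteration or a $k$-bit block via $k$ halving steps, each step that narrows the candidate set by a factor of $2$ costs one summation-plus-broadcast, so you end up with $\Theta(\log n)=\omega(\log N)$ primitives (recall $N=\polylog n$, so the budget $O(\log N)=O(\log\log n)$ is exponentially smaller). Nor can you aggregate the conditional expectations of all $2^k$ extensions of a block in one $O(\log n)$-bit message, and binary search over extensions only works on averaged sums over halves, which again costs one primitive per bit.

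The paper's resolution is to change the hash family so that the \emph{seed itself} has only $O(\log N)$ bits: take a prime $N^{c_0}/2<p\le N^{c_0}$, set $d=\lceil p/N^5\rceil$, assign to each color $\alpha\in[U]$ a distinct degree-$d$ polynomial $\psi_\alpha\in\mathbb{F}_p[x]$ (possible since $p^{d+1}>U$), and define $f_g(\alpha)=\psi_\alpha(g)$ for $g\in\mathbb{F}_p$. This family is only almost-universal — the collision probability is $d/p\le N^{-5}$ rather than $1/p$ — but that suffices for the union bound over the at most $N^3$ palette pairs, and now the seed $g$ has $\lceil\log_2 p\rceil=O(\log N)$ bits, so fixing it bit by bit uses exactly $O(\log N)$ summation and broadcast calls (with minor bookkeeping for the event $g>p$ and for $N^{-5}$-precision arithmetic in the aggregated expectations). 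To repair your proof you would need to replace the pairwise-independent family by such a short-seed almost-universal family; with your family as stated, the round budget cannot be met.
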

\begin{proof}
Let the constant $c_0>0$ be such that $(N^{c_0}/2)^{N^{c_0-5}/2}>U$. Such a constant exists, since  $U=\poly(n)$ and $N>\log n$.
Let $N^{c_0}/2< p \le N^{c_0}$ be a prime, which exists due to Bertrand's postulate. The reduction is given by a random hash function, which is then derandomized using the method of conditional expectation. We start by describing this in the centralized setting, and then comment on how to do it distributively.

 Let $d=\lceil p/N^5\rceil$. To each color $\alpha\in [U]$, we assign a unique polynomial $\psi_\alpha\in \mathbb{F}_p[x]$ of degree $d$. This can be done since the number of degree-$d$ polynomials over $\mathbb{F}_p$ is $p^{d+1}>(N^{c_0}/2)^{N^{c_0-5}/2}>C$.

To obtain the color reduction $f:[U]\to [N^{c_0}]$ we want to choose  a ``good'' function from the set $\{f_g : g\in \mathbb{F}_p\}$, where 
\[f_g : [U] \rightarrow \mathbb{F}_p\text{, and } f_g(\alpha)=\psi_\alpha(g).\] 
In other words, we fix a distinct polynomial $\psi_{\alpha}$ for each color $\alpha\in [U]$ and for each $g\in \mathbb{F}$ we obtain a 'candidate color reduction' $f_g$ by mapping color $\alpha\in [U]$ to the evaluation of its polynomial $\psi_{\alpha}$ at $g$. The function $f_g$ is \emph{good} if it preserves the list sizes of all nodes $u$ of $H$: For each node $u$ and $g\in \mathbb{F}_p$, let $X_u(g)$ be 0 if  $|f_g(\Psi(u))|=|\Psi(u)|$, and 1, otherwise. A function $f_g$ is \emph{good} if $\sum_{u\in H} X_g(u)=0$; the latter implies that for each node $u$, $|f_g(\Psi(u))|=|\Psi(u)|$, as required. 

First, let us show that a random function $f_g$, corresponding to a uniformly random choice of $g$ is good with significant probability.
\begin{claim}
\label{lem:randReduction}
If $g\in \mathbb{F}_p$ is chosen uniformly at random, then $\E\big[\sum_{u\in  H} X_u\big]\le N^{-2}$.
\end{claim}
\begin{proof}
Consider a node $u$. For two distinct colors $\alpha,\beta\in \Psi(u)$, the polynomials $\psi_\alpha$ and $\psi_\beta$ intersect in at most $d$ points; hence, if $g$ is sampled uniformly, the probability that $f_g(\alpha)=\psi_\alpha(g)=\psi_\beta(g)=f_g(\beta)$ is $d/p\ge N^{-5}$. Thus, the probability that any two colors in $\Psi(u)$ map to the same element of $\mathbb{F}_p$ is at most $\binom{|\Psi(u)|}{2}\cdot d/p<N^2\cdot N^{-5}=N^{-3}$. 
The claim follows by linearity of expectation, since the sum is over at most $N$ elements, each having expectation at most $N^{-3}$.
\renewcommand{\qed}{\ensuremath{\hfill\blacksquare}}
\end{proof}
\renewcommand{\qed}{\hfill \ensuremath{\Box}}

We assume that the elements of $\mathbb{F}_p$ are numbered from 1 to $p$ and we choose the element $g$ by choosing its bits independently: we flip $\ell=\lceil\log_2 p\rceil$ unbiased coins,  $b_1,\dots,b_\ell$, and let $g$ be the number with the binary representation $b_1\dots b_\ell$. Let $Y$ be a binary random variable that is 1 if $g>p$, and 0, otherwise. Conditioned on $Y=1$, $g$ is uniformly distributed in $\mathbb{F}_p$.  Let us re-define $X_u(g)$ to be 0 also when $g>p$. Note that we still have $\E[\sum_{u\in H}X_u]= \E[\sum_{u\in H}X_u \mid Y=1]\Pr[Y=1]\le 1/N^2$, since we only make $X_u$ smaller in some cases. We also have $\E[Y]\le 1/2$; hence, assuming $N\ge 3$, 
\begin{equation}\label{eq:randomclred}
\E\left [Y+\sum_{u\in H}X_u(g)\right]\le 1/2+ 1/N^2\le 2/3\ .
\end{equation}
 In order to find a good function $f_g$, it suffices to find $b_1\dots b_\ell$ such that $Y+\sum_{u\in H}X_u<1$.
In order to derandomize (\ref{eq:randomclred}), we fix the bits $b_1,\dots,b_\ell$ inductively, with the basis (\ref{eq:randomclred}). Given $b_1,\dots,b_i\in \{0,1\}$, for $i\ge 0$, such that $\E[Y+\sum_{u\in H}X_u(g)\mid b_1,\dots,b_i]\le 2/3$, we select $b_{i+1}\in \{0,1\}$ such that $\E[Y+\sum_{u\in H}X_u(g)\mid b_1,\dots,b_i,b_{i+1}]\le 2/3$. The existence of such a value follows from the inductive hypothesis, using the conditional expectation formula (conditioning on $b_{i+1}$). After fixing all bits, we have a deterministic value $g$ for which $Y+\sum_{u\in H}X_u(g)\le 2/3$.

It remains to see how to compute a good function distributively. We do this in $\ell$ phases, where we fix $b_i$ in phase $i$, assuming the bits $b_1,\dots,b_{i-1}$ have been fixed, and each node knows those bits. To this end, each node computes $\E[X_u\mid b_1,\dots,b_{i-1},0]$ and $\E[X_u\mid b_1,\dots,b_{i-1},1]$, with precision $N^{-5}$. Note that such values fit in a single message of size $O(\log n)$. These values are aggregated in a leader node, i.e., a leader node learns values $v_b=\E\left[\sum_{u\in H}X_u\mid b_1,\dots,b_{i-1},b\right]\pm N^{-4}$, for $b=0,1$, and computes  $v_b'=\E\left[Y+\sum_{u\in H}X_u\mid b_1,\dots,b_{i-1},b\right]\pm N^{-4}$, and chooses $b_i=b$ such that $v_b'\le v'_{1-b}$. The overall error accumulated throughout the $\ell$ phases is at most $\ell N^{-4}\le N^{-3}$. This, implies that when all $\ell$ bits are fixed, we have $Y+\sum_{u\in H}X_u<2/3+N^{-3}<1$; hence, the computed function is good. It takes $O(\ell)=O(\log N)$ summation and broadcast operations to compute the color reduction.
\end{proof}

To obtain a color space reduction, i.e., a function $f_Q$, for each cluster $Q$ of the network decomposition, that maintains the list sizes, we apply the algorithm from Lemma~\ref{lem:ColorSpaceReduction} to all clusters in all $\Gamma_1,\ldots,\Gamma_c$ in parallel. Since we run it on  $O(\log\log n)$ color classes in parallel, and a node can be in $O(\log\log n)$ Steiner trees per color class, a node can be in $P=O(\log^2\log n)$ Steiner trees of all clusters. Using Lemma~\ref{lem:pipelining}, the $O(\log N)=O(\log\log n)$ iterative summation and broadcast primitives of Lemma~\ref{lem:ColorSpaceReduction} can be implemented in the Steiner trees of diameter at most $D=O(\log^2\log n)$ in $O(\log N\cdot (D+P))=O(\log^3\log n)$ rounds.

\paragraph{Cluster Coloring (by \cite{Ghaffari2019})}

 \begin{algorithm}[H]
\caption{Cluster Coloring}         
\label{alg:smalldeg}
\begin{algorithmic}[1]
\STATE{Given}: network decomposition $\Gamma_1,\ldots, \Gamma_c$, color space reduction $f_Q$, for each cluster $Q$
   \FOR{$i=1,\dots,c$}
      \FORALL{clusters $Q\in \Gamma_i$ in parallel}
				\STATE{\textbf{Update palettes:} Remove colors used by permanently colored neighbors (in $G$)}
							\STATE{\textbf{Map lists:}} Use $f_Q:[U]\rightarrow [N^{c_0}]$ to map remaining lists to a smaller color space 
				\FOR{$O(\log N)=O(\log\log n)$ iterations and $O(\log n)$ parallel instances}
				\STATE \textbf{Simulate} {\rct}
    		\ENDFOR
				\STATE \textbf{Agree on a successful instance} and permanently color each node in $Q$
    \ENDFOR
		\ENDFOR
\end{algorithmic}
\end{algorithm}
While the update of the lists takes place in the original color space, the simultaneous instances of {\rct} work in a smaller color space of size $\poly(N)$, and each of its colors can be represented with $O(\log N)=O(\log \log n)$ bits.  In fact, the color trials of one round of $O(\log n)$ instances use $O(\log n\cdot \log\log n)$ bits and can be done in $O(\log \log n)$ rounds. Thus, the $O(\log N)=O(\log\log n)$ rounds in total can be simulated in  $O(\log^2 \log n)$ rounds. Each of these instances is successful with probability $1/\poly(N)$ and w.h.p.\ (in $n$), at least one of them is successful~\cite[Sec. 4]{Ghaffari2019}. A node can locally determine which instances are successful, i.e., in which instances it gets colored. For node $v$, let $s_1(v)\cdots s_{\ell}(v)$ be the indicator string in which $s_i(v)$ indicates whether instance $i$ was successful for node $v$.  With a bit-wise maximum, a leader can determine an instance that is successful w.h.p.,\ for all nodes in the cluster, and can broadcast it to all nodes. 
Updating lists and applying the color space reduction can be done in one round, simulating the instances takes $O(\log^2\log n)$ rounds, and agreeing on a successful instance takes $O(\log^2\log n)$ rounds, due to the diameter of the Steiner trees and Lemma~\ref{lem:pipelining}. Thus, the total runtime by iterating over all color classes of the network decomposition is $O(\log^3\log n)$.

\begin{proof}[Proof of Theorem~\ref{thm:smallDegree}]
By design, after executing all four steps, all nodes of the graph are properly colored. 
The shattering part takes $O(\log \Delta_H)=O(\log \log n)$ rounds. Computing the network decomposition takes  $O(\log^5\log n)$ rounds.
The color space reduction runs on all clusters in parallel and takes $O(\log^3\log n)$ rounds. The cluster coloring part takes $O(\log^2\log n)$ rounds per color class of the network decomposition and $O(\log^3\log n)$ rounds in total. 
\end{proof}

\begin{remark}
A slightly worse runtime of $O(\log^6\log n)$ rounds can be obtained by using the algorithm of \cite{BKM19} to list color the clusters. Without the pipelining (\Cref{lem:pipelining}) to speed up aggregation within clusters  the runtime gets even slower but it remains $\poly\log\log(n)$. 
\end{remark}

\begin{remark}\label{rem:smallError}
The high level structure of the algorithm for \Cref{thm:smallDegree} is similar to the one in \cite{Ghaffari2019}, using the updated intermediate procedures to compute a network decomposition from \cite{GGR20} and the color reduction from \cite{HKMN20}. 

Unfortunately, the $O(\log \Delta)+2^{O(\log\log n)}$ algorithm in \cite{Ghaffari2019}  has a mistake in the design of its color space reduction.
To reduce the color space \cite{Ghaffari2019} maps the original color lists to a smaller color space $[p]$ where $p$ is a fixed (and deterministically chosen) prime in $[N^4,2N^4]$.  Color $x$ is mapped to $h_{a,b}(x)=a\cdot x + b\bmod p$ with randomly chosen $a$ and $b$. The wrong but crucial claim in the paper states that the probability for two colors $x$ and $x'$ to be hashed to the same color over the randomness of $a$ and $b$, i.e., the probability of the event that $h_{a,b}(x)=h_{a,b}(x')$ holds, is at most $1/N$. But colors $x$ and $x'$ are mapped to the same color whenever $x=x'\bmod p$, regardless of the choice of $a$ and $b$. In contrast (besides other changes in the design) in the core part of  \Cref{lem:ColorSpaceReduction} we fix one distinct polynomial $h_x(y)$ per color $x$ and evaluate it at a randomly chosen $y\in \mathbb{F}_y$ to obtain a color in the smaller color space $[p]$.
 The $O(\log\Delta +\log^6\log n)$ algorithm from \cite{GGR20} uses the techniques of \cite{Ghaffari2019} in a black box manner, including the erroneous color space reduction. 
\end{remark}
It is interesting that the color space dependence also plays a role for deterministic algorithms whose complexity is expressed as $f(\Delta)+O(\logstar n)$; see the discussion in \cite{MT20}  that compares the color space dependent results \cite{MT20} with the related color space independent results in \cite{FHK}.

\paragraph{Acknowledgements.} This project was supported by the European Union's Horizon 2020 Research and  Innovation Programme under grant agreement 755839
and by the Icelandic Research Fund grants 174484 and 217965.

\bibliographystyle{alpha}
\bibliography{references}

\appendix
\section{Concentration Bounds}

We use the following variant of Chernoff bounds for dependent random variables, that is obtained, e.g., as a corollary of Lemma 1.8.7 and Thms. 1.10.1 and 1.10.5 in~\cite{Doerr2020}.

\begin{lemma}[Generalized Chernoff]\label{chernoff}
Let $X_1,\dots,X_r$ be binary random variables, and $X=\sum_i X_i$.
\begin{enumerate}
    \item If $Pr[X_i=1\mid X_1=x_1,\dots,X_{i-1}=x_{i-1}]\le q$, for all $i\in [r]$ and $x_1,\dots,x_{i-1}\in \{0,1\}$ with $Pr[X_1=x_1,\dots,X_r=x_{i-1}]>0$, then for any $\delta\in (0,1)$,
    \begin{equation}\label{eq:chernoffless}
    Pr[X\le(1-\delta)qr]\le \exp(-\delta^2qr/2)\ .
    \end{equation}
    \item If $Pr[X_i=1\mid X_1=x_1,\dots,X_{i-1}=x_{i-1}]\ge q$, for all $i\in [r]$ and $x_1,\dots,x_{i-1}\in \{0,1\}$ with $Pr[X_1=x_1,\dots,X_r=x_{i-1}]>0$, then for any $\delta>0$,
    \begin{equation}\label{eq:chernoffmore}
    Pr[X\ge(1+\delta)qr]\le \exp(-\min(\delta^2,\delta)qr/3)\ .
    \end{equation}
\end{enumerate}
\end{lemma}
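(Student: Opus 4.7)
The plan is to extend the classical Chernoff argument from independent Bernoullis to this conditionally-bounded setting via iterated conditioning combined with the moment generating function technique. Both parts share the same proof skeleton, differing only in the sign of the free parameter $t$ used in the Markov step, so I would handle them in parallel.

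For each part, I would first bound the MGF $E[e^{tX}]$ by peeling off one variable at a time. By the tower property applied to the natural filtration of $X_1,\ldots,X_r$,
\[
E[e^{tX}] \;=\; E\!\left[e^{t(X_1+\cdots+X_{r-1})} \cdot E[e^{tX_r}\mid X_1,\ldots,X_{r-1}]\right],
\]
and since $X_r\in\{0,1\}$ the inner conditional expectation equals $1+p_r(e^t-1)$, where $p_r=\Pr[X_r=1\mid X_1,\ldots,X_{r-1}]$. The crux is that $1+p(e^t-1)$ is monotone in $p$, with the direction of monotonicity governed by the sign of $e^t-1$; combining the hypothesis on $p_r$ with $1+z\le e^z$ then yields a uniform (history-free) bound $1+p_r(e^t-1)\le \exp(q(e^t-1))$. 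Unrolling the tower inductively gives $E[e^{tX}]\le \exp(qr(e^t-1))$, which is exactly the MGF bound one would obtain for a sum of $r$ independent Bernoulli$(q)$ variables.

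Given this MGF estimate, both tail bounds follow from a routine Markov-and-optimize step. For the upper-tail estimate in part 2, I fix $t>0$, write $\Pr[X\ge (1+\delta)qr]\le E[e^{tX}]\cdot e^{-t(1+\delta)qr}$, plug in the optimal $t=\ln(1+\delta)$, and use $(1+\delta)\ln(1+\delta)-\delta\ge \min(\delta^2,\delta)/3$ to obtain $\exp(-\min(\delta^2,\delta)qr/3)$. For the lower-tail estimate in part 1, the analogous calculation with $t<0$ optimized at $t=\ln(1-\delta)$ together with $(1-\delta)\ln(1-\delta)+\delta\ge \delta^2/2$ produces $\exp(-\delta^2 qr/2)$. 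The main obstacle over the independent case is the uniform conditional MGF bound in the first step; once that is in place, both parts reduce to textbook optimizations, and indeed the lemma cites standard corollaries from Doerr's book, so the plan is really to sketch the extension to the dependent setting rather than redo the final numerical estimates from scratch.
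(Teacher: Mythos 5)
Your route is genuinely different from the paper's: the paper gives no self-contained argument at all, obtaining the lemma as a corollary of Lemma 1.8.7 and Theorems 1.10.1 and 1.10.5 of Doerr's book, i.e., by stochastically dominating $X$ by (or having it dominate) a $\mathrm{Bin}(r,q)$ variable and then invoking the off-the-shelf multiplicative Chernoff bounds. You instead bound the conditional moment generating function directly: peeling off one variable with the tower property, using $\E[e^{tX_i}\mid X_1,\dots,X_{i-1}]=1+p_i(e^t-1)\le \exp(q(e^t-1))$, and finishing with the usual Markov-plus-optimization step and the elementary inequalities $(1+\delta)\ln(1+\delta)-\delta\ge \min(\delta^2,\delta)/3$ and $(1-\delta)\ln(1-\delta)+\delta\ge \delta^2/2$. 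This is a correct and standard alternative; it is self-contained and avoids building a coupling, whereas the paper's citation route lets one reuse the binomial bounds verbatim. In substance the two are equivalent.

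One point you must make explicit, because it is exactly where your argument could silently go wrong: for the upper tail you take $t>0$, so $e^t-1>0$ and $1+p_i(e^t-1)$ is \emph{increasing} in $p_i$; hence the history-free bound $1+p_i(e^t-1)\le\exp(q(e^t-1))$ requires the hypothesis $\Pr[X_i=1\mid \cdot]\le q$. Symmetrically, for the lower tail with $t<0$ you need $\Pr[X_i=1\mid \cdot]\ge q$. Your write-up hides this behind ``combining the hypothesis on $p_r$,'' but note that the lemma as printed pairs the hypotheses the other way around (part 1 with ``$\le q$'' and a lower-tail conclusion, part 2 with ``$\ge q$'' and an upper-tail conclusion); taken literally that statement is false (take $X_i\equiv 0$ for part 1), and it is a typo in the paper, as its own applications show: \Cref{lem:oneshotslowdecrease} uses (\ref{eq:chernoffless}) with success probability at least $1/2$, and \Cref{lem:basiconeshot} uses (\ref{eq:chernoffmore}) with $\Pr[Z_u=1]\le n_u^i/s_u$. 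So state plainly that your proof establishes the intended pairing (lower tail under ``$\ge q$'', upper tail under ``$\le q$''); if you tried to follow the printed hypotheses, the monotonicity step would go in the wrong direction and the proof would fail.
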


We will often use the following simple corollary of  bounds (\ref{eq:chernoffless}-\ref{eq:chernoffmore}).
\begin{lemma}\label{lem:nbrhdpreserved}
Let $S$ be a randomly sampled subset of vertices in $G$, each sampled independently, with probability $q$. Let $\gamma\in (0,1)$ and $\alpha>0$ be such that $\gamma^2\alpha q\Delta\ge c\log n$, for a sufficiently large constant $c>1$. 
For every subset $T\subseteq V(G)$ of vertices of size $|T|=\alpha\Delta$, we have $(1-\gamma)q\alpha\Delta\le |T\cap S|\le (1+\gamma)q\alpha\Delta$, w.h.p. 
\end{lemma}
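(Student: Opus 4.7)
The plan is to observe that the statement reduces to a direct application of the generalized Chernoff bound in Lemma~\ref{chernoff} to the indicator variables of membership in $S$, followed by (if needed) a union bound over the relevant collection of subsets $T$ considered in the application.

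First, I fix an arbitrary set $T \subseteq V(G)$ with $|T| = \alpha\Delta$, and write $|T \cap S| = \sum_{v \in T} X_v$, where $X_v \in \{0,1\}$ indicates the event $v \in S$. Since each vertex is sampled independently with probability $q$, the variables $\{X_v\}_{v \in T}$ are mutually independent Bernoulli$(q)$, so $\mathbb{E}[|T \cap S|] = q\alpha\Delta$. I then apply Chernoff bound (\ref{eq:chernoffless}) with deviation $\gamma \in (0,1)$ to obtain
\[
\Pr\bigl[|T \cap S| \le (1-\gamma) q \alpha \Delta\bigr] \le \exp\!\bigl(-\gamma^2 q \alpha \Delta / 2\bigr),
\]
and Chernoff bound (\ref{eq:chernoffmore}), using $\gamma < 1$ so that $\min(\gamma^2,\gamma) = \gamma^2$, to obtain
\[
\Pr\bigl[|T \cap S| \ge (1+\gamma) q \alpha \Delta\bigr] \le \exp\!\bigl(-\gamma^2 q \alpha \Delta / 3\bigr).
\]

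Using the hypothesis $\gamma^2 \alpha q \Delta \ge c \log n$, both failure probabilities are bounded by $\exp(-c \log n / 3) = n^{-c/3}$. Choosing the constant $c$ in the hypothesis large enough then makes each deviation event occur with probability at most $n^{-c'}$ for any desired $c' \ge 1$, which is the ``with high probability'' conclusion. If the intended usage requires the bound to hold simultaneously for a family $\mathcal{T}$ of at most $\mathrm{poly}(n)$ sets (for example, all sets of the form $N(v)$, $N(v) \cap N(w)$, etc., that arise in the analysis of ACD computation), a final union bound over $|\mathcal{T}| = n^{O(1)}$ events, absorbed into the choice of $c$, yields the claim for every such $T$ simultaneously.

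There is no real obstacle here: the statement is a textbook Chernoff estimate for independent sampling, and the only care needed is to verify that the $\min(\gamma^2,\gamma)$ factor in (\ref{eq:chernoffmore}) collapses to $\gamma^2$ under the assumption $\gamma < 1$, and to track that the hypothesis $\gamma^2 \alpha q \Delta \ge c \log n$ gives a failure probability of the form $n^{-\Omega(c)}$ suitable for the intended high-probability/union-bound guarantees.
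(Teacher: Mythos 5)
Your proof is correct and is exactly the argument the paper intends: Lemma~\ref{lem:nbrhdpreserved} is introduced as a ``simple corollary'' of bounds (\ref{eq:chernoffless})--(\ref{eq:chernoffmore}), and the paper gives no further proof beyond that remark. Your direct application of both tail bounds to the independent Bernoulli indicators of membership in $S$, together with the observation that $\min(\gamma^2,\gamma)=\gamma^2$ for $\gamma<1$ and a union bound over the polynomially many sets $T$ arising in the applications, is precisely the intended derivation.
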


We will also use the following variant of Talagrand's inequality. A function $f(x_1, \dots, x_n)$ is called \emph{$c$-Lipschitz} iff changing any single $x_i$ can affect the value of
$f$ by at most $c$. Additionally, $f$ is called \emph{$r$-certifiable} iff whenever $f(x_1, \dots , x_n) \ge s$, there exists at
most $r s$ variables $x_{i_1}, \dots, x_{i_{rs}}$
so that knowing the values of these variables certifies $f\ge s$.
\begin{lemma}\label{lem:talagrand}\cite{molloy2013coloring}[Talagrand's Inequality II]
 Let $X_1, \dots , X_n$ be $n$ independent random variables and $f(X_1, \dots , X_n)$ be a $c$-Lipschitz $r$-certifiable function. 
For any $b \ge 1$,
\[
P (|f - \E[f]| > b+60c\sqrt{r\E[f]}) \le 4 \exp\left(-\frac{b^2}{8c^2r\E[f]}\right)\ .
\]
\end{lemma}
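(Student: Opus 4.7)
The plan is to derive this concentration bound from Talagrand's classical inequality on product probability spaces, by translating the Lipschitz and certifiability hypotheses into a lower bound on Talagrand's convex distance. Throughout, write $X = (X_1, \dots, X_n)$ and, for a measurable $A$ in the product space, let $d_T(A, x) = \sup_{\alpha \ge 0,\, \|\alpha\|_2 \le 1} \inf_{y \in A} \sum_{i : y_i \ne x_i} \alpha_i$ be the convex (Talagrand) distance. The starting point is Talagrand's classical bound
\[
P(A) \cdot \E\!\left[\exp\!\left(\tfrac{1}{4} d_T(A, X)^2\right)\right] \le 1.
\]

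First I would translate the structural hypotheses on $f$ into a lower bound on $d_T$. Fix a threshold $s$ and let $A = \{x : f(x) \le s\}$. Consider any $x$ with $f(x) = s + u$, $u > 0$. By $r$-certifiability, there is an index set $I(x) \subseteq [n]$ of size $|I(x)| \le r f(x) = r(s+u)$ that certifies $f(x) \ge s + u$: any $y$ agreeing with $x$ on $I(x)$ still has $f(y) \ge s+u$. Therefore, any $y \in A$ must disagree with $x$ on at least $\lceil u/c \rceil$ coordinates of $I(x)$, since $f$ is $c$-Lipschitz and the drop in value from $s+u$ down to $s$ requires at least $u/c$ coordinate changes from the certificate. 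Taking $\alpha$ proportional to $\mathbf{1}_{I(x)}$, i.e. $\alpha_i = |I(x)|^{-1/2}$ for $i \in I(x)$, gives $d_T(A, x) \ge (u/c)/\sqrt{r(s+u)}$.

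Next I would derive concentration around the median. Let $m$ be a median of $f$ and set $A = \{f \le m\}$, so that $P(A) \ge 1/2$. For $x$ with $f(x) \ge m + t$, the previous step yields $d_T(A, x) \ge t/\big(c\sqrt{r(m+t)}\big)$. Plugging into Talagrand's inequality and applying Markov gives
\[
P\!\left(f \ge m + t\right) \le 2 \exp\!\left(-\tfrac{t^2}{4 c^2 r (m+t)}\right),
\]
and symmetrically, setting $A = \{f \le m - t\}$ and applying the same argument with $s = m - t$ yields the matching lower tail $P(f \le m - t) \le 2\exp(-t^2/(4 c^2 r m))$.

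Finally I would pass from the median to the mean. Integrating the two tail bounds gives $|\E[f] - m| \le C\, c\sqrt{r\,\E[f]}$ for a modest constant $C$ (using that the tails are Gaussian up to scale $c\sqrt{r m}$, and that $m$ and $\E[f]$ agree up to this scale, which in turn lets one replace $m$ by $\E[f]$ inside the bound after adjusting constants). Choosing the slack term $60 c\sqrt{r \E[f]}$ in the statement to absorb both this median-to-mean shift and the $m+t \le 2 \E[f]$ substitution in the denominator produces, for $b \ge 1$,
\[
P\!\left(|f - \E[f]| > b + 60 c\sqrt{r\,\E[f]}\right) \le 4 \exp\!\left(-\tfrac{b^2}{8 c^2 r\, \E[f]}\right),
\]
as required. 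The main obstacle is bookkeeping: the translation from $c$-Lipschitz plus $r$-certifiable to the convex-distance estimate in the second paragraph, and the careful tracking of constants so that the universal $60$ and $8$ in the statement come out correctly after absorbing the median-to-mean correction and the $m + t$ vs.\ $\E[f]$ discrepancy in the exponent.
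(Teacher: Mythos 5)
The paper does not prove this lemma at all: it is quoted verbatim, with citation, from Molloy and Reed~\cite{molloy2013coloring}, so there is no in-paper argument to compare against. Your sketch is essentially the standard derivation used in that reference: translate $c$-Lipschitz plus $r$-certifiable into a lower bound on Talagrand's convex distance via the certificate set, apply the convex-distance inequality with $A=\{f\le m\}$ (and its mirror) to get sub-Gaussian tails around the median $m$ at scale $c\sqrt{r(m+t)}$, and then shift from the median to the mean, absorbing the shift and the $m+t$ versus $\E[f]$ discrepancy into the $60c\sqrt{r\E[f]}$ slack. The outline is sound, with two places that need to be tightened to make it a complete proof. First, the step ``any $y\in A$ must disagree with $x$ on at least $u/c$ coordinates of $I(x)$'' needs the intermediate point: take $z$ agreeing with $x$ on $I(x)$ and with $y$ elsewhere, so certifiability gives $f(z)\ge s+u$ and $c$-Lipschitzness gives $f(y)\ge f(z)-c\,|\{i\in I(x):y_i\ne x_i\}|$, which is what actually forces the disagreement count inside the certificate (your one-line version conflates the two hypotheses). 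Second, the specific constants $60$ and $8$ are asserted rather than derived; the bound $|\E[f]-m|=O\bigl(c\sqrt{r\,\E[f]}\bigr)$ obtained by integrating the median tails, and the replacement of $m+t$ by $\E[f]$ in the exponent's denominator, must be tracked quantitatively (this is exactly the bookkeeping carried out in Molloy--Reed), and until that is done the stated numerical constants are not established, only a bound of the same form with unspecified universal constants.
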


\end{document}